\documentclass[10.6pt, reqno]{article}
\usepackage[utf8]{inputenc}
\usepackage{amsmath, amsthm}
\usepackage{amsfonts}
\usepackage{mathrsfs}
\usepackage{amssymb}
\usepackage{soul}
\usepackage{bbm}
\usepackage{todonotes}
\usepackage{soul}
\usepackage{authblk}
\usepackage{bm}
\usepackage{cite}
\usepackage{color}
\usepackage{float}
\usepackage{authblk}
\usepackage{authblk}
\usepackage{enumerate}

\usepackage{enumerate}
\usepackage[utf8]{inputenc}
\usepackage[normalem]{ulem}
\usepackage{amsmath, amsthm,mathrsfs}
\usepackage{amsfonts}
\usepackage{amssymb}
\usepackage{bbm}
\usepackage{cite}
\usepackage{enumerate}
\usepackage[utf8]{inputenc}
\usepackage[normalem]{ulem}
\usepackage{amsmath, amsthm,mathrsfs}
\usepackage{amsfonts}
\usepackage{amssymb}
\usepackage{bbm}
\usepackage{cite}

\usepackage{enumerate}
\usepackage{caption}
\usepackage{hyperref}
\usepackage{hyperref}
\usepackage{slashed}
\hypersetup{hypertex=true,
colorlinks = black,
linkcolor =black,
anchorcolor = black,
citecolor = black}
\usepackage{graphicx}
\usepackage{cancel}
\usepackage{mathtools}
\usepackage{subfigure}
\usepackage{slashed}
\usepackage{amsfonts}
\usepackage{amssymb}
\usepackage{accents}
\usepackage{amsmath}
\usepackage{amsxtra}
\usepackage{epstopdf}
\usepackage{latexsym}
\usepackage{mathrsfs}
\usepackage{leftidx}
\usepackage{tensor}
\usepackage{enumitem}
\newtheorem{theorem}{Theorem}[section]
\newtheorem{lemma}[theorem]{Lemma}
\newtheorem{corollary}[theorem]{Corollary}
\newtheorem{proposition}[theorem]{Proposition}
\newtheorem{conjecture}[theorem]{Conjecture}
\newtheorem{definition}[theorem]{Definition}

\theoremstyle{remark}
\newtheorem{remark}[theorem]{Remark}

\numberwithin{equation}{section}
\oddsidemargin 8mm
\evensidemargin 8mm
  \textwidth=5.8 true in
  \topmargin -3pt

\begin{document}
\title{Asymptotically Anti-de Sitter Spherically Symmetric Hairy Black Holes}

\author{Weihao Zheng\thanks{wz344@math.rutgers.edu}}
\affil{\small Department of Mathematics, Rutgers University, Hill Center, 110 Frelinghuysen Road, Piscataway, NJ, USA}
\date{\today}
\maketitle
\begin{abstract}
We construct one-parameter families of static spherically symmetric asymptotically anti-de Sitter black hole solutions $(\mathcal{M},g_{\epsilon},\phi_{\epsilon})$ to the Einstein--Maxwell--(charged) Klein--Gordon equations. Each family bifurcates off a sub-extremal Reissner--Nordström-AdS spacetime $(\mathcal{M},g_{0},\phi_{0}\equiv0)$. For a co-dimensional one set of black hole parameters, we show that Dirichlet (respectively Neumann) boundary conditions can be imposed for the scalar field. The construction provides a counter-example to a version of the no-hair conjecture in the context of a negative cosmological constant. Our result is based on our companion work [W. Zheng, \emph{Exponentially-growing Mode Instability on the Reissner--Nordström-Anti-de-Sitter black holes}], in which the existence of linear hair and growing mode solutions have been established. In the charged scalar field case, our result provides the first rigorous mathematical construction of the so-called holographic superconductors, which are of particular significance in high-energy physics. 
\end{abstract}
\section{Introduction}
\label{sec:intro}
The no-hair conjecture in General Relativity asserts that all asymptotically flat stationary black hole solutions to the Einstein equations in (electro-)vacuum can be completely characterized by their mass $M$, charge $Q$, and angular momentum $a$, namely, they belong to the Kerr--Newman family (Schwarzschild when $a = Q = 0$, Kerr when $Q = 0 $, and Reissner--Nordström when $a = 0$; see for example \cite{hawking2023large}). On the one hand, in the past decades, many uniqueness results have been established, providing a form of resolution to the no-hair conjecture in the asymptotically flat (electro-)vacuum; see \cite{carter1971axisymmetric,robinson1975uniqueness,hawking1972black,alexakis2010uniqueness,wong2014non}. On the other hand, it is reasonable to extend the no-hair conjecture to the Einstein equations coupled with simple matter models. For instance, for the Einstein--Maxwell--(charged) scalar field equations or Einstein--Maxwell--(charged) Klein--Gordon equations, within the spherical symmetry, it has been shown that the analogous uniqueness results still hold\cite{jd1972nonexistence}.\footnote{Nonetheless, for matter models which admit hair on flat spacetime or which exhibit superradiance, stationary black hole solutions violating at least the spirit of the no-hair conjecture has been constructed; for instance, see\cite{chodosh2017time,smoller1993existence}.}

In the present study, we are interested in the Einstein--Maxwell--(charged) Klein--Gordon equations with a negative cosmological constant $\Lambda$:
\begin{align}
&R_{\mu\nu}(g) - \frac{1}{2}g_{\mu\nu} + \Lambda g_{\mu\nu} = T_{\mu\nu}^{EM} + T^{KG}_{\mu\nu},\label{original 1}\\
&T_{\mu\nu}^{EM} = 2\left(g^{\alpha\beta}F_{\alpha\nu}F_{\beta\mu} - \frac{1}{4}F^{\alpha\beta}F_{\alpha\beta}g_{\mu\nu}\right),\quad \nabla^{\mu}F_{\mu\nu} = iq_{0}\left(\frac{\phi\overline{D_{\nu}\phi}-\bar{\phi}D_{\nu}\phi}{2}\right),\ F = dA\label{original 2}\\
&T_{\mu\nu}^{KG} = 2\left(D_{\mu}\phi \overline{D_{\nu}\phi} - \frac{1}{2}\left(g^{\alpha\beta}D_{\alpha}\phi\overline{D_{\beta}\phi} + \bigl(-\frac{\Lambda}{3}\bigr)\alpha\vert\phi\vert^{2}\right)g_{\mu\nu}\right),\quad D_{\mu} = \nabla_{\mu}+iq_{0}A_{\mu},\label{original 3}\\
&g^{\mu\nu}D_{\mu}D_{\nu}\phi = \bigl(-\frac{\Lambda}{3}\bigr)\alpha\phi,\label{original 4}
\end{align}
where $\alpha$ is the negative Klein--Gordon mass, $q_{0}\in\mathbb{R}$ is the scalar field charge, and $1$-form $A$ is the electromagnetic potential. Reissner--Nordström-AdS $(\mathcal{M},g_{RN},\phi\equiv0)$ is known as the special solutions to \eqref{original 1}-\eqref{original 4}:
\begin{equation}
\label{brief intro of RN}
\begin{aligned}
&g_{RN} = -\Omega_{RN}^{2}dt^{2}+\frac{1}{\Omega_{RN}^{2}}dr^{2}+r^{2}d\sigma^{2},\\&
\Omega_{RN}^{2} = 1-\frac{2M}{r}+\frac{e^{2}}{r^{2}}+\bigl(-\frac{\Lambda}{3}\bigr)r^{2}.
\end{aligned}
\end{equation}
$M>0$ is the black hole mass and $e\in\mathbb{R}$ is the black hole charge. We say $g_{RN}$ is sub-extremal if $\Omega_{RN}$ admits two different positive roots. Let $r_{+}(M,e,\Lambda)$ be the largest positive root of $\Omega_{RN}^{2}$. Instead of using $(M,e,\Lambda)$ as parameters for $\Omega_{RN}^{2}$, this paper adopts $(M,r_{+},\Lambda)$ to parametrize Reissner--Nordström-AdS. In this framework, for fixed $(r_{+},\Lambda)$, the admissible sub-extremal range of $M$ is given by $M_{e = 0}\leq M<M_{0}$, where $M_{e = 0}$ corresponds to $e = 0$ and $M_{0}$ corresponds to the extremality. See Section \ref{pre of Reissner} for a detailed discussion.

In the context of a negative cosmological constant, due to the superradiance on spacetimes that are non-spherically symmetric or with a charged scalar field ($q_{0}\neq0$ in \eqref{original 1}-\eqref{original 4}), mathematical and numerical works \cite{dold2017unstable,hartnoll2008holographic,horowitz2011introduction,hartnoll2008building,hartnoll2020gravitational} suggest that no-hair conjecture might be false; we will come back to this later in Section \ref{sec:superradiance instability}. However, it is reasonable to formulate the following no-hair conjecture for the spherically symmetric system with a neutral scalar field:

\begin{conjecture}
\label{no hair AdS}
The only stationary spherically symmetric solutions solving $\eqref{original 1}$-\eqref{original 4} with $q_{0} = 0$ under Dirichlet (respectively Neumann) boundary conditions\footnote{Due to the lack of global hyperbolicity of the asymptotically anti-de Sitter spacetime, the natural formulation is the initial-boundary value problem. see Section \ref{boundary} for the definition of various boundary conditions.} belong to the Reissner--Nordström-AdS family \eqref{brief intro of RN}.
\end{conjecture}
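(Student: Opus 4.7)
The plan is to adapt the classical Bekenstein-type no-hair argument to the asymptotically anti-de Sitter setting, exploiting the staticity, spherical symmetry, and neutrality ($q_0 = 0$) of the scalar field.

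First, I would reduce to a one-dimensional problem. In static spherically symmetric coordinates, write $g = -\Omega^2(r)\,dt^2 + \Omega^{-2}(r)\,dr^2 + r^2\,d\sigma^2$, $\phi = \phi(r)$, $A = A_t(r)\,dt$. Since $q_0 = 0$, Maxwell's equation decouples from the matter and integrates to $F_{tr} = e/r^2$ for a constant $e$. The global $U(1)$ gauge freedom together with $q_0 = 0$ allows $\phi$ to be chosen real-valued. The reduced Einstein equations give first-order ODEs for a generalized Hawking mass $m(r)$, and $\Omega^2$ takes the Reissner--Nordstr\"om-AdS form modified only by the scalar backreaction.

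Second, I would establish a Bekenstein-type integral identity by multiplying the reduced Klein--Gordon equation $(r^2 \Omega^2 \phi')' = \bigl(-\tfrac{\Lambda}{3}\bigr)\alpha\, r^2 \phi$ by $\phi$ and integrating over the exterior $[r_+, \infty)$:
\[
\int_{r_+}^{\infty} \Bigl(\Omega^2 (\phi')^2 + \bigl(-\tfrac{\Lambda}{3}\bigr)\alpha\,\phi^2\Bigr) r^2\,dr = \bigl[r^2 \Omega^2 \phi'\phi\bigr]_{r_+}^{\infty}.
\]
Regularity at the horizon (where $\Omega^2 = 0$) and Dirichlet (fall-off $\phi \sim r^{-\Delta_+}$) or Neumann (fall-off $\phi \sim r^{-\Delta_-}$) boundary conditions at conformal infinity cause both boundary contributions to vanish. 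In the non-tachyonic regime $\alpha \geq 0$, the integrand is already pointwise nonnegative and we conclude $\phi \equiv 0$ at once. In the tachyonic sub-BF window, I would close the argument via a weighted Hardy-type inequality on Reissner--Nordstr\"om-AdS exteriors of the form
\[
\int_{r_+}^{\infty} \Omega^2 (\phi')^2\,r^2\,dr \;\geq\; C(r_+, \Lambda, e)\int_{r_+}^{\infty} \phi^2\,r^2\,dr,
\]
with sharp constant $C$ strictly exceeding $(-\Lambda/3)|\alpha|$ uniformly over the admissible parameters and for each boundary condition. Substituting this into the Bekenstein identity forces $\phi \equiv 0$, and hence $T^{KG}_{\mu\nu} \equiv 0$, so $g$ must be exactly RN-AdS.

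The main obstacle is the Hardy inequality. Its sharp constant is governed by the bottom of the spectrum of the reduced Klein--Gordon operator on RN-AdS with the prescribed boundary conditions, and one must verify that this spectral bottom lies strictly above the KG mass for all admissible $(r_+, \Lambda, e)$ and both choices of boundary data. It is precisely here that the companion work's growing-mode construction for \emph{charged} scalars shows such an inequality must degenerate once $q_0 \neq 0$, so the argument crucially exploits neutrality. A secondary subtlety is sensitivity to boundary data: for sufficiently general ``Robin-type'' boundary conditions, linear hair modes may appear (consistent with the no-hair failure established in the charged case), so the proof must genuinely rely on the reflecting nature of Dirichlet and Neumann conditions at the conformal boundary.
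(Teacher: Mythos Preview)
The statement you are trying to prove is Conjecture~\ref{no hair AdS}, and the entire point of the paper is that this conjecture is \emph{false}. Theorem~\ref{rough version 3}(b) constructs hairy black holes bifurcating off Reissner--Nordstr\"om-AdS for $q_0 = 0$ whenever $-\tfrac{9}{4} < \alpha < -\tfrac{3}{2}$ and $\bigl(-\tfrac{\Lambda}{3}\bigr)r_+^2 > R_0$; the paper states explicitly after Conjecture~\ref{no hair AdS} that it ``provides a counter-example to Conjecture~\ref{no hair AdS}'', and later that ``the `hairy' mechanism for $q_0 = 0$ in Theorem~\ref{rough version 3} is new and unrelated to the superradiance.'' So there is no proof in the paper to compare against, and any purported proof must contain a genuine error.

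The error in your argument is precisely where you anticipated difficulty: the Hardy inequality you need does not hold. You claim the sharp constant $C(r_+,\Lambda,e)$ in
\[
\int_{r_+}^{\infty} \Omega^2 (\phi')^2\,r^2\,dr \;\geq\; C(r_+, \Lambda, e)\int_{r_+}^{\infty} \phi^2\,r^2\,dr
\]
strictly exceeds $\bigl(-\tfrac{\Lambda}{3}\bigr)|\alpha|$ uniformly in the sub-extremal parameters, and you assert that this ``crucially exploits neutrality'' because the companion work only destroys the inequality for $q_0 \neq 0$. That is a misreading: Lemma~\ref{nonempty} (and Theorem~\ref{linear hair theorem}(2)) in the companion work produce a negative energy bound state for the functional $L_M[g] = \int_{r_+}^\infty r^2\Omega_{RN}^2(g')^2 + \bigl(-\tfrac{\Lambda}{3}\bigr)\alpha\, r^2 g^2\,dr$ even when $q_0 = 0$, provided the Maxwell charge $e$ is large enough (encoded through conditions \eqref{large1}--\eqref{large2}). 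A negative energy bound state for $L_M$ is exactly the statement that your Hardy constant satisfies $C(r_+,\Lambda,e) < \bigl(-\tfrac{\Lambda}{3}\bigr)|\alpha|$. The mechanism is not the scalar charge $q_0$ but the presence of a large Maxwell field, which lowers $\Omega_{RN}^2$ near the horizon and weakens the gradient term; see the necessary condition \eqref{large charge condition}. Your argument would go through for small $e$ (consistent with Holzegel--Smulevici's result for $F\equiv 0$), but it cannot be uniform in $e$, and the conjecture as stated is false.
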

Notably, Conjecture \ref{no hair AdS} is proved to be true for the spherically symmetric Einstein--Klein--Gordon equations (\eqref{original 1}-\eqref{original 4} with $F\equiv 0 $ and $q_{0} = 0$) in the work of Holzegel and Smulevici \cite{holzegel2013stability}. However, in this paper, without assuming $F = 0$, we provide a counter-example to Conjecture \ref{no hair AdS}, as a part of more general construction that we now state:


\begin{theorem}\textup{[Rough Version of the Main Theorem]}
\label{rough version 3}
Imposing Dirichlet (respectively Neumann) boundary conditions for the scalar field $\phi$ and letting $-\frac{9}{4}<\alpha<0$ ($-\frac{9}{4}<\alpha<-\frac{5}{4}$ for Neumann boundary conditions respectively), then we have the following two results about black hole solutions to \eqref{original 1}-\eqref{original 4} bifurcating off a Reissner--Nordström-AdS spacetime:
\begin{enumerate}\label{condition}
\item[(a)]\textup{(Large charge case)} Assume $(M_{b},r_{+},\Lambda)$ are given sub-extremal parameters and $\alpha$ is a fixed Klein--Gordon mass. Then there exists $q^{max} = q^{\max}(M_{b},r_{+},\Lambda,\alpha)$, such that for any $\vert q_{0}\vert>q^{\max}$, there exists $M_{e = 0}<M_{c} = M_{c}(M_{b},r_{+},\Lambda,\alpha,q_{0})<M_{b}$, such that there exist stationary black hole solutions to \eqref{original 1}-\eqref{original 4} with $\phi\neq0$ (so-called hairy), bifurcating off the Reissner--Nordström-AdS with the sub-extremal parameters $(M_{c},r_{+},\Lambda)$.
\item[(b)]\textup{(Fixed charge condition)} Let $R_{0}$ be the positive solution to the quadratic equation:\begin{equation}
24\left(\alpha+\frac{3}{2}-\frac{q_{0}^{2}}{2\bigl(-\frac{\Lambda}{3}\bigr)}\right)x^{2}+\left(4\left(\alpha+\frac{3}{2}-\frac{q_{0}^{2}}{2\bigl(-\frac{\Lambda}{3}\bigr)}\right)-\frac{2q_{0}^{2}}{\bigl(-\frac{\Lambda}{3}\bigr)}+6\right)x+1 = 0.\label{quadratic}
\end{equation}
For any fixed parameters $(r_{+},\Lambda,\alpha,q_{0})$ satisfying \begin{align}
&-\frac{9}{4}<\alpha<-\frac{3}{2}+\frac{q_{0}^{2}}{2\bigl(-\frac{\Lambda}{3}\bigr)},\label{large1}\\
&\bigl(-\frac{\Lambda}{3}\bigr)r_{+}^{2}>R_{0}\label{large2},
\end{align}
there exists $M_{e = 0}<M_{c}(r_{+},\Lambda,\alpha,q_{0})<M_{0}$, such that there exist black hole solutions to \eqref{original 1}-\eqref{original 4} with $\phi\neq0$, bifurcating off the Reissner--Nordström-AdS with the sub-extremal parameters $(M_{c},r_{+},\Lambda)$.
\end{enumerate}
\end{theorem}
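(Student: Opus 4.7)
The plan is to construct the hairy branch by bifurcation from the trivial Reissner--Nordström-AdS branch, taking as bifurcation points exactly those black hole parameters at which the companion paper exhibits linear hair. In stationary spherical symmetry the system \eqref{original 1}-\eqref{original 4} reduces to a coupled ODE system in $r$ for the metric coefficient $\Omega^{2}$, the temporal gauge potential $A_{t}$, and the scalar $\phi$ (which may be taken real after fixing the $U(1)$ gauge): once $\phi$ is known, the Einstein and Maxwell equations determine $\Omega^{2}$ and $A_{t}$ by first-order radial ODEs, while $\phi$ satisfies a second-order Klein--Gordon-type equation sourced by the metric and electromagnetic field. Admissible solutions are selected by regularity at a candidate horizon $r=r_{+}$ together with $A_{t}(r_{+})=0$, and by Dirichlet or Neumann falloff $\phi \sim r^{-\lambda_{\pm}}$ at the conformal boundary.

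I would parametrize the nontrivial branch by an amplitude $\epsilon$ (for instance $\phi(r_{+}) = \epsilon$) and seek formal expansions $\phi_{\epsilon} = \epsilon\phi_{(1)} + \epsilon^{3}\phi_{(3)} + \cdots$, $\Omega_{\epsilon}^{2} = \Omega^{2}_{RN(M(\epsilon))} + \epsilon^{2}(\Omega^{2})_{(2)} + \cdots$, with $M(\epsilon) = M_{c}+\epsilon^{2}M_{(2)}+\cdots$; the even/odd parity in $\epsilon$ is enforced by the $\phi \to -\phi$ (or global $U(1)$) symmetry of the reduced system, which leaves the metric and electromagnetic sector invariant. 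At order $\epsilon^{1}$ the scalar equation collapses to the static linearized charged Klein--Gordon equation on the fixed Reissner--Nordström-AdS background with parameters $(M_{c},r_{+},\Lambda)$, and the companion paper supplies, in the large-charge regime of part (a) and the parameter window \eqref{large1}-\eqref{large2} of part (b), a nontrivial $\phi_{(1)}$ satisfying the prescribed boundary conditions.

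To make this rigorous I plan to apply a Crandall--Rabinowitz-type bifurcation-from-a-simple-eigenvalue theorem, with $M$ as the spectral parameter and $\epsilon$ as the bifurcation parameter. Concretely, one builds a nonlinear map $F(\epsilon,M)$ whose zero set encodes solutions of the reduced ODE system satisfying both boundary conditions, working in function spaces adapted to the AdS falloff (weighted H\"older or Sobolev spaces keyed to the conformal weights $\lambda_{\pm}$). The trivial branch $\{F(0,M)=0\}$ is exactly the RN-AdS family in $M$; at $M=M_{c}$ the Fr\'echet derivative $D_{\phi}F(0,M_{c})$ has a one-dimensional kernel spanned by $\phi_{(1)}$, and its Fredholm index is zero by the ODE structure of the linearized problem together with the boundary constraints. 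Once the hypotheses are checked, the theorem delivers a smooth curve $\epsilon \mapsto (g_{\epsilon},\phi_{\epsilon})$ of genuinely hairy solutions through $(g_{0},0)$.

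The hard part will be verifying the transversality condition $\partial_{M}D_{\phi}F(0,M_{c})\phi_{(1)} \notin \mathrm{Range}(D_{\phi}F(0,M_{c}))$, since this is precisely the statement that the linear hair does not persist along the RN-AdS family---equivalently, that varying $M$ through $M_{c}$ genuinely moves the effective eigenvalue through zero. I would handle this by pairing against a generator of the cokernel (an adjoint zero mode with the dual falloff at infinity) and reducing to an explicit radial integral whose nonvanishing can be extracted from the monotonic dependence of the effective scalar potential on $M$ together with the sign properties of $\phi_{(1)}$ established in the companion paper. A secondary subtlety is the Bianchi-induced redundancy among the reduced Einstein equations: to avoid overdetermination one should integrate the system in \emph{free evolution} form outward from the horizon, relying on the Hamiltonian constraint being propagated by the remaining equations, so that the dimension of the candidate solution space matches the hypotheses of the bifurcation theorem.
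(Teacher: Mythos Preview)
Your proposal is a reasonable sketch of a Crandall--Rabinowitz bifurcation argument and is not wrong in spirit, but it is \emph{not} the route the paper takes. The paper does not invoke abstract bifurcation theory at all: instead it sets up a contraction map $T$ on a weighted ball $X_{\delta}$ of scalar-field profiles. Given $f\in X_{\delta}$, one first solves the metric and Maxwell ODEs \eqref{fixpointeq1}--\eqref{fixpointeq5} with $\phi$ replaced by $f$, then solves the scalar equation \eqref{fix point phi} on this background. The crucial point is that the mass parameter is not fixed once and for all but is chosen \emph{$f$-dependently}: for each $f$ one shows (Theorem~\ref{thm: fix point stationary D}, a perturbed version of the linear-hair theorem) that there is an $M_{c}^{f}$ near the linear value $M_{c}$ for which the resulting $\phi=T[f]$ satisfies the Dirichlet (resp.\ Neumann) condition. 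The contraction estimate then requires controlling $\vert M_{c}^{f_{1}}-M_{c}^{f_{2}}\vert$, which is done via the energy functional $\mathcal{L}_{f,M}$ and its monotonicity in $M$ (Lemma~\ref{lemma: estimate for the difference of M} and Proposition~\ref{pro: estimate for different M}).

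What this buys compared with your approach: the paper never has to verify Fredholmness, simplicity of the kernel, or the transversality condition $\partial_{M}D_{\phi}F(0,M_{c})\phi_{(1)}\notin\mathrm{Range}$ as separate abstract hypotheses. The role of transversality is absorbed into the variational characterization of $M_{c}^{f}$ as the infimum of masses admitting a negative energy bound state, and the monotonicity of $L_{M}$ in $M$ (coming from $\partial_{M}\Omega_{RN,M}^{2}=-2/r$) does the work that your ``explicit radial integral'' would do. Conversely, your framework would give a cleaner conceptual picture and smoothness of the branch in $\epsilon$ essentially for free, whereas the paper obtains quantitative $O(\epsilon^{2})$ estimates (items \eqref{hairy3}--\eqref{hairy4} of Theorem~\ref{main theorem}) directly from the contraction bounds. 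If you pursue Crandall--Rabinowitz, the point you flag as ``hard''---transversality---is exactly the content of the paper's Proposition~\ref{different M} and the computations in the Appendix, so you could quote those rather than redoing them; but you would still owe a careful Fredholm setup in the weighted spaces, which the paper sidesteps by working with explicit variation-of-parameters formulas against the basis $\{u_{D},u_{N}\}$.
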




\begin{remark}
One could also consider the Robin boundary condition, which is a given linear combination of Dirichlet and Neumann; see Section \ref{boundary} for the precise definition there. However, we will not pursue the problem under the Robin boundary condition in this paper.
\end{remark}

\paragraph{Decay of the scalar field and the asymptotic stability}
In the work of Holzegel and Smulevici \cite{holzegel2013stability}, they proved the solutions $(\mathcal{M},g,\phi)$ to the spherically symmetric Einstein--Klein--Gordon equations with near-Schwarzschild-AdS initial data and under Dirichlet boundary conditions will converge to Schwarzschild-AdS exponentially. Their result shows the asymptotic stability of Schwarzschild-AdS as solutions to the spherically symmetric Einstein--Klein--Gordon equations. In contrast to their result, one immediate corollary of our Theorem \ref{rough version 3} is the following:

\begin{corollary}
Reissner--Nordström-AdS spacetimes are not asymptotically stable as solutions to spherically symmetric equations \eqref{original 1}-\eqref{original 4}. In particular, there exist sub-extremal Reissner--Nordström-AdS spacetimes and arbitrary small spherical perturbations to the initial data of \eqref{original 1}-\eqref{original 4} such that the scalar field does not decay to $0$. 
\end{corollary}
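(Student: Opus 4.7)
The plan is to deduce the corollary directly from the existence of the hairy family produced by Theorem \ref{rough version 3}. Apply either branch of that theorem to obtain a one-parameter family of static, spherically symmetric hairy black hole solutions $(\mathcal{M}, g_\epsilon, \phi_\epsilon, A_\epsilon)$ of \eqref{original 1}--\eqref{original 4} bifurcating, as $\epsilon \to 0$, off a fixed sub-extremal Reissner--Nordström-AdS solution $(\mathcal{M}, g_0, \phi_0 \equiv 0, A_0)$ with parameters $(M_c, r_+, \Lambda)$; in particular $(g_\epsilon, \phi_\epsilon, A_\epsilon) \to (g_0, 0, A_0)$ in the norm used in the bifurcation analysis, and $\phi_\epsilon \not\equiv 0$ for every $\epsilon \neq 0$.

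Next, I would fix a spherically symmetric Cauchy surface $\Sigma$ in the exterior region of $\mathcal{M}$ (for example a constant-$t$ slice in the coordinates of \eqref{brief intro of RN}) and restrict the hairy solution to $\Sigma$ to produce an initial data set for the initial-boundary value problem associated with \eqref{original 1}--\eqref{original 4} under the chosen Dirichlet or Neumann boundary condition. Continuity in $\epsilon$ of the bifurcation construction ensures that this initial data converges, as $\epsilon \to 0$, to the initial data induced by the background Reissner--Nordström-AdS in the topology used to set up the bifurcation. Consequently, for every $\delta > 0$ one can choose $\epsilon > 0$ so small that the hairy initial data differs from the Reissner--Nordström-AdS initial data by less than $\delta$ in that norm, yielding an arbitrarily small, smooth, spherically symmetric perturbation of a sub-extremal Reissner--Nordström-AdS initial data set.

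Finally, uniqueness for the initial-boundary value problem forces the solution evolved from this perturbed data to coincide with the static hairy solution $(g_\epsilon, \phi_\epsilon, A_\epsilon)$. Because $\phi_\epsilon$ is static and $\phi_\epsilon \not\equiv 0$, it cannot decay to $0$ in the exterior as $t \to \infty$, in sharp contrast to the Holzegel--Smulevici result for Schwarzschild-AdS. This contradicts asymptotic stability of the Reissner--Nordström-AdS background with parameters $(M_c, r_+, \Lambda)$ under spherically symmetric perturbations and establishes the corollary. The one mild subtlety, which I would spell out carefully, is that the topology in which the bifurcation is continuous in $\epsilon$ must be strong enough to qualify as a ``small perturbation'' of initial data in the natural energy-type norm controlling the initial-boundary value problem; this is standard once the function-space framework behind Theorem \ref{rough version 3} is made explicit, and it is the only place where some care beyond a direct appeal to Theorem \ref{rough version 3} is required.
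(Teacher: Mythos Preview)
Your argument is correct and is precisely the approach the paper intends: the corollary is stated as an ``immediate corollary'' of Theorem \ref{rough version 3} with no separate proof given, and the reasoning you spell out (restrict the static hairy family to a Cauchy slice to obtain arbitrarily small spherically symmetric perturbations whose evolution, by uniqueness, is the non-decaying static hairy solution) is exactly the standard deduction the paper has in mind. Your caveat about matching the bifurcation topology to the energy-type norm of the initial-boundary value problem is the only point requiring care, and the paper's function-space framework in Section~\ref{proof:Dirichlet} addresses it.
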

\begin{remark}
In view of the fact that Reissner--Nordström-AdS is a co-dimensional three subset of the Kerr--Newman-AdS family, one could expect finite co-dimensional stability of Reissner--Nordström-AdS as solutions to \eqref{original 1}-\eqref{original 4}, analogously to the co-dimensional three stability of Schwarzschild in \cite{dafermos2021non}. However, Theorem \ref{rough version 3} suggests that Reissner--Nordström-AdS is likely not co-dimensional three stable as solutions to \eqref{original 1}-\eqref{original 4}. Similarly, despite the stability of Schwarzschild in $U(1)$ symmetry \cite{klainerman2017global}, it is unlikely that Reissner--Nordström-AdS is asymptotically stable among $U(1)$ perturbations.
\end{remark}
One can refer to Section \ref{sec:stability} for a more detailed introduction regarding the connection between the hairy black hole solutions and the stability problem. 

\paragraph{The Klein--Gordon equation on the Reissner--Nordström-AdS spacetime}

The Klein--Gordon equation on the Reissner--Nordström-AdS spacetime takes the form of\begin{equation}
D_{\mu}D^{\mu}\phi = \bigl(-\frac{\Lambda}{3}\bigr)\alpha\phi,\quad D_{\mu} = \nabla_{\mu}+iq_{0}A_{\mu},\label{equ:kl}
\end{equation}
where $A = e\left(\frac{1}{r_{+}}-\frac{1}{r}\right)$. The existence of time-independent solutions and growing mode solutions to \eqref{equ:kl} under Dirichlet (respectively Neumann) boundary conditions has been established in our companion work \cite{weihaozheng}. The construction of hairy black hole solutions in this paper under Dirichlet (respectively Neumann) boundary conditions is based upon the results and methods in \cite{weihaozheng}. Although the construction of the stationary solution in \cite{weihaozheng} is implicit, we can still obtain the following necessary condition on the black hole charge for a stationary solution to the Klein--Gordon equation to exist:
\begin{equation}
\left(1+\frac{4q_{0}^{2}}{\bigl(-\frac{\Lambda}{3}\bigr)}\right)\frac{e^{2}}{r_{+}^{2}}>1+R_{0}.
\label{large charge condition}
\end{equation}

\paragraph{The Breitenlohner--Freedman bound and interpretation of conditions \eqref{large1} and \eqref{large2}}
The Breitenlohner--Freedman bound $\alpha>-\frac{9}{4}$ plays an important role in the local well-posedness and stability issues for the equations in the asymptotically AdS spacetime; see the original work of Breitenlohner and Freedman \cite{breitenlohner1982stability}; see also the related mathematics work \cite{holzegel2012self,holzegel2013stability,holzegel2013decay,holzegel2014boundedness,warnick2013massive,holzegel2012well}.

Theorem \ref{rough version 3} is preceded by physics heuristics, predicting the existence of hairy black holes bifurcating the extremal Reissner--Nordström-AdS spacetime \cite{hartnoll2008holographic,gubser2008breaking,hartnoll2008building,horowitz2011introduction}. The condition \eqref{large1} in the above theorem has been confirmed earlier by physicists \cite{hartnoll2008holographic} heuristically, using the near-horizon geometry of the extremal asymptotically AdS spacetime. Note the condition \eqref{large charge condition} excludes the existence of hairy black hole solutions for small black hole charge $e$, which is consistent with the uncharged spherically symmetric stability result for Schwarzschild-AdS in \cite{holzegel2013stability}.

\subsection{Relation between the hairy black hole solutions and the stability problem}
\label{sec:stability}
In this section, we will focus on discussing the relation between hairy black hole solutions and stability problems.
\subsubsection{Stability results on asymptotically flat spacetimes and co-dimensional three stability of Schwarzschild}
\label{sec:codimension three stability}
We start this section by discussing some stability results on asymptotically flat spacetimes. Recall that under the Boyer--Lindquist coordinates, the Kerr--Newman-(dS/AdS) metric solving \eqref{original 1}-\eqref{original 4} takes the form of:
\begin{equation}
\label{Kerr-Newman}
\begin{aligned}
g_{a,e,M}^{(dS/AdS)} =& -\frac{\Delta-\Delta_{\theta}a^{2}\sin^{2}\theta}{\Sigma}dt^{2}-2\frac{(r^{2}+a^{2})\Delta_{\theta}-\Delta}{\Xi\Sigma}a\sin^{2}\theta dtd\varphi+\frac{\Sigma}{\Delta}dr^{2}\\&+\frac{\Sigma}{\Delta_{\theta}}d\theta^{2}+\frac{\Delta_{\theta}(r^{2}+a^{2})^{2}-\Delta a^{2}\sin^{2}\theta}{\Xi^{2}\Sigma}d\varphi^{2},\\
\Delta =& (r^{2}+a^{2})\left(\bigl(-\frac{\Lambda}{3}\bigr)r^{2}+a^{2}\right)-2Mr+e^{2},\quad \Sigma = r^{2}+a^{2}\cos^{2}\theta,\\\quad \Xi =& 1-\bigl(-\frac{\Lambda}{3}\bigr)a^{2},\quad \Delta_{\theta} = 1+\bigl(-\frac{\Lambda}{3}\bigr)a^{2}\cos^{2}\theta.
\end{aligned}
\end{equation}
\begin{itemize}
  \item The case $\Lambda = 0$ in \eqref{Kerr-Newman} corresponds to the asymptotically flat Kerr--Newman family \cite{hawking2023large}; the case $\Lambda>0$ corresponds to the Kerr--Newman-dS family; the case $\Lambda<0$ corresponds to the Kerr--Newman-AdS family.

  \item When $a = 0$ and $\Lambda<0$, \eqref{Kerr-Newman} will be reduced to the Reissner--Nordström-AdS spacetimes \eqref{brief intro of RN}; when $e=0$ and $\Lambda<0$, \eqref{Kerr-Newman} will be reduced to the Kerr-AdS spacetimes; when $e = a = 0$ and $\Lambda<0$, \eqref{Kerr-Newman} will be reduced to the Schwarzschild-AdS spacetimes.
  \item Since Reissner--Nordström-AdS is a co-dimensional three sub-family of Kerr--Newman-AdS, one may expect the co-dimensional three stability of Reissner--Nordström-AdS as solutions to \eqref{original 1}-\eqref{original 4}, analogously to \cite{dafermos2021non}; or stability under $U(1)$ perturbations, analogously to \cite{klainerman2017global}, in the context of $\Lambda= 0$ and the Einstein vacuum equations. However, the existence of hairy black hole solutions to \eqref{original 1}-\eqref{original 4} bifurcating off Reissner--Nordström-AdS strongly suggests that the moduli space of initial data for \eqref{original 1}-\eqref{original 4} leading to Reissner--Nordström-AdS has co-dimension strictly larger than three, in view of the additional presence of scalar hair $\phi$.
  \item It is also reasonable to make the conjecture that Reissner--Nordström-AdS is neither co-dimensional three stable or stable under $U(1)$ perturbations as solutions to Einstein--Maxwell equations (namely \eqref{original 1}-\eqref{original 4} without a scalar field) since the equations of linearized gravity for $\Lambda<0$ resemble Klein--Gordon equation on Reissner--Nordström-AdS.
  \item Our construction of hairy black holes bifurcating off Reissner--Nordström-AdS crucially relies on the presence of a Maxwell field in view of \eqref{large charge condition}, therefore, does not impact on any hypothetical finite co-dimensional stability statement of Schwarzschild-AdS. On that note, we remark that the linear decay of massive waves on Schwarzschild-AdS \cite{holzegel2013decay} and nonlinear decay of solutions to spherically symmetric equations \eqref{original 1}-\eqref{original 4} with $F\equiv0$ near Schwarzschild \cite{holzegel2013stability}, as discussed in Section \ref{sec:asymptotic stability}, are consistent with the co-dimensional three stability within the moduli space of solutions to the Einstein--Klein--Gordon equations. Nevertheless, due to the stable trapping caused by the Dirichlet boundary condition and the weak turbulence phenomenon, both co-dimensional three stability statements of Schwarzschild-AdS as solutions to the Einstein--Klein--Gordon and the Einstein vacuum equations are likely false; see Section \ref{sec:superradiance instability} for further discussion.
\end{itemize}
\subsubsection{Asymptotic stability of Schwarzschild-AdS as solutions to the spherically symmetric Einstein--Klein--Gordon equations and decay of the field}
\label{sec:asymptotic stability}
In this section, we discuss about the stability results for the spherically symmetric Einstein--Klein--Gordon equations (\eqref{original 1}-\eqref{original 4} with $F\equiv 0$) under Dirichlet boundary condition established in the works of Holzegel and Smulevici \cite{holzegel2012self,holzegel2013stability},
\begin{itemize}
  \item In \cite{holzegel2012self}, Holzegel and Smulevici proved the local well-posedness result for the spherically symmetric Einstein--Klein--Gordon equations when the Klein--Gordon mass satisfies the Breitenlohner--Freedman bound $\alpha>-\frac{9}{4}$. 
  \item In \cite{holzegel2013stability} they showed that the Schwarzschild-AdS spacetimes are stable solutions to the spherically symmetric Einstein--Klein--Gordon equations under Dirichlet boundary conditions for $\alpha\geq-2$. Under a small spherically symmetric perturbation to the Schwarzschild-AdS initial data, the maximal development of the spacetime stays close to the Schwarzschild-AdS and converges to the Schwarzschild-AdS exponentially. In particular, the scalar field $\phi$ decays exponentially in time. 
\item Although in \cite{holzegel2013stability}, Holzegel and Smulevici only considered the case of zero Maxwell field $F\equiv0$, their work can be easily generalized to \eqref{original 1}-\eqref{original 4} with a small Maxwell field. However, in this paper, we consider the more general case, where the Maxwell field $F$ is not necessary zero or small. In Theorem \ref{rough version 3}, we show that Holzegel and Smulevici's stability result \cite{holzegel2013stability} does not hold in the presence of a large Maxwell field. In fact, \eqref{large charge condition} can be viewed as quantification of the necessary largeness of the Maxwell field.
\item In contrast to the decay result of the scalar field in \cite{holzegel2013stability}, our result shows that for solutions to \eqref{original 1}-\eqref{original 4} with Dirichlet boundary condition and near-Reissner--Nordström-AdS initial data, there is no decay for the scalar field. 
\end{itemize}
\subsubsection{Linearized stability of Schwarzschild-AdS as solutions to the Einstein vacuum equations}
In a series of works \cite{graf2024linear1,graf2024linear}, Holzegel and Graf proved the linear stability of the Schwarzschild-AdS spacetimes under Dirichlet-type boundary conditions. Under a small perturbation to Schwarzschild-AdS initial data, they proved that the solutions to the Teukolsky equations converge to Kerr-AdS inverse logarithmically in time. However, this decay rate is believed to be too slow to have the nonlinear stability of Schwarzschild-AdS as solutions to the Einstein vacuum equations; see the detailed discussion of this aspect in Section \ref{sec:superradiance instability}.
\subsubsection{Superradiant instability of Klein--Gordon equations on the Kerr and Kerr-AdS spacetimes and other instability mechanisms}
\label{sec:superradiance instability}
In contrast to spherically symmetric spacetimes or charged spacetimes with a charged scalar field, the well-known rotation-induced or charge-induced superradiance phenomenon may\footnote{Nonetheless, despite the presence of superradiance phenomenon for Kerr spacetime, decay results for the massless wave equations on Kerr have been established \cite{shlapentokh2015quantitative,dafermos2016decay,shlapentokh2020boundedness,shlapentokh2023boundedness,tataru2013local}.
} 
generate instability. Due to the superradiance and the presence of the Klein--Gordon mass, Shlapentokh-Rothman
\cite{shlapentokh2014exponentially} proved the existence of growing mode and oscillating mode solutions to the Klein--Gordon equation on Kerr (\eqref{Kerr-Newman} with $e = \Lambda = 0$). Later Shlapentokh-Rothman and Chodosh \cite{chodosh2017time} further constructed families of stationary black hole metric with a time-periodic scalar field solving Einstein--Klein--Gordon equations and bifurcating off the Kerr spacetimes. In direct contrast to the uniqueness results of Kerr and Kerr-Newman established in vacuum \cite{carter1971axisymmetric,robinson1975uniqueness,hawking1972black,alexakis2010uniqueness,wong2014non}, their result provides a counter-example to the analogous no-hair conjecture for Kerr as solutions to the Einstein--Klein--Gordon equations. Moreover, their construction shows the failure of asymptotical stability of Kerr as solutions to the Einstein--Klein--Gordon equations.

Similarly, the superradiance phenomenon appears on Kerr-AdS when the so-called Hawking--Reall bound $r_{+}^{2}>\vert a\vert\bigl(-\frac{\Lambda}{3}\bigr)^{-\frac{1}{2}}$ is violated, where $r_{+}$ here is the area radius of the event horizon of Kerr-AdS. Indeed, Dold \cite{dold2017unstable} proved the existence of growing mode solutions to the Klein--Gordon equation on Kerr-AdS violating the Hawking--Reall bound under Dirichlet and Neumann boundary conditions respectively. Since the Klein--Gordon equation with a negative mass on Kerr-AdS is believed to be a toy model of Einstein vacuum equations, Dold's result suggests that Kerr-AdS spacetimes violating the Hawking--Reall bound are unstable solutions to the Einstein vacuum equations. 

For the charge-induced superradiance, growing mode solutions have also been constructed for the Klein--Gordon equation on Kerr--Newman-dS and Reissner--Nordström-AdS \cite{besset2021existence}.

It should be noted that there is no analogous superradiance on Reissner--Nordström-AdS with a uncharged scalar field. However, condition \eqref{large2} can be viewed as an analogue of the Hawking--Reall bound. Its physical interpretation is up to debate.

Moreover, due to the stable trapping caused by the Dirichlet boundary condition \cite{holzegel2013decay,graf2023mode}, even within the Hawking--Reall bound, the sharp inverse logarithmical decay of solutions to the Klein--Gordon equations on Kerr-AdS \cite{holzegel2014quasimodes,holzegel2013decay} suggests the following non-linear instability conjecture of Kerr-AdS (see also \cite{figueras2023non}):
\begin{conjecture}\textup{\cite{holzegel2013decay}}
Kerr-AdS as solutions to the Einstein vacuum equations under Dirichlet boundary conditions are unstable against the generic perturbation of initial data.
\end{conjecture}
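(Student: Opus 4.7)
The plan is to establish the nonlinear instability by constructing, via bifurcation, families of \emph{time-periodic or quasi-periodic} vacuum solutions near Kerr-AdS whose linearized content does not decay, in direct analogy with the hairy black hole construction of the present paper and with the Chodosh--Shlapentokh-Rothman work \cite{chodosh2017time} for asymptotically flat Kerr. The existence of such non-decaying solutions arbitrarily close to a given Kerr-AdS background would immediately contradict asymptotic stability and realize the instability asserted by the conjecture for an open (and hence generic in a measure-theoretic sense) subset of data.

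First, I would isolate a genuine \emph{real mode} of the linearized gravity system about Kerr-AdS. I would start from the quasi-modes of the Teukolsky system concentrated near the AdS analogue of the photon sphere, which persist even inside the Hawking--Reall bound \cite{holzegel2014quasimodes,graf2023mode} and whose complex frequencies approach the real axis exponentially fast as the angular parameter $\ell$ increases. Then I would vary the black-hole parameters $(M,a,\Lambda)$ along a codimension-one submanifold in order to drive the imaginary part of one such quasi-mode exactly to zero, mirroring the parameter tuning used in the companion paper \cite{weihaozheng} to produce stationary hair on Reissner--Nordström-AdS. At the critical parameter value, a purely oscillatory linear mode $\psi_{0}$ of the Teukolsky operator should emerge.

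Second, I would carry out a Crandall--Rabinowitz type bifurcation, inserting $\epsilon\,\psi_{0}$ as the seed into a suitably gauged vacuum system (for instance a generalized harmonic gauge or a double-null gauge adapted to the conformal boundary, compatible with the setting of \cite{graf2024linear1,graf2024linear}) and solving the full Einstein equations via the implicit function theorem on weighted Hölder or Sobolev spaces respecting the Dirichlet-type boundary condition at $\mathcal{I}$. The output would be a one-parameter family of helically symmetric vacuum solutions bifurcating off Kerr-AdS at the critical parameter, sharing the mass and angular momentum of the background but with nontrivial gravitational oscillation. Restricting such a solution to a Cauchy slice furnishes initial data close to Kerr-AdS whose evolution fails to decay.

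The main obstacle is twofold. First, closing the bifurcation at the fully nonlinear level requires handling the diffeomorphism gauge freedom and the diffeomorphism-invariant formulation of the Dirichlet condition at $\mathcal{I}$, a substantially more delicate problem than the matter-field case addressed here; the Teukolsky--Starobinsky identities must be used to reconstruct a metric perturbation from the curvature mode. Second, and more seriously, the quasi-mode frequencies form a dense set of near-resonances, so the natural Picard iteration loses derivatives through small divisors and a Nash--Moser or KAM-type scheme appears unavoidable. A fallback route, weaker but more tractable, would be to abandon bifurcation and instead exhibit generic non-decaying linear solutions via microlocalized Gaussian beams riding on the stably trapped null geodesics, and then argue by a high-regularity weak-turbulence energy cascade that small nonlinear perturbations generically fail to return to Kerr-AdS.
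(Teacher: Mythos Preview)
The statement you are addressing is a \emph{conjecture}, not a theorem, and the paper does not contain a proof of it. It is quoted from \cite{holzegel2013decay} as background and motivation in Section~\ref{sec:superradiance instability}; the paper neither claims to prove it nor offers a proof strategy. There is therefore nothing in the paper to compare your proposal against.

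As for the proposal itself: you have written a research program for an open problem, not a proof. You correctly identify the two central obstructions (gauge-invariant reconstruction of the metric from Teukolsky data with Dirichlet-type boundary conditions, and the dense small-divisor structure of the quasi-mode spectrum forcing a Nash--Moser or KAM scheme), and you are right that either one is currently out of reach. Note also a more basic issue with the first step: the quasi-modes of \cite{holzegel2014quasimodes,graf2023mode} have imaginary parts that are exponentially small but \emph{nonzero}, and there is no known mechanism---analogous to the variational argument in \cite{weihaozheng} exploiting a sign-definite potential---by which varying $(M,a,\Lambda)$ would force a quasi-mode frequency exactly onto the real axis for the vacuum Teukolsky system within the Hawking--Reall bound. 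Without that, the bifurcation seed $\psi_0$ does not exist and the Crandall--Rabinowitz step cannot begin. Your fallback (Gaussian beams plus a weak-turbulence cascade) is closer in spirit to the heuristics actually cited in the paper, but turning that into a rigorous instability statement is itself a well-known open problem.
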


It is also interesting to mention that a plausible underlying instability mechanism, provided by the weak turbulence, has been suggested for a nonlinear scalar toy model on Schwarzschild-AdS \cite{KM}. We also remark that the weak turbulent instability of pure AdS spacetimes has already been discussed in \cite{bizon2011weakly}; see also the rigorous mathematics works regarding this aspect \cite{moschidis2020proof,moschidis2023proof}.

We emphasize here that the ``hairy'' mechanism for $q_{0} = 0$ in Theorem \ref{rough version 3} is new and unrelated to the superradiance, stable trapping, or weak turbulence. However, the case of large $\vert q_{0}\vert$ in Theorem \ref{rough version 3} is reminiscent of the charged superradiance; see \cite{di2015superradiance} for the discussion of $\Lambda = 0$.

\subsection{Relation to the black hole interior with the charged or uncharged scalar field}
We discuss the connection between our results about the hairy black hole exterior and previous results about the hairy black hole interior in this section. It is well-known that the maximal globally hyperbolic development of the characteristic Reissner--Nordström-AdS initial data on the event horizon will terminate at a null hypersurface called Cauchy horizon. However, Van de Moortel \cite{van2021violent}, and later Van de Moortel and Li \cite{li2023kasner} showed that for the interior of the hairy black hole bifurcating off Reissner--Nordström-AdS\footnote{In fact, their results apply to any value of $\Lambda$. We only discuss the case $\Lambda<0$ here.}, the spacetime solution to the Einstein--Maxwell--(charged) Klein--Gordon equations terminates at spacelike singularity described by a Kasner-like metric.
\begin{theorem}\textup{(Rough version of results in \cite{van2021violent,li2023kasner})}
\label{thm:interior}
Fix the following characteristic initial data on the bifurcating two-ended event horizon $\mathcal{H}_{1}^{+}\cup\mathcal{H}_{2}^{+}$:\begin{align*}
&\phi = \epsilon,\\&
g = g_{RN}+O(\epsilon^{2}),
\end{align*}
where $g_{RN}$ is the Reissner-Nordström-AdS metric \eqref{brief intro of RN} with sub-extremal parameters $(M,e,\Lambda)$. Then for every $(M,e,\Lambda,\alpha<0)$, there exists $\epsilon>0$ which can be chosen to be arbitrarily small, the maximal globally hyperbolic future development of spatially homogeneous spherically symmetric spacetime $(\mathcal{M}_{\epsilon},g_{\epsilon})$ ends at a spacelike singularity $\mathcal{S} = \{r = 0\}$. Moreover, we have the following descriptions of the spacelike singularity:\begin{enumerate}
\item[(1)]\textup{(violent collapse \cite{van2021violent})}For \eqref{original 1}-\eqref{original 4} with an uncharged scalar field ($q_{0} = 0$), the spacelike singularity $\{r = 0\}$ can be approximated by a Kasner metric\begin{equation}
\label{Kasner}
g_{K} = -d\tau^{2}+\tau^{2p_{1}(\epsilon)}dx_{1}^{2}+\tau^{2p_{2}(\epsilon)}dx_{2}^{2}+\tau^{2p_{3}(\epsilon)}dx_{3}^{2},\quad p_{1}+p_{2}+p_{3} = 1
\end{equation}
 of positive exponents $(p_{1},p_{2},p_{3}) = (1-C\epsilon^{2},C\epsilon^{2},C\epsilon^{2})+O(\epsilon^{3})$ for a constant $C>0$.
\item[(2)]\textup{(fluctuating collapse and Kasner inversion \cite{li2023kasner})}For \eqref{original 1}-\eqref{original 4} with a charged scalar field ($q_{0}\neq0$), the spacelike singularity $\{r = 0\}$ can be approximated by a Kasner metric \eqref{Kasner} of positive exponents $(1-2p(\epsilon),p(\epsilon),p(\epsilon))$, where $p(\epsilon)$ is a function of a highly oscillating quantity $\alpha(\epsilon)$ of the form\begin{equation*}
\alpha(\epsilon)\approx C\sin(\omega_{0}\epsilon^{-2}+O(\log(\epsilon^{-1})))\ \textup{as } \epsilon\rightarrow0
\end{equation*}
for $C\neq0$ and $\omega_{0}\in\mathbb{R}-\{0\}$
\end{enumerate}
\end{theorem}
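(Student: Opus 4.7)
The plan is to exploit the spatial homogeneity of the characteristic initial data to reduce the Einstein--Maxwell--(charged) Klein--Gordon system in the black hole interior to a coupled system of ordinary differential equations in a single "time" coordinate $\tau$. Because the prescribed data on $\mathcal{H}_{1}^{+}\cup\mathcal{H}_{2}^{+}$ are constant along each horizon generator, the maximal development inherits a translation Killing field tangent to the horizons, so $r$, the Hawking mass $m$, the renormalized charge $Q$, the scalar field $\phi$ and the Maxwell potential component $A_{\tau}$ may all be written as functions of $\tau$ alone. I would begin by fixing a Bondi-type gauge normalized at the event horizon $\{\tau=0\}$ and writing out the resulting ODE system (the Raychaudhuri equations for the two null derivatives of $r$, the Maxwell equation for $Q$, the wave equation for $\phi$, and the evolution of $m$), with initial data $\phi(0)=\epsilon$, $\partial_{\tau}\phi(0)=0$, $r(0)=r_{+}$, and parameters determined by $(M,e,\Lambda,\alpha)$ to match the reference Reissner--Nordström-AdS interior up to $O(\epsilon^{2})$.

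Next I would perform a region-by-region analysis of the interior, modeled on the strategy of Van~de~Moortel and Li--Van~de~Moortel. In a red-shift region near $\mathcal{H}^{+}$, all quantities remain $O(\epsilon^{2})$-close to the Reissner--Nordström-AdS profile and $r$ stays comparable to $r_{+}$. In an intermediate "no-shift" region the scalar field still remains small but its derivatives begin to grow due to the loss of the red-shift mechanism, so one needs quantitative bootstrap estimates on $\partial_{\tau}\phi$ and on the deviation of $r$ from the unperturbed interior radius. The crucial late-time "collapse" region is where the negative sign of $\alpha$ combined with the $|\partial_{\tau}\phi|^{2}$ term in Raychaudhuri forces the $r$-derivative to diverge, so that $r$ reaches zero in finite proper $\tau$ and the singularity $\mathcal{S}=\{r=0\}$ becomes spacelike rather than a Cauchy horizon.

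The final step is an asymptotic matching near $\mathcal{S}$ with the Kasner ansatz \eqref{Kasner}. For (1), since $q_{0}=0$ the scalar wave equation essentially decouples from the Maxwell field in the collapse region, and one obtains $\partial_{\tau}\phi\sim c(\epsilon)\,\tau^{-1}$ with $c(\epsilon)=O(\epsilon)$ computed from the no-shift--to--collapse matching; plugging this profile into the constraint $p_{1}+p_{2}+p_{3}=1$ together with the Kasner identity $p_{1}^{2}+p_{2}^{2}+p_{3}^{2}+2c(\epsilon)^{2}=1$ gives exponents $(1-C\epsilon^{2},\,C\epsilon^{2},\,C\epsilon^{2})+O(\epsilon^{3})$ with $C>0$ explicit in the initial data. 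For (2), the gauge-covariant derivative $D_{\tau}\phi$ couples to $A_{\tau}$, and the charge-induced oscillations of $\phi$ generate a phase that grows like $\epsilon^{-2}$ between successive "Kasner epochs"; tracking this phase through the ODE system and applying a matched-asymptotics argument produces the advertised oscillating amplitude $\alpha(\epsilon)\approx C\sin(\omega_{0}\epsilon^{-2}+O(\log\epsilon^{-1}))$ and the Kasner inversion map relating consecutive epochs.

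The hardest part in (1) is the rigorous matching between the no-shift and collapse regions, which requires sharp control on the solution on a $\log\epsilon^{-1}$-long interval in $\tau$ where the linearized dynamics fails and the nonlinear scalar source becomes dominant; without this one cannot identify the coefficient $C$ or even be sure that the Kasner exponents are strictly positive. The hardest part in (2) is the tracking of the oscillatory phase across the Kasner inversion: one must show that the phase accumulates at rate $\omega_{0}\epsilon^{-2}$, which is a singular limit, and stabilize this accumulation against the rapidly collapsing background geometry via a stationary-phase-type argument applied to a genuinely non-autonomous system.
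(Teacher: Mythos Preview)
This theorem is not proved in the paper at all: it is explicitly labeled ``Rough version of results in \cite{van2021violent,li2023kasner}'' and is stated in the introductory section purely to connect the present exterior construction with prior work on the black hole interior. The paper provides no argument for it beyond the citation, and its subsequent bullet points only discuss the scope of those cited results and explain that the present paper's contribution (Theorem~\ref{rough version 3}) furnishes the exterior data required as input for Theorem~\ref{thm:interior}.

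Your proposal is a reasonable high-level sketch of the strategy actually used in \cite{van2021violent,li2023kasner}---reduction to an ODE system via spatial homogeneity, a region-by-region bootstrap (red-shift, no-shift, collapse), and Kasner matching near $r=0$---but there is nothing in the present paper to compare it against. If your intent was to reproduce the cited proofs, be aware that what you have written is only an outline: the genuinely difficult steps you identify (the sharp matching across the $\log\epsilon^{-1}$-long transition in the uncharged case, and the phase-tracking through Kasner inversions in the charged case) each occupy substantial portions of the original papers and require considerably more than a stationary-phase heuristic or a single bootstrap. In particular, the claim that ``the negative sign of $\alpha$ combined with the $|\partial_{\tau}\phi|^{2}$ term in Raychaudhuri forces the $r$-derivative to diverge'' glosses over the competition with the Maxwell repulsion term $Q^{2}/r^{3}$ that would otherwise produce a Cauchy horizon; controlling this competition is precisely the content of the collapse analysis in \cite{van2021violent}.
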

\begin{itemize}
  \item In fact, results in \cite{van2021violent,li2023kasner} apply for any value of $\Lambda$ and Klein--Gordon mass $\alpha$. However, in \cite{van2021violent}, for general values of $(M,e,\Lambda,\alpha)$ with $(M,e,\Lambda)$ sub-extremal, the result only holds for almost every $(M,e,\Lambda,\alpha)$ with an exceptional set of measure zero. This exceptional set can be proved to be empty if $\alpha<0$ \cite{weihaozheng}. Since we are interested in the $\Lambda<0$ and $\alpha<0$ in this paper, the result in \cite{van2021violent} actually holds for every $(M,e,\Lambda,\alpha)$ when $\alpha<0$.
  \item Theorem \ref{thm:interior} mentioned in this section is the rough version of results in \cite{van2021violent,li2023kasner}. In fact in \cite{van2021violent}, Van de Moortel shows that for all values of $\epsilon$ which is small enough, the spacetime terminates at a spacelike singularity such that description in Theorem \ref{thm:interior} holds. However in \cite{li2023kasner}, Van de Moortel and Li have a detailed descrition of the set of $\epsilon$ for which of Theorem \ref{thm:interior} is true. Moreover, for some value of $\epsilon$, the so-called Kasner inversion appears.

  \item However, the existence of two-ended asymptotically AdS black holes satisfying the assumption of Theorem \ref{thm:interior} was unknown; see Open Problem $\romannumeral4$ in \cite{van2021violent}. Our main Theorem $\ref{rough version 3}$ can be viewed as the desired construction of the hairy black hole exterior satisfying these conditions, thus combining it with Van de Moortel and Li's results gives full pictures of the charged and uncharged hairy black hole exterior and interior.
\end{itemize}

\subsection{AdS/CFT correspondence and other asymptotically AdS hairy black solutions}
The (in)stability of the asymptotically AdS spacetime has gained growing attention due to the development of the string theory. The celebrated AdS/CFT correspondence has been proposed and developed since 1997 \cite{maldacena1999large,maldacena2003eternal}, which establishes a connection between a ($d$ dimensional) conformal field and a ($d+1$ dimensional) anti-de Sitter spacetime. One of the important applications of the AdS/CFT correspondence is that the hairy black hole solutions to the Einstein--Maxwell--Klein--Gordon equations can serve as a model of the holographic superconductor, the existence of which has been discussed heuristically in the physics literature \cite{horowitz2011introduction,gubser2008breaking,hartnoll2008building}.

We mention briefly the boundary conditions here. We say that the scalar field $\phi$ satisfies the Dirichlet boundary condition if $\phi\approx r^{-\frac{3}{2}-\sqrt{\frac{9}{4}}+\alpha}$ when $r\rightarrow\infty$; $\phi$ satisfies the Neumann boundary condition if $\phi\approx r^{-\frac{3}{2}+\sqrt{\frac{9}{4}}+\alpha}$. In the model of holographic superconductors, the existence of hairy black hole solutions to \eqref{original 1}-\eqref{original 4} under the Dirichlet boundary condition is often referred to as spontaneous condensation in the physics literature \cite{hartnoll2020gravitational} while the Neumann boundary condition corresponds to a source at the AdS conformal boundary and the existence of hairy black holes is called the stimulated condensation. In \cite{hartnoll2008holographic}, the authors have discussed the existence of the uncharged hairy black solutions to \eqref{original 1}-\eqref{original 4} under Neumann boundary conditions. Given the supporting numerics \cite{hartnoll2008holographic}, the existence of hairy black holes with Dirichlet boundary conditions is also conjectured to exist. We emphasize that our Theorem \ref{rough version 3} can be viewed as the first rigorous construction of this ``hairy'' holographic superconductor under Dirichlet boundary conditions.

It is interesting to mention that some other asymptotically AdS black hole spacetimes violating the spirit of the no-hair conjecture may exist. For the Einstein vacuum equations, it is conjectured that there are asymptotically AdS time-periodic black hole solutions with the Hawking--Reall bound saturated called ``black resonators'' \cite{dias2015black,fodor2015self}. Moreover, the asymptotically AdS, spatially localized and time-periodic solutions without an event horizon called ``geons'' have also been introduced in \cite{horowitz2014geons,fodor2017anti}. We emphasize that it is an open problem to construct these non-trivial spacetimes rigorously.
\section{Acknowledgements}
The author would like to express his gratitude to his advisor Maxime Van de Moortel, for his kind support, continuous encouragement and numerous inspiring discussions.

\section{Preliminary}
\label{Pre}
\subsection{The static spherically symmetric Einstein--Maxwell--(charged)Klein--Gordon equations in double null coordinates}
In this section, we introduce the Einstein--Maxwell--(charged) Klein--Gordon equations with a negative cosmological constant $\Lambda$ and the corresponding PDE system one can derive by using the spherical symmetry. Recall that the Einstein--Maxwell--(charged) Klein--Gordon equations take the following form:\begin{align}
&R_{\mu\nu}(g) - \frac{1}{2}g_{\mu\nu} + \Lambda g_{\mu\nu} = T_{\mu\nu}^{EM} + T^{KG}_{\mu\nu},\\
&T_{\mu\nu}^{EM} = 2\left(g^{\alpha\beta}F_{\alpha\nu}F_{\beta\mu} - \frac{1}{4}F^{\alpha\beta}F_{\alpha\beta}g_{\mu\nu}\right),\quad \nabla^{\mu}F_{\mu\nu} = iq_{0}\left(\frac{\phi\overline{D_{\nu}\phi}-\bar{\phi}D_{\nu}\phi}{2}\right),\ F = dA\\
&T_{\mu\nu}^{KG} = 2\left(D_{\mu}\phi \overline{D_{\nu}\phi} - \frac{1}{2}\left(g^{\alpha\beta}D_{\alpha}\phi\overline{D_{\beta}\phi} + \bigl(-\frac{\Lambda}{3}\bigr)\alpha\vert\phi\vert^{2}\right)g_{\mu\nu}\right),\quad D_{\mu} = \nabla_{\mu}+iq_{0}A_{\mu},\\
&g^{\mu\nu}D_{\mu}D_{\nu}\phi = \bigl(-\frac{\Lambda}{3}\bigr)\alpha\phi.
\end{align}
where $g$ is the metric of the manifold $M$, and $\phi$ is the (charged) scalar field. In the well-known double null coordinates, we can write the spherically symmeric metric $g$ as:
\begin{equation*}
g = -\Omega^{2}(u,v)dudv + r(u,v)^{2}d\sigma^{2},
\end{equation*}
where $d\sigma^{2} = r^{2}d\theta^{2}+\sin^{2}\theta d\varphi^{2}$ is the standard sphere metric. Computing the geometric quantities, we can write the spherically symmetric Einstein--Maxwell--(charged) Klein--Gordon equations with a negative cosmological constant $\Lambda$ as:
\begin{align}
\left(\frac{r_{u}}{\Omega^{2}}\right)_{u} &= -\frac{r}{\Omega^{2}}\vert D_{u}\phi\vert^{2}\label{Ray equation1},\\
\left(\frac{r_{v}}{\Omega^{2}}\right)_{v} &= -\frac{r}{\Omega^{2}}\vert D_{v}\phi\vert^{2},\label{Ray equation2}\\
\partial_{u}\partial_{v}r &= -\frac{r_{u}r_{v}}{r} - \frac{\Omega^{2}}{4r} - \frac{(-\Lambda)r\Omega^{2}}{4} + \bigl(-\frac{\Lambda}{3}\bigr)\frac{\alpha r\Omega^{2}\vert\phi\vert^{2}}{4} + \frac{Q^{2}\Omega^{2}}{4r^{3}},\label{Wave1}\\
\partial_{u}\partial_{v}\log\Omega^{2} &= \frac{\Omega^{2}}{2r^{2}} + 2\frac{r_{u}r_{v}}{r^{2}} - \frac{Q^{2}\Omega^{2}}{r^{4}} - 2\mathfrak{Re}(D_{u}\phi \overline{D_{v}\phi}),\label{Wave2}\\
\partial_{u}Q& = -q_{0}r^{2}\mathfrak{Im}(\phi \overline{D_{u}\phi}),\\
\partial_{v}Q& = q_{0}r^{2}\mathfrak{Im}(\phi\overline{D_{v}\phi})\\
D_{u}D_{v}\phi &= -\frac{\partial_{u}r\cdot D_{v}\phi}{r}-\frac{\partial_{v}r\cdot D_{u}\phi}{r}+\frac{iq_{0}Q\Omega^{2}}{4r^{2}}\phi-\bigl(-\frac{\Lambda}{3}\bigr)\frac{\alpha\Omega^{2}\phi}{4},\\
\partial_{u}A_{v}-\partial_{v}A_{u}& = \frac{Q\Omega^{2}}{2r^{2}}\label{Aequ}
\end{align}

The equations $\eqref{Ray equation1}$ and $\eqref{Ray equation2}$ are called the Raychaudhuri equations. And the equations $\eqref{Wave1}$ and $\eqref{Wave2}$ are the wave equations for $r$ and $\Omega^{2}$. We define the Vaidya mass as\begin{equation}
-\frac{4r_{u}r_{v}}{\Omega^{2}} = 1-\frac{2\varpi}{r}+\frac{Q^{2}}{r^{2}}+\bigl(-\frac{\Lambda}{3}\bigr)r^{2}.
\end{equation}
Then the transport equations for the vaidya mass are:\begin{align}
\partial_{u}\varpi &= -2r^{2}\left(\frac{r_{v}}{\Omega^{2}}\right)\vert D_{u}\phi\vert^{2}+\frac{1}{2}\bigl(-\frac{\Lambda}{3}\bigr)\alpha r^{2}\vert\phi\vert^{2}r_{u}-q_{0}Qr\Im(\phi\overline{D_{u}\phi}),\\
\partial_{v}\varpi &= -2r^{2}\left(\frac{r_{u}}{\Omega^{2}}\right)\vert D_{v}\phi\vert^{2}+\frac{1}{2}\bigl(-\frac{\Lambda}{3}\bigr)\alpha r^{2}\vert\phi\vert^{2}r_{v}-q_{0}Qr\Im(\phi\overline{D_{v}\phi}).
\end{align}

\subsection{The Reissner--Nordström-AdS spacetime}
As already mentioned in Section \ref{sec:intro}, in the electro-vacuum ($\phi = 0$), the static spherically symmetric solutions to \eqref{Ray equation1}-\eqref{Aequ} are the Reissner--Nordström-AdS spacetime $g_{RN}$ with the Maxwell field $F = -\frac{e}{r^{2}}dt\wedge dr$. In this case, we have $Q\equiv e$ and $\varpi\equiv M$. Define\begin{equation*}
\frac{dr^{*}}{dr} = -\frac{1}{\Omega_{RN}^{2}},\quad \lim_{r\rightarrow\infty}r^{*}(r) = 0.
\end{equation*}
Then when $r\rightarrow r_{+}$, we have $r^{*}\rightarrow\infty$. Let $u = t-r^{*}$, $v = t+r^{*}$. Then the Reissner--Nordström-AdS metric in the double null coordinates $(u,v,\theta,\varphi)$ takes the form of\begin{equation*}
g_{RN} = -\frac{1}{2}\Omega_{RN}^{2}du\otimes dv-\frac{1}{2}\Omega_{RN}^{2}dv\otimes du+r^{2}d\sigma^{2},\quad u,v\in\bar{\mathbb{R}}.
\end{equation*}
When $r^{*}\rightarrow\infty$, we have the asymptotic\begin{equation*}
\Omega_{RN}^{2} \approx 2K_{+}e^{-2K_{+}r^{*}},
\end{equation*}
where $K_{+}$ is defined by\begin{equation}
2K_{+} = 2\left(\frac{\varpi(r_{+})}{r_{+}^{2}}-\frac{Q^{2}(r_{+})}{r_{+}^{3}}+\bigl(-\frac{\Lambda}{3}\bigr)r_{+}\right).\label{def of K}
\end{equation}
The so-called event horizon\begin{equation*}
\mathcal{H}_{L}^{+} := \{v = \infty\},\quad \mathcal{H}_{R}^{+} := \{u = -\infty\},\quad \mathcal{H}^{+} := \mathcal{H}_{L}^{+}\cup\mathcal{H}_{R}^{+}
\end{equation*}
can be attached to the spacetime. $K_{+}$ is called the surface gravity of the event horizon. To regularly extend the metric to the event horizon, let
\begin{equation}
U = e^{K_{+}u},\quad V = e^{-K_{+}v}.\label{def of UV}
\end{equation}
Then under $(U,V,\theta,\varphi)$ coordinates, $\mathcal{H}^{+}_{L}$ corresponds to $\{V = 0\}$ and $\mathcal{H}^{+}_{R}$ corresponds to $\{U = 0\}$. Then on the event horizon, we have\begin{equation*}
^{(U,v)}\Omega_{RN}^{2}(U = 0,v) = 2e^{-K_{+}v},\quad ^{(u,V)}\Omega_{RN}^{2}(u,V = 0) = -2e^{K_{+}u},\quad ^{(U,V)}{\Omega_{RN}^{2}}_{|\mathcal{H}^{+}} = \frac{2}{K_{+}}.
\end{equation*}

\subsection{Set-up of the initial data and the gauge choice}
In this section, we fix the gauge choice and pose the characteristic initial data on the event horizon $\mathcal{H}^{+}$. First, we fix the gauge choice of $\Omega^{2}$ as\begin{align}
^{(U,v)}\Omega^{2}(U = 0,v) &= 2e^{-2K_{+}v},\\
^{(u,V)}\Omega^{2}(u,V = 0) & = -2e^{2K_{+}u},
\end{align}
where $U,V$ are defined in \eqref{def of UV} and $K_{+}$ is defined in \eqref{def of K}. Then for $(U,V)$ coordinates, on the event horizon, we have \begin{equation*}
^{(U,V)}\Omega^{2}_{|\mathcal{H}^{+}} = -\frac{2}{K_{+}}.
\end{equation*}
Next we impose the following characteristic initial data on the event horizon
\begin{equation}
\label{initial condition}
\begin{aligned}
\varpi_{|\mathcal{H}^{+}} &= M,\\
Q_{|\mathcal{H}^{+}} & = e,\\
r_{|\mathcal{H}^{+}} & = r_{+},\\
\phi_{|\mathcal{H}^{+}}& = \epsilon.
\end{aligned}
\end{equation}

Then on the event horizon, the equation $\eqref{Wave1}$ becomes\begin{equation}
\begin{aligned}
\partial_{U}\partial_{V}r &= \frac{^{(U,V)}\Omega^{2}}{4}\left(-\frac{1}{r_{+}}+\Lambda r_{+}+\bigl(-\frac{\Lambda}{3}\bigr)\alpha r_{+}\epsilon^{2}+\frac{e^{2}}{r_{+}^{3}}\right)\\& = -\frac{1}{2K_{+}}\left(-2K_{+}+\bigl(-\frac{\Lambda}{3}\bigr)\alpha r_{+}\epsilon^{2}\right).
\end{aligned}
\end{equation}
We have\begin{equation}
\lim_{u\rightarrow-\infty}\frac{2r_{u}}{\Omega^{2}} =\lim_{v\rightarrow\infty}\frac{-2r_{v}}{\Omega^{2}} = 1-\frac{\bigl(-\frac{\Lambda}{3}\bigr)\alpha r_{+}}{2K_{+}}\epsilon^{2}. 
\end{equation}
Last, to fix the gauge freedom, we should also fix the gauge choice for $A$. Let\begin{equation}
UA_{U}+VA_{V} = 0.\label{gauge choice for A}
\end{equation}
\subsection{Symmetry of the spacetime and system of ODEs}
Let $u = t-s$, $v = t+s$. For the local well-posedness issue of \eqref{Ray equation1}-\eqref{Aequ} with respect to our gauge choice and characteristic initial data, one may appeal to the argument in Proposition $2.1$ in \cite{li2023kasner}. Moreover, we have the following lemma.
\begin{lemma}
Under the initial condition \eqref{initial condition} and gauge choice \eqref{gauge choice for A}, we have $\phi$ is everywhere real on the maximal development of the spacetime and \begin{equation}
\partial_{t}Q = \partial_{t}\phi = \partial_{t}r = \partial_{t}A = \partial_{t}\Omega^{2} = 0. \label{stationary}
\end{equation}
\end{lemma}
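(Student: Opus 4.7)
My plan is to obtain both \eqref{stationary} and the reality of $\phi$ as consequences of uniqueness of the characteristic initial value problem for \eqref{Ray equation1}--\eqref{Aequ} (the analog of Proposition 2.1 in \cite{li2023kasner}), applied to two symmetries that are visibly preserved by the data \eqref{initial condition} together with the gauge choices on $\Omega^2$ and on $A$.

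For stationarity, I would use the one-parameter family of Kruskal rescalings $\psi_s:(U,V)\mapsto(e^{K_+s}U,\,e^{-K_+s}V)$, $s\in\mathbb{R}$, whose infinitesimal generator is the Killing candidate $\partial_t = K_+(U\partial_U-V\partial_V)$. Each $\psi_s$ fixes the branches $\mathcal{H}^+_L=\{V=0\}$ and $\mathcal{H}^+_R=\{U=0\}$ setwise, and the constants $r_+, e, M, \epsilon$ prescribed in \eqref{initial condition} are manifestly $\psi_s$-invariant. A short computation shows that both the $\Omega^2$-boundary gauge and the electromagnetic gauge $UA_U+VA_V=0$ are $\psi_s$-invariant; for the latter, this follows from the identity $\psi_s^\star(U\,A_U+V\,A_V)|_{(U,V)}=(U'A_{U'}+V'A_{V'})|_{(U',V')=\psi_s(U,V)}$. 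Uniqueness of the characteristic IVP then lifts this symmetry to the full maximal development: the solution is $\psi_s$-invariant for every $s$, which gives $\partial_t r = \partial_t Q = \partial_t \Omega^2 = \partial_t \phi = 0$ and $\mathcal{L}_{\partial_t}A = 0$. Passing to stationary coordinates $(t,r)$ via $U=e^{K_+u}$, $V=e^{-K_+v}$, $t=(u+v)/2$, all of $r,Q,\Omega^2,\phi,A$ become independent of $t$.

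For the reality of $\phi$, with stationarity in hand the gauge $UA_U+VA_V=0$ translates in $(t,r)$ coordinates to $A_r=0$, so the electromagnetic potential admits the stationary representative $A=A_t(r)\,dt$ with $A_t$ real (reality of $A$ is propagated by Maxwell's equation, whose source $-q_0\,\mathrm{Im}(\phi\,\overline{D\phi})$ is real). In this gauge the charged Klein--Gordon equation reduces to a second-order linear ODE in $r$ with entirely real coefficients; the horizon datum $\phi(r_+)=\epsilon\in\mathbb{R}$ is real, and regularity at the event horizon pins down $\partial_r\phi(r_+)$ as a real number via the indicial analysis at $r=r_+$, so $\phi$ is real throughout. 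A more symmetric alternative is to note that the system admits the $\mathbb{Z}_2$ involution $\Psi:(\phi,A)\mapsto(\bar\phi,-A)$ (since $D_\mu^{-A}\bar\phi=\overline{D_\mu^A\phi}$ and the stress--energy tensors and Maxwell source are invariant), and composing it with the branch-swap $\sigma:(U,V)\mapsto(V,U)$ (corresponding to $t\mapsto -t$) yields a symmetry preserving the characteristic data; uniqueness plus stationarity then force $\phi=\bar\phi$.

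The main obstacle will be the careful invocation of uniqueness of the characteristic IVP at the bifurcation sphere $\{U=V=0\}$, in particular verifying that the rescalings $\psi_s$ act on the characteristic data and on the gauge within the class covered by the well-posedness theory used in \cite{li2023kasner}. Once this is granted, the explicit symmetry verifications sketched above propagate without further difficulty.
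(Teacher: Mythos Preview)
Your argument for stationarity via the scaling symmetry $\psi_s:(U,V)\mapsto(e^{K_+s}U,e^{-K_+s}V)$ and uniqueness of the characteristic IVP is exactly what the paper does (the paper writes it as $(U,V)\mapsto(\lambda U,\lambda^{-1}V)$).

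For the reality of $\phi$, your route diverges from the paper's. The paper argues directly from the transport equations for $Q$: once $\partial_t Q=0$ is known, one computes
\[
0=\partial_u Q+\partial_v Q
= q_0 r^2\bigl(\Im(\phi\,\overline{D_v\phi})-\Im(\phi\,\overline{D_u\phi})\bigr)
= q_0 r^2\,\Im(\phi\,\partial_s\bar\phi),
\]
using that the gauge \eqref{gauge choice for A} forces $A_u=A_v$; writing $\phi=\Phi e^{i\theta}$ then gives $\partial_s\theta=0$, and a global phase rotation makes $\phi$ real. Your first approach instead observes that, once stationarity holds and $A=A_t(r)\,dt$, the Klein--Gordon equation becomes a real-coefficient second-order ODE in $r$ (the potential enters only through $q_0^2A_t^2$) with real horizon data, so reality follows from ODE uniqueness. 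This is slightly cleaner and avoids appealing to the Maxwell equation a second time; the paper's computation, on the other hand, is a one-line consequence of information already on the table and does not require the passage to $(t,r)$ coordinates or the indicial analysis at $r_+$. Your alternative $\mathbb{Z}_2$-involution argument is also correct in spirit but is the most laborious of the three to make fully rigorous, since you must check compatibility with the $\Omega^2$-gauge under the branch swap.
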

\begin{proof}
Note that the equations \eqref{Ray equation1}-\eqref{Aequ} are invariant under the coordinates transformation $U^{\prime} = \lambda U$, $V^{\prime} = \lambda^{-1} V$. Since our gauge choice and initial conditions are also invariant under this transformation, by the local well-posedness of the system, we have\begin{equation*}
\Gamma(U,V) = \Gamma(\lambda U,\lambda^{-1}V),\quad \Gamma\in\{r,\Omega,Q,A,\phi\}.
\end{equation*}
Hence we have \eqref{stationary}. Since $\partial_{t}Q = 0$, we have \begin{align*}
\partial_{t}Q &= \partial_{u}Q+\partial_{v}Q = q_{0}r^{2}\left(\Im(\phi \overline{D_{v}\phi})-\Im(\phi\overline{D_{u}\phi})\right)\\&=
q_{0}r^{2}\Im(\phi\partial_{s}\overline{\phi}) = 0.
\end{align*}
Letting $\phi = \Phi e^{i\theta}$, then we have $\partial_{s}\theta = 0$. Then we can make the gauge choice such that $\phi$ is everywhere real.
\end{proof}

Hence we can write \eqref{Ray equation1}-\eqref{Aequ} as a system of ODEs: 

\begin{align}
\frac{d}{ds}\left(-r\frac{dr}{ds}\right) &= -\Omega^{2}\left(1 - \frac{Q^{2}}{r^{2}}\right) + \Omega^{2}r^{2}\left(\Lambda + \bigl(-\frac{\Lambda}{3}\bigr)\alpha\phi^{2}\right),\label{emkg1}\\
\frac{d^{2}\log(\Omega^{2})}{ds^{s}} &= -2\left(\frac{d\phi}{ds}\right)^{2} - \frac{2\Omega^{2}}{r^{2}} + \frac{2}{r^{2}}\left(\frac{dr}{ds}\right)^{2} + \frac{4Q^{2}\Omega^{2}}{r^{4}}+2A^{2}q_{0}^{2}\phi^{2},\label{emkg2}\\
\frac{d\kappa^{-1}}{ds} &= \frac{r}{\Omega^{2}}\left(\left(\frac{d\phi}{ds}\right)^{2}+q_{0}^{2}A^{2}\phi^{2}\right),\label{emkg3}\\
\frac{d}{ds}\left(r^{2}\frac{d\phi}{ds}\right) &= \bigl(-\frac{\Lambda}{3}\bigr)\alpha r^{2}\Omega^{2}\phi-q_{0}^{2}A^{2}r^{2}\phi,\label{emkg4}\\
\frac{dQ}{ds}& = -q_{0}^{2}r^{2}A\phi^{2},\label{Q equ}\\
\frac{dA}{ds} &= -\frac{Q\Omega^{2}}{r^{2}}\label{A equ}
\end{align}
where $\kappa^{-1} = -\frac{1}{\Omega^{2}}\frac{dr}{ds}$. Note here the equation $\eqref{emkg3}$ corresponds to the Raychaudhuri equations $\eqref{Ray equation1}$ and $\eqref{Ray equation2}$. 

Then the Vaidya mass $\varpi$ can be written as:
\begin{equation}
\Omega^{2}\kappa^{-2} = 1 - \frac{2\varpi}{r} + \frac{Q^{2}}{r^{2}} + \left(-\frac{\Lambda}{3}\right)r^{2}.
\end{equation}

From the above system $\eqref{emkg1}-\eqref{emkg4}$, we can write the transport equation for the Vaidya mass as:

\begin{equation}\label{Vaidya}
\frac{d\varpi}{ds} = -\frac{r^{2}}{2}\kappa^{-1}\left(\left(\frac{d\phi}{ds}\right)^{2}+q_{0}^{2}A^{2}\phi^{2}\right) + \bigl(-\frac{\Lambda}{3}\bigr)\alpha\frac{r^{2}}{2}\phi^{2}\frac{dr}{ds}-q_{0}^{2}QrA\phi^{2}.
\end{equation}
Next we consider the compatible initial data on the event horizon $\{s = \infty\}$:\begin{align}
&\lim_{s\rightarrow\infty}r(s) = r_{+},\quad \lim_{s\rightarrow\infty}Q(s) = e,\quad \lim_{s\rightarrow\infty}\varpi(s) = M,\quad \lim_{s\rightarrow\infty}\phi(s) = \epsilon,\\
&\lim_{s\rightarrow\infty}\kappa^{-1}(s) = 1-\frac{\bigl(-\frac{\Lambda}{3}\bigr)\alpha r_{+}}{2K_{+}}\epsilon^{2},\\
&\lim_{s\rightarrow\infty}\Omega^{2}e^{2K_{+}s} = 2K_{+},\label{asymptotic of Omega2}\\
&\lim_{s\rightarrow\infty}\frac{d}{ds}\log(\Omega^{2}) =-2K_{+} ,\\
&\lim_{s\rightarrow\infty}\frac{A}{\Omega^{2}} = \frac{e}{2K_{+}r_{+}^{2}},\\&
\lim_{s\rightarrow\infty}\frac{1}{\Omega^{2}}\frac{d\phi}{ds} = -\frac{\bigl(-\frac{\Lambda}{3}\bigr)\alpha\epsilon}{2K_{+}}.
\end{align}
Using the initial data of $\Omega^{2}$ \eqref{asymptotic of Omega2}, the coordinate transformation $s\rightarrow r$ is vaild near the event horizon. Since\begin{equation*}
\frac{d}{ds} = \frac{dr}{ds}\frac{d}{dr} = -\Omega^{2}\kappa^{-1}\frac{d}{dr},
\end{equation*}
we can write the ODE system $\eqref{emkg1}-\eqref{A equ}$ in terms of $r$\begin{align}
&\kappa^{-1}\frac{d\kappa}{dr} = r(\frac{d\phi}{dr})^{2}+q_{0}^{2}r\kappa^{2}\left(\frac{A}{\Omega^{2}}\right)^{2}\phi^{2},\label{req1}\\&\Omega^{2}\kappa^{-2} = 1-\frac{2\varpi}{r}+\frac{Q^{
2}}{r^{2}}+\left(-\frac{\Lambda}{3}\right)r^{2},\label{req2}\\&\frac{d\varpi}{dr} = \bigl(-\frac{\Lambda}{3}\bigr)\alpha\frac{r^{2}\phi^{2}}{2}+\frac{r^{2}(\frac{d\phi}{dr})^{2}}{2}\left(1-\frac{2\varpi}{r}+\frac{Q^{2}}{r^{2}}+\left(-\frac{\Lambda}{3}\right)r^{2}\right)+\frac{q_{0}^{2}A^{2}r^{2}\phi^{2}}{2\Omega^{2}}+q_{0}^{2}Qr\left(\frac{A}{\Omega^{2}}\right)\kappa\phi^{2},\label{req3}\\&\frac{dQ}{dr} = q_{0}^{2}r^{2}\kappa\frac{A}{\Omega^{2}}\phi^{2},\label{req5}\\&\frac{dA}{dr} = \frac{Q\kappa}{r^{2}}.\label{req6}
\\&\frac{d}{dr}\Bigl(r^{2}\Omega^{2}\kappa^{-2}\frac{d\phi}{dr}\Bigr)+r^{2}\kappa^{-1}\frac{d\kappa}{dr}\Omega^{2}\kappa^{-2}\frac{d\phi}{dr} = \bigl(-\frac{\Lambda}{3}\bigr)\alpha r^{2}\phi-\frac{q_{0}^{2}A^{2}r^{2}}{\Omega^{2}}\phi,\label{req4}
\end{align}
Then in terms of $r$, the initial conditions and gauge choice become\begin{align}
&\varpi(r_{+}) = M,\quad Q(r_{+}) = e,\quad \phi(r_{+}) = \epsilon\label{initial varpi}\\
&1-\frac{2M}{r_{+}}+\frac{e^{2}}{r_{+}^{2}}+\bigl(-\frac{\Lambda}{3}\bigr)r_{+}^{2} = 0,\\
&\kappa^{-1}(r_{+}) = 1-\frac{\bigl(-\frac{\Lambda}{3}\bigr)\alpha r_{+}}{2K_{+}}\epsilon^{2},\label{initial kappa}\\
&\frac{d\phi}{dr}(r_{+}) = \frac{\bigl(-\frac{\Lambda}{3}\bigr)\alpha\epsilon}{2K_{+}}\times\kappa(r_{+}),\\&
\lim_{r\rightarrow r_{+}}\frac{A}{\Omega^{2}} = \frac{e}{2K_{+}r_{+}^{2}},\\&
\frac{d\Omega^{2}}{dr}(r_{+}) = 2K_{+}\kappa(r_{+}).\label{initial dOmega}
\end{align}

It is obvious that we can deduce equations $\eqref{req1}-\eqref{req4}$ from $\eqref{emkg1}-\eqref{A equ}$ by the above method. One may could not find the equations corresponding to $\eqref{emkg1}$ and $\eqref{emkg2}$ in $\eqref{req1}-\eqref{req4}$. However, these two ODE systems actually are equaivalent.
\begin{proposition}
The ODE system $\eqref{emkg1}-\eqref{A equ}$ is equivalent to the ODE system $\eqref{req1}-\eqref{req4}$.
\end{proposition}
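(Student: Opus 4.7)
The proposition asserts equivalence between the $s$-system \eqref{emkg1}--\eqref{A equ} and the $r$-system \eqref{req1}--\eqref{req4}. The forward direction has essentially been sketched in the text; the substantive work is to check that nothing is lost in the passage from six equations in $s$ to a first-order system plus a constraint in $r$.

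\textbf{Forward direction} ($s$-system $\Rightarrow$ $r$-system). Wherever $\frac{dr}{ds}\neq 0$ (which holds in the region of interest, between the event horizon and the conformal boundary, by the initial conditions \eqref{initial dOmega} and the sign of $\kappa$), we have $\frac{d}{ds}=-\Omega^{2}\kappa^{-1}\frac{d}{dr}$. Applying this to \eqref{emkg3} gives \eqref{req1} directly; applying it to \eqref{emkg4}, \eqref{Q equ}, \eqref{A equ} gives \eqref{req4}, \eqref{req5}, \eqref{req6}. The constraint \eqref{req2} is simply the algebraic definition of the Vaidya mass $\varpi$. To recover \eqref{req3}, I would take \eqref{Vaidya} (which is itself a consequence of \eqref{emkg1}, \eqref{emkg2}, \eqref{emkg3}) and divide by $\frac{dr}{ds}=-\Omega^{2}\kappa^{-1}$, which converts it into \eqref{req3} after a short cancellation using \eqref{req1}.

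\textbf{Reverse direction} ($r$-system $\Rightarrow$ $s$-system). Given a solution $(\kappa,\varpi,Q,A,\phi)$ of the $r$-system, I would define $\Omega^{2}$ by the constraint \eqref{req2} and then introduce the parameter $s$ by integrating $\frac{dr}{ds}=-\Omega^{2}\kappa^{-1}$ (fixing the additive constant by the asymptotic condition \eqref{asymptotic of Omega2}). The chain rule immediately reproduces \eqref{emkg3}, \eqref{emkg4}, \eqref{Q equ}, \eqref{A equ} from \eqref{req1}, \eqref{req4}, \eqref{req5}, \eqref{req6}. The real content is the recovery of the two wave equations. For \eqref{emkg1}, I compute $\frac{d^{2}r}{ds^{2}}=\Omega^{2}\kappa^{-1}\frac{d}{dr}(\Omega^{2}\kappa^{-1})$ and use the constraint \eqref{req2} together with \eqref{req3} and \eqref{req5} to differentiate $\Omega^{2}\kappa^{-2}$ in $r$; after collecting terms one reads off precisely the right-hand side of \eqref{emkg1}. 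For \eqref{emkg2}, I compute $\frac{d^{2}}{ds^{2}}\log\Omega^{2}=\Omega^{2}\kappa^{-1}\frac{d}{dr}\bigl(\Omega^{2}\kappa^{-1}\frac{d\log\Omega^{2}}{dr}\bigr)$, expanding using \eqref{req2} (to express $\Omega^{2}$ in terms of $\kappa,\varpi,Q$), \eqref{req1}, \eqref{req3}, and \eqref{req5} to eliminate all first derivatives.

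\textbf{Main obstacle.} The forward direction and the transport equations in the reverse direction are straightforward chain-rule computations. The only delicate step is verifying \eqref{emkg2}: the double differentiation of $\log\Omega^{2}$ produces many terms, and one must check that the Klein--Gordon kinetic term $2(\tfrac{d\phi}{ds})^{2}$, the Maxwell flux $4Q^{2}\Omega^{2}/r^{4}$, the mass-type contribution $2A^{2}q_{0}^{2}\phi^{2}$, and the geometric terms $-2\Omega^{2}/r^{2}+2(\frac{dr}{ds})^{2}/r^{2}$ all appear with the correct signs and coefficients. This is a lengthy but mechanical algebraic verification, and it is the only place where the combination of \emph{both} the constraint \eqref{req2} and the $\varpi$-transport \eqref{req3} is essential — confirming that no information has been lost in replacing the two second-order wave equations by one constraint plus one transport equation.
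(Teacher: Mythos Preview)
Your proposal is correct and follows essentially the same approach as the paper: the paper also notes that the forward direction is immediate and focuses on recovering \eqref{emkg1} and \eqref{emkg2} from the $r$-system, carrying out the computation for \eqref{emkg1} explicitly via $\frac{d}{ds}(-r\frac{dr}{ds})=-\Omega^{2}\kappa^{-1}\frac{d}{dr}(r\kappa\,\Omega^{2}\kappa^{-2})$ and invoking \eqref{req1}--\eqref{req3}, then asserting that \eqref{emkg2} follows by the same method. Your outline is in fact more detailed than the paper's own proof, which compresses the reverse direction into a single displayed chain of equalities.
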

\begin{proof}
We only need to use the ODE system $\eqref{req1}-\eqref{req6}$ to get the equations $\eqref{emkg1}$ and $\eqref{emkg2}$. Note that\begin{align*}
\frac{d}{ds}(-r\frac{dr}{ds}) &= \frac{d}{ds}(r\kappa^{-1}\Omega^{2}) = -\Omega^{2}\kappa^{-1}\frac{d}{dr}(r\kappa\kappa^{-2}\Omega^{2})\\&
 = -\Omega^{2}(\Omega^{2}\kappa^{-2})-\Omega^{2}r\kappa^{-1}\frac{d\kappa}{dr}\Omega^{2}\kappa^{-2}-\Omega^{2}r\frac{d}{dr}(\Omega^{2}\kappa^{-2})\\& = -\Omega^{2}(1-\frac{Q^{2}}{r^{2}})+r^{2}\Omega^{2}\bigl(\Lambda+\bigl(-\frac{\Lambda}{3}\bigr)\alpha\phi^{2}\bigr).
\end{align*}
The last identity is due to equations $\eqref{req1}-\eqref{req3}$. We can get the equation $\eqref{emkg2}$ by using the same method.
\end{proof}
\subsection{Boundary conditions}
\label{boundary}
In this section, we give the precise definition of various boundary condtions. Note in \cite{holzegel2014boundedness}, Holezegel and Warnick already gave a definition of Dirichlet, Neumann, and Robin boundary conditions. Since here we only consider the stationary spherically symmetric case, we can give the definition in a simplier way.
\begin{definition}
Let $\Delta = \sqrt{\frac{9}{4}+\alpha}$. We say a $C^{1}$ function $\phi$ on $\mathcal{M}$ obeys Dirichlet, Neumann or Robin boundary condition if the following hold:
\begin{enumerate}
  \item[(1)] Dirichlet:\begin{equation*}
  r^{\frac{3}{2}-\Delta}\phi\rightarrow 0,\quad r\rightarrow \infty.
  \end{equation*}
  \item[(2)] Neumann:\begin{equation*}
  r^{2\Delta+1}\frac{d}{dr}\left(r^{\frac{3}{2}-\Delta}\phi\right) = 0,\quad r\rightarrow\infty.
  \end{equation*}
  \item[(3)] Robin:\begin{equation*}
  r^{2\Delta+1}\frac{d}{dr}\left(r^{\frac{3}{2}-\Delta}\phi\right)+\beta r^{\frac{3}{2}-\Delta}\phi\rightarrow 0,\quad r\rightarrow\infty,
  \end{equation*}
  where $\beta$ is a real constant.
\end{enumerate}

\end{definition}

\subsection{Results on the Klein--Gordon equation on the Reissner--Nordström-AdS spacetime}
\label{pre of Reissner}
In this section, we recall the results about the stationary solutions to the Klein--Gordon equation on the Reissner--Nordström-AdS established in our companion work \cite{weihaozheng}.

Recall that for the Reissner--Nordström-AdS spacetime, we have\begin{equation}
Q\equiv e,\quad A  = A_{RN}= e\left(\frac{1}{r_{+}}-\frac{1}{r}\right),\quad\kappa \equiv 1.
\end{equation}
The Klein--Gordon equation on the Reissner--Nordström-AdS spacetime takes the form of\begin{equation}
g_{RN}^{\mu\nu}D_{\mu}D_{\nu}\phi = \bigl(-\frac{\Lambda}{3}\bigr)\alpha\phi.\label{linear klein gordon}
\end{equation}
The Klein--Gordon equation is parametrized by the parameters $(M,e,\Lambda,\alpha,q_{0})$. We call the parameters $(M,e,\Lambda,\alpha,q_{0})$ for the Klein--Gordon equation sub-extremal if the corresponding parameters $(M,e,\Lambda)$ are sub-extremal.

We can define the horizon temperature $T$ of the black hole as follows: \begin{equation*}
2K_{+}: = \frac{d\Omega_{RN}^{2}}{dr}(r_{+}).
\end{equation*} 
For sub-extremal parameters, it is not hard to see that $T$ is always positive. We call the parameters $(M,e,\Lambda)$ extremal if the function $\Omega_{RN}$ admits two same positive roots. For the extremal parameters, the horizon temperature $T = 0$. We have\begin{align*}
&1-\frac{2M}{r_{+}}+\frac{e^{2}}{r_{+}^{2}}+(-\frac{\Lambda}{3})r_{+}^{2} =0,\\&
\frac{2M}{r_{+}^{2}}-\frac{2e^{2}}{r_{+}^{3}}+2(-\frac{\Lambda}{3})r_{+} = 0.
\end{align*}
We can solve $M$ in terms of $(r_{+},\Lambda)$ from above two equations:
\begin{equation*}
M_{0}: =r_{+}+2\bigl(-\frac{\Lambda}{3}\bigr)r_{+}^{2}.
\end{equation*}

We recall the following parameters transformation lemma in \cite{weihaozheng}:
\begin{lemma}\textup{\cite{weihaozheng}}
\label{parameter trans}
The parameters transformation $(M,e,\Lambda,\alpha,q_{0})\rightarrow(M,r_{+},\Lambda,\alpha,q_{0})$ is a regular transformation. The new parameters $(M,r_{+},\Lambda,\alpha,q_{0})$ satisfy the sub-extremal condition if and only if\begin{align}
&M\geq M_{e = 0}: = \frac{r_{+}}{2}\left(1+\bigl(-\frac{\Lambda}{3}\bigr)r_{+}^{2}\right),\label{subextremal1}\\&
M<M_{0} :=r_{+}+2\bigl(-\frac{\Lambda}{3}\bigr)r_{+}^{2}.\label{subextremal2}
\end{align}
In other words, for fixed $(r_{+},\Lambda,\alpha,q_{0})$, the sub-extremal condition can be achieved by making $M\in[M_{e = 0},M_{0})$.
\end{lemma}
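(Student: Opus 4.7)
My strategy is to extract both parts of the lemma from the horizon equation $\Omega_{RN}^{2}(r_{+}) = 0$ by solving explicitly for $e^{2}$. The horizon relation
$$1 - \frac{2M}{r_{+}} + \frac{e^{2}}{r_{+}^{2}} + \bigl(-\tfrac{\Lambda}{3}\bigr) r_{+}^{2} = 0$$
is linear in $e^{2}$, yielding the polynomial expression
$$e^{2} = 2 M r_{+} - r_{+}^{2} - \bigl(-\tfrac{\Lambda}{3}\bigr) r_{+}^{4}.$$
This produces a smooth map $(M, r_{+}, \Lambda, \alpha, q_{0}) \mapsto (M, e, \Lambda, \alpha, q_{0})$ wherever the right-hand side is non-negative, and the non-negativity condition is precisely $M \geq \frac{r_{+}}{2}\bigl(1 + (-\tfrac{\Lambda}{3}) r_{+}^{2}\bigr) = M_{e = 0}$, giving the lower inequality in the lemma. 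For the inverse direction, given sub-extremal $(M, e, \Lambda)$, I would define $r_{+}$ as the largest positive root of $\Omega_{RN}^{2}$; sub-extremality forces this to be a simple root, so the implicit function theorem yields smoothness of $r_{+}$ in $(M, e, \Lambda)$, establishing the regularity of the transformation.

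To pin down the upper bound, I would study the auxiliary polynomial
$$P(r) := r^{2} \Omega_{RN}^{2}(r) = \bigl(-\tfrac{\Lambda}{3}\bigr) r^{4} + r^{2} - 2 M r + e^{2}.$$
Since $P''(r) = 12 \bigl(-\tfrac{\Lambda}{3}\bigr) r^{2} + 2 > 0$, the derivative $P'$ is strictly convex with $P'(0) = -2M < 0$ and $P'(r) \to +\infty$ as $r \to \infty$; hence $P'$ admits a unique positive zero $r_{\ast}$, at which $P$ attains its unique minimum on $(0, \infty)$. Together with $P(0) = e^{2} \geq 0$ and $P(r) \to +\infty$, this forces $\Omega_{RN}^{2}$ to have two distinct positive roots if and only if $P(r_{\ast}) < 0$, a double root if and only if $P(r_{\ast}) = 0$ (extremality), and no positive root otherwise. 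In the sub-extremal regime, $r_{+}$ is the larger root, so $r_{+} > r_{\ast}$ and the strict inequality $P'(r_{+}) > 0$ holds; rearranging this inequality and eliminating $e^{2}$ via the explicit formula above gives the strict upper bound $M < M_{0}$ claimed in the lemma.

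The main delicate point is the two endpoint behaviors. At $M = M_{e = 0}$ one has $e = 0$, i.e., Schwarzschild-AdS, whose function $\Omega_{RN}^{2}$ literally has only one positive root; including this boundary case within the ``sub-extremal'' family is a mild extension of the definition that needs to be flagged, and amounts to treating Schwarzschild-AdS as the limit of genuinely sub-extremal Reissner-Nordström-AdS spacetimes as $e \to 0$. At $M = M_{0}$ the horizon becomes a double root with $P'(r_{+}) = 0$, which breaks both the definition of the surface gravity $K_{+}$ appearing in Section \ref{pre of Reissner} and the implicit function theorem argument used for the inverse map, so the upper bound must be strict. Once these endpoints are handled, the proof reduces to the explicit formula for $e^{2}$ combined with the convexity analysis of $P'$.
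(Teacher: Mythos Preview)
Your argument is correct and complete. Note, however, that the paper does not give its own proof of this lemma: it is quoted from the companion work \cite{weihaozheng} and stated without proof, so there is no in-paper argument to compare against.

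Two small remarks. First, where you write ``$P''(r) = 12 \bigl(-\tfrac{\Lambda}{3}\bigr) r^{2} + 2 > 0$, the derivative $P'$ is strictly convex'': the inequality $P''>0$ says $P'$ is strictly \emph{increasing}, which is what you actually use to get the unique positive zero $r_{\ast}$. (It happens that $P'$ is also convex on $(0,\infty)$ since $P'''>0$ there, but that is not the reason your step works.) Second, your flag about the endpoint $M=M_{e=0}$ is well taken: under the paper's literal definition of sub-extremal (``two different positive roots''), Schwarzschild-AdS has only one, so including $M=M_{e=0}$ with a non-strict inequality is indeed a convention rather than a consequence; the paper adopts this convention tacitly. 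Your derivation of $M_{0}$ from $P'(r_{+})>0$ is exactly the computation the paper records just above the lemma (solving $\Omega_{RN}^{2}(r_{+})=0$ and $(\Omega_{RN}^{2})'(r_{+})=0$ simultaneously), and yields $M_{0}=r_{+}+2\bigl(-\tfrac{\Lambda}{3}\bigr)r_{+}^{3}$; the exponent in the displayed formula for $M_{0}$ in the paper appears to be a typo.
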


In view of the above lemma, we can parametrize the Reissner--Nordström-AdS spacetime and Klein--Gordon equation on it by $(M,r_{+},\Lambda)$ and $(M,r_{+},\Lambda,\alpha,q_{0})$ respectively.

Now we are ready to state the main theorem established in \cite{weihaozheng}.

\begin{theorem}\textup{\cite{weihaozheng}}
\label{linear theory}
For any sub-extremal parameters $(M,r_{+},\Lambda,\alpha,q_{0})$ satisfying the Breitenlohner--Freedman bound $-\frac{9}{4}<\alpha<0$, there exists a stationary spherically symmetric solution $\phi(r)$ to the Klein--Gordon equation \eqref{linear klein gordon} such that\begin{align}
&\lim_{r\rightarrow r_{+}}\phi(r) = 1,\\
&\phi(r) = C_{D}u_{D}(r)+C_{N}u_{N}(r).
\end{align}
$\{u_{D}(r),u_{N}(r)\}$ is the local basis of the solutions to the stationary spherically symmetric Klein--Gordon equation \eqref{linear klein gordon}, with the following asymptotic behaviors:\begin{align}
&\lim_{r\rightarrow\infty}r^{\frac{3}{2}+\Delta}u_{D}(r) = 1,\quad \lim_{r\rightarrow\infty}r^{\frac{5}{2}+\Delta}\frac{du_{D}}{dr} = -\frac{3}{2}-\Delta,\\
&\lim_{r\rightarrow\infty}r^{\frac{3}{2}-\Delta}u_{N}(r) = 1,\quad \lim_{r\rightarrow\infty}r^{\frac{5}{2}-\Delta}\frac{du_{N}}{dr} = -\frac{3}{2}+\Delta.
\end{align} 
In other words, $u_{D}$ is a function on $\mathcal{M}$ satisfying the Dirichlet boundary condition and $u_{N}$ is a function on $\mathcal{M}$ satisfying the Neumann boundary condition. Moreover, there exist a large number $N$ and a small number $\delta$, such that the following estimates for $\{u_{D},u_{N}\}$ hold\begin{align}
&\left\vert u_{D}^{\prime}u_{N}-u_{D}u_{N}^{\prime}\right\vert>Cr^{-4},\quad r>N,\label{infty wronskian}\\
&\left\vert u_{D}^{\prime}u_{N}-u_{D}u_{N}^{\prime}\right\vert>\frac{C}{r-r_{+}},\quad r_{+}<r<r_{+}+\delta.\label{horizon wronskian}
\end{align}
\end{theorem}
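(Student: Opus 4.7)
The plan is to reduce the theorem to a two-point boundary value problem for a linear second-order ODE on $(r_+,\infty)$ and to treat the two singular endpoints separately: Frobenius analysis at $r=r_+$ and regular-singular/Picard analysis at $r=\infty$. Specializing the system \eqref{req1}--\eqref{req4} to the Reissner--Nordstr\"om-AdS background where $\kappa\equiv 1$, $Q\equiv e$, and $A = e(r_+^{-1}-r^{-1})$, the Klein--Gordon equation \eqref{linear klein gordon} takes the Sturm--Liouville form
\[
\frac{d}{dr}\!\left(r^{2}\Omega_{RN}^{2}\phi'\right) = \left[\bigl(-\tfrac{\Lambda}{3}\bigr)\alpha\, r^{2} - \frac{q_{0}^{2}A^{2}r^{2}}{\Omega_{RN}^{2}}\right]\phi.
\]
Because $A$ vanishes linearly at the horizon and $\Omega_{RN}^{2}$ has a simple zero there, the quotient $A^{2}/\Omega_{RN}^{2}$ is continuous at $r_+$, so the only genuine singular points of the ODE are $r=r_+$ (regular singular) and $r=\infty$ (a regular singular point of the transformed equation under $y=1/r$).

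Near the horizon I would carry out a Frobenius analysis in $x=r-r_+$. Using $\Omega_{RN}^{2} = 2K_{+}x + O(x^{2})$ together with the regularity of $A^2/\Omega_{RN}^2$, the indicial polynomial reduces to $\rho^{2}=0$, so one solution branch is analytic in $x$ and the other carries a $\log x$ factor. Normalizing the analytic branch by $\phi(r_+)=1$ singles out a unique solution on a neighborhood of $r_+$, which standard ODE theory then extends uniquely to a $C^{2}$ solution on all of $(r_+,\infty)$ because the coefficients are smooth there.

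At infinity, I would pass to $y=1/r$; a direct balance gives the indicial exponents $\tfrac{3}{2}\pm\Delta$ in $y$, equivalently $\phi\sim r^{-\frac{3}{2}\mp\Delta}$, and the charge term $q_0^{2}A^{2}r^{2}/\Omega_{RN}^{2}$ contributes only lower-order corrections since $A\to e/r_+$ and $\Omega_{RN}^{2}\sim (-\Lambda/3)r^{2}$. To construct the rigorous basis $\{u_D,u_N\}$ with the precise prescribed asymptotics I would set up a fixed-point equation via variation of parameters against the leading-order Euler operator and iterate in a weighted Banach space that pins the desired decay rate. I expect the main obstacle to be the construction of $u_N$: because $u_N$ decays more slowly than $u_D$, a naive iteration leaks a fast-decaying $u_D$-component into the constructed solution, so the Green's kernel must be split carefully, typically integrating from $\infty$ for one mode and from a finite base point for the other, in order to suppress the undesired mode.

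Finally, the two-dimensional solution space near infinity is spanned by $u_D$ and $u_N$, which are linearly independent precisely under the strict Breitenlohner--Freedman bound $\alpha>-\tfrac{9}{4}$ (giving $\Delta>0$), so the horizon-regular solution admits the unique decomposition $\phi = C_D u_D + C_N u_N$. For the Wronskian bounds, the Sturm--Liouville form yields the conservation law
\[
r^{2}\Omega_{RN}^{2}\bigl(u_D'\,u_N - u_D\,u_N'\bigr) = \mathrm{const},
\]
and matching the leading asymptotics as $r\to\infty$ identifies this constant as $-2\Delta\bigl(-\tfrac{\Lambda}{3}\bigr)\neq 0$. Since $r^{2}\Omega_{RN}^{2}\sim\bigl(-\tfrac{\Lambda}{3}\bigr)r^{4}$ at infinity and $r^{2}\Omega_{RN}^{2}\sim r_+^{2}\cdot 2K_+(r-r_+)$ at the horizon, the estimates \eqref{infty wronskian} and \eqref{horizon wronskian} follow immediately.
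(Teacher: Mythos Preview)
The paper does not prove this theorem: it is imported verbatim from the companion work \cite{weihaozheng} and stated without proof, so there is no in-paper argument to compare against. Your outline is a correct and essentially the standard route to such a result: Frobenius at the regular-singular horizon (indicial roots $0,0$, picking the analytic branch), regular-singular analysis at $r=\infty$ with Euler indicial roots $-\tfrac{3}{2}\pm\Delta$, and Abel's identity $r^{2}\Omega_{RN}^{2}(u_D'u_N-u_Du_N')=\mathrm{const}$ to read off the Wronskian asymptotics \eqref{infty wronskian}--\eqref{horizon wronskian}. Your computation of the constant $-2\Delta\bigl(-\tfrac{\Lambda}{3}\bigr)$ is right, and your remark that constructing $u_N$ requires a split Green's kernel to avoid contaminating it with a $u_D$-component is the genuine subtlety in this kind of argument; nothing in the sketch is wrong or incomplete at the level of a proof proposal.
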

\begin{remark}
Although $u_{D}$ and $u_{N}$ form a basis of the solutions to the stationary spherically symmetric Klein--Gordon equation on $(r_{+},\infty)$, they may blow up at the event horizon.
\end{remark}

\begin{theorem}\textup{\cite{weihaozheng}}
\label{linear hair theorem}
For Klein--Gordon equation $\eqref{linear klein gordon}$ with negative mass $\alpha$, let $-\frac{9}{4}<\alpha<0$ if the Dirichlet boundary condition is imposed; let $-\frac{9}{4}<\alpha<-\frac{5}{4}$ if the Neumann boundary condition is imposed. We have\begin{enumerate}
\item[(1)]\textup{(Large charge case)} For any given sub-extremal parameters $(M_{b},r_{+},\Lambda)$ and $\alpha$ within the above range, there exists a $q_{1}>0$, such that for any $\vert q_{0}\vert>q_{1}$, we can find $M_{e = 0}<M_{c} = M_{c}(M_{b},r_{+},\Lambda,\alpha,q_{0})<M_{b}<M_{0}$, such that there exists a stationary solution $\phi$ to $\eqref{linear klein gordon}$ with parameters $(M_{c},r_{+},\Lambda,\alpha,q_{0})$. Moreover, $\phi$ is a bounded function and can be extended continuously to the event horizon $r = r_{+}$.
\item[(2)]\textup{(Fixed charge case)} For any given parameters $(r_{+},\Lambda,\alpha,q_{0})$ satisfying
\begin{align}
&-\frac{9}{4}<\alpha<-\frac{3}{2}+\frac{q_{0}^{2}}{2\bigl(-\frac{\Lambda}{3}\bigr)},\label{hardy assumption}\\
&\bigl(-\frac{\Lambda}{3}\bigr)r_{+}^{2}>R_{0},\label{large condition}
\end{align}
there exists $M_{e=0}<M_{c}<M_{0}$ such that there exists a stationary solution $\phi$ to $\eqref{linear klein gordon}$ with parameters $(M_{c},r_{+},\Lambda,\alpha,q_{0})$. Moreover, $\phi$ is a bounded function and can be extended continuously to the event horizon $r = r_{+}$.
\end{enumerate}
\end{theorem}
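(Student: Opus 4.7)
The strategy is a shooting (bifurcation) argument on the one‑parameter family of black‑hole masses, combined with a near‑horizon instability analysis at the extremal limit. First, I would construct the horizon‑regular one‑parameter family of solutions: the linear Klein--Gordon equation \eqref{linear klein gordon} has a regular-singular point at $r = r_+$, so Frobenius analysis together with the normalization $\phi(r_+) = 1$ singles out a unique solution $\phi_H(r;M,r_+,\Lambda,\alpha,q_0)$ that extends continuously to the horizon. Expanding $\phi_H$ at infinity in the basis $\{u_D,u_N\}$ of Theorem \ref{linear theory},
\begin{equation*}
\phi_H(r;M) \;=\; C_D(M)\,u_D(r) \;+\; C_N(M)\,u_N(r),
\end{equation*}
produces real coefficients $C_D(M),C_N(M)$ depending continuously on $M$ (at fixed $r_+,\Lambda,\alpha,q_0$) thanks to the Wronskian lower bounds \eqref{infty wronskian}--\eqref{horizon wronskian}. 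Since $u_D\sim r^{-\frac{3}{2}-\Delta}$ while $u_N\sim r^{-\frac{3}{2}+\Delta}$, the Dirichlet (resp.\ Neumann) boundary condition is satisfied precisely when $C_N(M)=0$ (resp.\ $C_D(M)=0$). The theorem therefore reduces to producing a zero of $M \mapsto C_N(M)$ (or $C_D(M)$) on the sub-extremal interval $[M_{e=0},M_0)$.

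Next, I would produce such a zero by exhibiting a sign change between the two endpoints. At $M = M_{e=0}$ the potential $A$ vanishes and \eqref{linear klein gordon} reduces to the uncharged Klein--Gordon equation on Schwarzschild-AdS; a standard energy identity combined with the Breitenlohner--Freedman bound (and, in the Neumann case, the additional restriction $\alpha<-\tfrac{5}{4}$) forces the only horizon-regular solution obeying the prescribed boundary condition to be $\phi \equiv 0$, contradicting $\phi_H(r_+)=1$. Hence $C_N(M_{e=0})\neq 0$ (resp.\ $C_D(M_{e=0})\neq 0$) with a definite sign. At the opposite end $M\to M_0$, the near-horizon geometry degenerates to $\mathrm{AdS}_2 \times S^2$, and \eqref{linear klein gordon} reduces on the throat to a scalar equation on $\mathrm{AdS}_2$ with effective mass-squared
\begin{equation*}
m_{\mathrm{eff}}^{2}\,\ell_2^{2} \;=\; \bigl(-\tfrac{\Lambda}{3}\bigr)\alpha\,\ell_2^2 \;-\; \frac{q_0^{2}e^{2}}{r_+^{4}}\,\ell_2^{4},
\end{equation*}
where $\ell_2$ is the $\mathrm{AdS}_2$ radius at extremality. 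A direct algebraic manipulation shows that the hypotheses \eqref{hardy assumption}--\eqref{large condition} are precisely the statement that the near-horizon Breitenlohner--Freedman bound $m_{\mathrm{eff}}^{2}\ell_2^{2}\geq -\tfrac{1}{4}$ is violated; in the large-$|q_0|$ regime the charge term dominates and produces the same violation for any fixed sub-extremal $(M_b,r_+,\Lambda)$. The resulting growing AdS${}_2$ throat modes, once matched outward through the exterior, flip the sign of $C_N(M)$ (resp.\ $C_D(M)$), so the intermediate value theorem yields the desired $M_c$.

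The main obstacle will be the matched asymptotic analysis near the extremal limit. As $M\to M_0$ the surface gravity $K_+$ tends to zero, the tortoise coordinate develops an infinite throat, and $\phi_H$ fails to extend continuously in $M$ in the uniform topology, so standard ODE continuity arguments must be supplemented by quantitative control of the throat region. Concretely, one must (i) solve the near-horizon $\mathrm{AdS}_2$ equation when the effective BF bound is violated and track both the oscillating AdS${}_2$ modes; (ii) propagate them through an intermediate region to infinity in the basis $\{u_D,u_N\}$ while keeping the Wronskian bounds of Theorem \ref{linear theory} uniform in the extremality parameter; and (iii) read off the induced sign of $C_N(M)$ (resp.\ $C_D(M)$) in the limit. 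The large-$|q_0|$ case in part (1) is conceptually simpler, as the growing-mode existence for $|q_0|$ large can be obtained directly by a variational/Rellich-type argument on a compactly supported subspace at the fixed parameters $(M_b,r_+,\Lambda)$, and the stationary solution is then recovered via a continuous deformation of the unstable eigenvalue across the imaginary axis as $M$ descends from $M_b$ toward $M_{e=0}$.
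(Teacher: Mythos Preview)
Your proposal is plausible but follows a different route from the paper (and its companion \cite{weihaozheng}). The paper's proof is purely variational: one writes down the energy functional
\[
L_M[g]=\int_{r_+}^{\infty}r^{2}\Omega_{RN,M}^{2}\Bigl(\frac{dg}{dr}\Bigr)^{2}+\Bigl(\bigl(-\tfrac{\Lambda}{3}\bigr)\alpha-\tfrac{q_0^{2}A_{RN}^{2}}{\Omega_{RN,M}^{2}}\Bigr)r^{2}g^{2}\,dr,
\]
exhibits an explicit compactly supported test function $g$ with $L_{M_b}[g]<0$ (resp.\ $L_{M_0}[g]<0$) under the large-charge hypothesis (resp.\ under \eqref{hardy assumption}--\eqref{large condition}), and then tracks the bottom eigenvalue $-\lambda_M$ of the associated operator as $M$ decreases from $M_b$ (or $M_0$) toward $M_{e=0}$, using monotonicity in $M$ and a Hardy-type positivity at $M_{e=0}$ to force $\lambda_{M_c}=0$ at some intermediate $M_c=\inf\mathcal{A}_s$. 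No near-horizon $\mathrm{AdS}_2$ matching or extremal limit is ever taken; the conditions \eqref{hardy assumption}--\eqref{large condition} arise \emph{algebraically} from optimizing the test function, not from the throat BF bound. Your secondary remark for part~(1)---variational negative-energy bound state at $M_b$, then continuation in $M$---is in fact exactly the paper's argument, so you already have the right idea there.

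The shooting formulation you propose for part~(2) is conceptually sound and gives more physical insight, but there is a real risk in the step ``a direct algebraic manipulation shows that \eqref{hardy assumption}--\eqref{large condition} are precisely the $\mathrm{AdS}_2$ BF violation''. The quantity $R_0$ is defined as a root of the specific quadratic \eqref{quadratic}, and while the physics literature motivates it via the throat BF bound, the paper explicitly notes this is only a heuristic. If the matched-asymptotics threshold does not coincide exactly with $R_0$, your argument yields a theorem with different (possibly weaker) hypotheses than stated. Moreover, as $M\to M_0$ the oscillating $\mathrm{AdS}_2$ modes typically force $C_N(M)$ to change sign \emph{infinitely often}, not once; this still gives zeros by the intermediate value theorem, but controlling which zero is the $M_c$ of the statement (in particular that $M_c>M_{e=0}$) requires the same Hardy positivity input the variational proof uses. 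The variational route sidesteps both issues: it never approaches extremality analytically, and the test-function construction delivers \eqref{hardy assumption}--\eqref{large condition} on the nose.
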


\section{Main Results}
\label{Main results}
Now, we are ready to state the main theorem in this paper.
\begin{theorem}
\label{main theorem}
Imposing the Dirichlet (respectively Neumann) boundary condition for the scalar field $\phi$, letting $C_{DN} = 0$ ($-\frac{5}{4}$ for Neumann boundary conditions respectively), and using $(M,r_{+},\Lambda,\alpha,q_{0})$ as parameters, then we have the following two results:\begin{enumerate}
\item[(a)]\textup{(Large charge case)}Assume $(M_{b},r_{+},\Lambda)$ are given sub-extremal parameters and $\alpha$ is a fixed Klein--Gordon mass. Then there exists $q^{\max} = q^{\max}(M_{b},r_{+},\Lambda,\alpha)$, such that for any $\vert q_{0}\vert>q^{\max}$, there exists $\epsilon_{0}>0$ and $M_{e = 0}<M_{c}(\epsilon) = M_{c}(M_{b},r_{+},\Lambda,\alpha,q_{0},\epsilon)<M_{b}$ for all $0\leq\epsilon<\epsilon_{0}$, such that there exist stationary spherically symmetric black hole solutions $(\mathcal{M},g_{\epsilon},\phi_{\epsilon})$ to \eqref{original 1}-\eqref{original 4} with $\phi_{\epsilon}(r_{+}) = \epsilon$, bifurcating off the Reissner--Nordström-AdS $(\mathcal{M},g_{0},\phi_{0}\equiv0)$ with parameters $(M_{c}(0),r_{+},\Lambda)$.

\item[(b)]\textup{(Fixed charge condition)} For any fixed parameters $(r_{+},\Lambda,\alpha,q_{0})$ satisfying \begin{align}
&-\frac{9}{4}<\alpha<\min\left\{-\frac{3}{2}+\frac{q_{0}^{2}}{2\bigl(-\frac{\Lambda}{3}\bigr)},0\right\},\label{large1}\\
&\bigl(-\frac{\Lambda}{3}\bigr)r_{+}^{2}>R_{0}\label{large2},
\end{align}
there exists sub-extremal parameters $M_{e = 0}<M_{c}(\epsilon) = M_{c}(r_{+},\Lambda,\alpha,q_{0},\epsilon)<M_{0}$ such that there exist stationary spherically symmetric black hole solutions $(\mathcal{M},g_{\epsilon},\phi_{\epsilon})$ to \eqref{original 1}-\eqref{original 4} with $\phi_{\epsilon}(r_{+}) = \epsilon$, bifurcating off the Reissner--Nordström-AdS $(\mathcal{M},g_{0},\phi_{0}\equiv0)$ with parameters $(M_{c}(0),r_{+},\Lambda)$.
\end{enumerate}
Moreover, the bifurcating black hole solutions $(\mathcal{M},g_{\epsilon},\phi_{\epsilon})$ satisfy:\begin{enumerate}
\item \label{hairy1} For each $0\leq\epsilon<\epsilon_{0}$, $(\mathcal{M},g_{\epsilon},\phi_{\epsilon})$ are black hole solutions to the Einstein--Maxwell--(charged) Klein--Gordon equations under given boundary condition with the event horizon $\{r = r_{+}\}$. On the event horizon, $\phi_{\epsilon}$, $Q_{\epsilon}$, and $\varpi_{\epsilon}$ satisfy the conditions $\phi_{\epsilon}(r_{+}) = \epsilon$, $Q_{\epsilon}(r_{+}) = e(M_{c}(\epsilon),r_{+},\Lambda)$, and $\varpi_{\epsilon} = M_{c}(\epsilon)$.
\item \label{hairy3} For $\Omega_{\epsilon}^{2}$, we have the estimates\begin{align}
\left\vert\Omega_{\epsilon}^{2}\kappa_{\epsilon}^{-2}-\Omega_{RN}^{2}\right\vert&\lesssim\epsilon^{2},\\
\left\vert\frac{d\Omega_{\epsilon}^{2}\kappa_{\epsilon}^{-2}}{dr}-\frac{d\Omega_{RN}^{2}}{dr}\right\vert&\lesssim\epsilon^{2},
\end{align}
where $\Omega_{RN}^{2}$ is the Reissner--Nordström-AdS metric with parameters $(M_{c}(0),r_{+},\Lambda)$;
\item \label{hairy4} Moreover, the family is differentiable with respect to $\epsilon$ at $\epsilon = 0$, and $\lim_{\epsilon\rightarrow0}\epsilon^{-1}\phi_{\epsilon} = \hat{\phi}$, where $\hat{\phi}$ is a non-zero stationary solution to the (charged) Klein--Gordon equation on $(\mathcal{M},g_{0})$.
\end{enumerate}
\end{theorem}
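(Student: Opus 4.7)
The plan is to formulate the problem as a one-parameter shooting problem for the ODE system \eqref{req1}-\eqref{req4} and bifurcate from the linear hair solution of Theorem \ref{linear hair theorem}. For each sub-extremal $(M, r_{+}, \Lambda)$ and $\epsilon$, the initial data \eqref{initial varpi}-\eqref{initial dOmega} on the event horizon are determined (with $e$ fixed by the horizon condition). Integrating outward gives a local solution; the aim is to extend it to $r = \infty$ and read off the asymptotic coefficients $\mathcal{C}_{D}(M,\epsilon), \mathcal{C}_{N}(M,\epsilon)$ of $\phi$ in the basis $\{u_{D}, u_{N}\}$ from Theorem \ref{linear theory}. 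The boundary condition on $\phi$ then reduces to the single scalar equation $\mathcal{C}_{N}(M,\epsilon) = 0$ (Dirichlet) or $\mathcal{C}_{D}(M,\epsilon) = 0$ (Neumann), to be solved for $M = M_{c}(\epsilon)$ bifurcating from the value $M_{c}(0)$ at which linear hair exists.

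To make the $\epsilon \to 0$ limit well-defined, I would rescale $\phi = \epsilon \tilde\phi$ with $\tilde\phi(r_{+}) = 1$; since $T_{\mu\nu}^{KG}$ is quadratic in $\phi$, the geometric backreaction is $O(\epsilon^{2})$, so one expects $\varpi = M + O(\epsilon^{2})$, $Q = e + O(\epsilon^{2})$, $\Omega^{2}\kappa^{-2} = \Omega_{RN}^{2} + O(\epsilon^{2})$, and a corresponding expansion for $A$. At $\epsilon = 0$, the geometric variables recover RN-AdS with parameters $(M, r_{+}, \Lambda)$ and $\tilde\phi$ solves the linear Klein--Gordon equation \eqref{linear klein gordon} on that background with $\tilde\phi(r_{+}) = 1$. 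A contraction mapping argument on a weighted Banach space over $[r_{+}, \infty)$---with weights engineered so that the Wronskian lower bounds \eqref{infty wronskian} and \eqref{horizon wronskian} make the linearized operator uniformly invertible---should then yield a unique solution of the full system depending $C^{1}$ on $(M, \epsilon)$ in a neighborhood of $(M_{c}(0), 0)$, together with well-defined smooth asymptotic coefficients $\mathcal{C}_{D}(M, \epsilon), \mathcal{C}_{N}(M, \epsilon)$.

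The bifurcation step is where the real obstacle lies. Theorem \ref{linear hair theorem} provides $M_{c}(0) \in (M_{e=0}, M_{0})$ at which $\mathcal{C}_{\text{unwanted}}(M_{c}(0), 0) = 0$; what must be established is the transversality
\begin{equation*}
\partial_{M}\mathcal{C}_{\text{unwanted}}(M_{c}(0), 0) \neq 0,
\end{equation*}
so that the implicit function theorem produces $M_{c}(\epsilon)$ solving $\mathcal{C}_{\text{unwanted}}(M_{c}(\epsilon), \epsilon) = 0$ for all small $\epsilon$. I would attempt to prove this by differentiating the stationary linear Klein--Gordon equation \eqref{linear klein gordon} in $M$ and pairing against the linear hair $\hat\phi$, producing an integral formula for $\partial_{M}\mathcal{C}_{\text{unwanted}}$ whose sign is controlled by the definiteness of $\hat\phi^{2}$ in the resulting integrand (or, failing that, by the monotonicity structure employed in \cite{weihaozheng} to distinguish $M_{c}(0)$).

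Once $M_{c}(\epsilon)$ is obtained, the claimed properties follow from the structure of the construction: property \ref{hairy1} is built into the choice of initial data at the horizon; property \ref{hairy3} is the $O(\epsilon^{2})$ backreaction estimate inherent in the contraction; property \ref{hairy4} is the first-order Taylor expansion $\phi_{\epsilon} = \epsilon \hat\phi + o(\epsilon)$ furnished by the IFT, with $\hat\phi$ the linear hair. Finally, the $\phi \to -\phi$ symmetry of the ODE system forces $\mathcal{C}_{\text{unwanted}}$ to be even in $\epsilon$, so $M_{c}(\epsilon) = M_{c}(0) + O(\epsilon^{2})$ and the bifurcation curve is genuinely transverse to the Reissner--Nordstr\"om-AdS family.
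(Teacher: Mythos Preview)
Your overall architecture---shooting from the horizon, extracting asymptotic coefficients in the $\{u_D,u_N\}$ basis, and solving $\mathcal{C}_{\text{unwanted}}(M,\epsilon)=0$ for $M_c(\epsilon)$ by the implicit function theorem---is a natural strategy, but it diverges from the paper's in exactly the place you flag as ``the real obstacle'': the transversality $\partial_M\mathcal{C}_{\text{unwanted}}(M_c(0),0)\neq 0$. The paper never proves or uses this. Instead of first constructing solutions for all $(M,\epsilon)$ and then selecting $M$ by IFT, the paper builds the mass selection \emph{into} the contraction: the map $T$ sends $f\in X_\delta$ to the solution $\phi$ of the scalar equation on the background generated by $f$, with the horizon mass $\varpi(r_+)=M_c^f$ chosen anew at each step so that $\phi$ already satisfies the Dirichlet condition. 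This $M_c^f$ is not obtained by inverting a derivative but by the variational characterization from the companion paper---$M_c^f$ is the infimum of masses for which the quadratic form $\mathcal{L}_{f,M}$ admits a negative energy state---and its Lipschitz dependence on $f$ (and the estimate $|M_c^f-M_c|\le C\|f\|_X^2$) follows from comparing energy functionals, not from any nondegeneracy of $\mathcal{C}_N$ in $M$.

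What this buys the paper is robustness: the argument goes through even if the linear zero of $\mathcal{C}_N(\cdot,0)$ at $M_c(0)$ were not simple, because the variational structure gives a monotone, stable selection rule rather than a local inverse. Your IFT route would be cleaner and would give the $C^1$ curve $M_c(\epsilon)$ directly, but it stands or falls on the transversality step, which you only sketch. The integral-formula idea you describe (differentiate in $M$, pair against $\hat\phi$) is plausible, yet the resulting integrand involves $\partial_M\Omega_{RN}^2$ and $\partial_M(A_{RN}^2/\Omega_{RN}^2)$ terms of competing signs, so definiteness is not automatic; the paper's route sidesteps this entirely. If you want to pursue your approach, the safest path is to extract the transversality from the same monotonicity in $M$ of the energy functional that the paper exploits---that is precisely the ``monotonicity structure employed in \cite{weihaozheng}'' you allude to.
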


One of the key steps toward proving Theorem \ref{main theorem} is to first show that there exists a boundary condition such that hairy black hole solutions under this boundary condition exist. More specifically, we have the following theorem:
\begin{theorem}
\label{thm:linear approximation}
For any given sub-extremal parameters $(M,e,\Lambda)$, charge $q_{0}\in\mathbb{R}$, and $-\frac{9}{4}<\alpha<0$, there exists $\epsilon_{0}>0$ small enough, for any $0\leq\epsilon<\epsilon_{0}$, there exists a non-zero pair $\left(\gamma(\epsilon),\beta(\epsilon)\right)$ such that if we impose the following boundary condition for $\phi$
\begin{equation}
\gamma(\epsilon) r^{2\Delta+1}\frac{d}{dr}\left(r^{\frac{3}{2}-\Delta}\phi\right)+\beta(\epsilon) r^{\frac{3}{2}-\Delta}\phi\rightarrow 0,\quad r\rightarrow\infty,\label{robin boundary condition}
\end{equation}
there exists a one-parameter family of static spherically symmetric spacetime $(\mathcal{M},g_{\epsilon},\phi_{\epsilon})$ satisfying \eqref{hairy1}-\eqref{hairy4} in Theorem \ref{main theorem}.
\end{theorem}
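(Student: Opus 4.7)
The strategy is a shooting argument: for each small $\epsilon \geq 0$, solve the coupled ODE system \eqref{req1}-\eqref{req6} as a Cauchy problem at the event horizon $r = r_{+}$ with the regular initial data \eqref{initial varpi}-\eqref{initial dOmega}, holding the parameters $(M,e,\Lambda,\alpha,q_{0})$ fixed. No condition is imposed at infinity; instead, $(\gamma(\epsilon),\beta(\epsilon))$ are defined a posteriori from the asymptotic behavior of $\phi_{\epsilon}$ so that \eqref{robin boundary condition} is automatically satisfied. The freedom to choose two real coefficients $(\gamma,\beta)$ absorbs the fact that a generic orbit shot from the horizon will satisfy neither pure Dirichlet nor pure Neumann.

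The analytical core is to prove global existence on $[r_{+},\infty)$ together with smooth dependence on $\epsilon$. Local existence near $r_{+}$ is standard. To extend to conformal infinity, I would set up a contraction mapping in a weighted Banach space that enforces $|\phi| \lesssim r^{-\frac{3}{2}+\Delta}$ and simultaneously controls the metric deviations. Since the right-hand sides of \eqref{req1}, \eqref{req3}, and \eqref{req5} are quadratic in $\phi$ and its derivative, the deviations of $(\kappa_{\epsilon},\varpi_{\epsilon},Q_{\epsilon})$ from their RN-AdS values $(1,M,e)$ are $O(\epsilon^{2})$, so that \eqref{req4} reduces to a perturbation of the linear Klein--Gordon equation on RN-AdS whose fundamental system $\{u_{D},u_{N}\}$ is supplied by Theorem \ref{linear theory}. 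Iterating yields the family $(\mathcal{M},g_{\epsilon},\phi_{\epsilon})$ on $[r_{+},\infty)$, smooth in $\epsilon$ and satisfying $(g_{0},\phi_{0}) = (g_{RN},0)$.

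Given the family, I decompose at infinity
\begin{equation*}
\phi_{\epsilon}(r) = C_{D}(\epsilon) u_{D}(r) + C_{N}(\epsilon) u_{N}(r) + (\text{subleading}),
\end{equation*}
extracting $(C_{D}(\epsilon),C_{N}(\epsilon))$ via the Wronskian bound \eqref{infty wronskian}. Since $\phi_{0}\equiv 0$, the rescaled coefficients $(c_{D}(\epsilon),c_{N}(\epsilon)) := (\epsilon^{-1}C_{D}(\epsilon), \epsilon^{-1}C_{N}(\epsilon))$ extend continuously to $\epsilon = 0$, with limiting values equal to the Dirichlet and Neumann coefficients of the non-trivial linear stationary solution $\hat{\phi}$ of Theorem \ref{linear theory} normalized so that $\hat{\phi}(r_{+}) = 1$; in particular $(c_{D}(0),c_{N}(0))\neq(0,0)$. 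Setting
\begin{equation*}
(\gamma(\epsilon), \beta(\epsilon)) := \bigl(c_{N}(\epsilon),\, 2\Delta \cdot c_{D}(\epsilon)\bigr)
\end{equation*}
and substituting the asymptotic forms $u_{D}\sim r^{-\frac{3}{2}-\Delta}$, $u_{N}\sim r^{-\frac{3}{2}+\Delta}$ into \eqref{robin boundary condition} verifies the boundary condition, with $(\gamma(\epsilon),\beta(\epsilon))\neq(0,0)$ for $\epsilon$ small by continuity. Properties \eqref{hairy1}-\eqref{hairy4} then follow from the construction: \eqref{hairy1} from the imposed horizon data, \eqref{hairy3} from the $O(\epsilon^{2})$-estimates on $(\varpi_{\epsilon}-M,Q_{\epsilon}-e)$ via \eqref{req2}, and \eqref{hairy4} by taking the derivative of the family at $\epsilon = 0$.

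The main obstacle I expect is the global-in-$r$ analysis: one needs uniform control over the nonlinear coupling between the metric/Maxwell sector and the scalar field all the way to conformal infinity, and critically, one must ensure that the backreaction does not mix the two asymptotic modes $u_{D}, u_{N}$ in a singular way. These modes differ only by the factor $r^{-2\Delta}$ at infinity, so the coefficients $(C_{D}(\epsilon),C_{N}(\epsilon))$ are sensitive to the precise decay rates of the nonlinear corrections. The Wronskian bounds \eqref{infty wronskian}-\eqref{horizon wronskian} from \cite{weihaozheng} are the pivotal inputs making the coefficient extraction robust; without them the shooting map $\epsilon \mapsto (\gamma(\epsilon), \beta(\epsilon))$ might fail to be continuous at $\epsilon = 0$, which would prevent the linearization \eqref{hairy4} from identifying $\lim_{\epsilon\to 0} \epsilon^{-1}\phi_{\epsilon}$ with $\hat{\phi}$.
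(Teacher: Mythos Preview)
Your proposal is essentially correct and matches the paper's approach: the paper shoots from the horizon data \eqref{initial mass charge phi}--\eqref{initial pdOmega}, exploits the quadratic-in-$\phi$ structure of \eqref{req1}, \eqref{req3}, \eqref{req5} to get $O(\epsilon^{2})$ metric/Maxwell perturbations (Lemmas \ref{lem Kap}--\ref{varpi}), and then decomposes $\phi(r)=C_{D}(r)u_{D}(r)+C_{N}(r)u_{N}(r)$ via variation of parameters, using the Wronskian bounds \eqref{infty wronskian}--\eqref{horizon wronskian} exactly as you anticipate to control the coefficient integrals. The only technical difference is that the paper closes a bootstrap (continuity) argument in the weighted norm \eqref{definition of norm} rather than setting up a contraction mapping, and it tracks the $r$-dependent coefficients $C_{D}(r),C_{N}(r)$ throughout $[r_{+},\infty)$ rather than extracting asymptotic constants; the definition of $(\gamma(\epsilon),\beta(\epsilon))$ is left implicit in the paper, whereas you spell it out.
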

\begin{remark}
One can view the equation \eqref{robin boundary condition} as a linear combination of Dirichlet and Neumann boundary conditions. The proof of Theorem \ref{thm:linear approximation} is a standard application of linear approximation. However, as one can see from \eqref{robin boundary condition}, it is highly non-trival to show that for some parameters, $\beta$ actually vanishes.  
\end{remark}

\section{Proof of Theorem \ref{thm:linear approximation}}
\label{sec:proof of linear theorem}
In this section, we prove Theorem \ref{thm:linear approximation}. 
\paragraph{Outline of the proof of Theorem \ref{thm:linear approximation}}
By the discussion in Section \ref{Pre}, we have established the validity of the coordinate transformation $s\rightarrow r$ near the event horizon. Recall the compatible initial data on the event horizon \eqref{initial varpi}-\eqref{initial dOmega}:
\begin{align}
&\varpi(r_{+}) = M,\quad Q(r_{+}) = e,\quad \phi(r_{+}) = \epsilon\label{initial mass charge phi}\\
&1-\frac{2M}{r_{+}}+\frac{e^{2}}{r_{+}^{2}}+\bigl(-\frac{\Lambda}{3}\bigr)r_{+}^{2} = 0,\label{initial pOmega}\\
&\kappa^{-1}(r_{+}) = 1-\frac{\bigl(-\frac{\Lambda}{3}\bigr)\alpha r_{+}}{2K_{+}}\epsilon^{2},\label{initial pkappa}\\
&\frac{d\phi}{dr}(r_{+}) = \frac{\bigl(-\frac{\Lambda}{3}\bigr)\alpha\epsilon}{2K_{+}}\times\kappa(r_{+}),\label{initial dphi}\\&
\lim_{r\rightarrow r_{+}}\frac{A}{\Omega^{2}} = \frac{e}{2K_{+}r_{+}^{2}},\label{initial A}\\&
\frac{d\Omega^{2}}{dr}(r_{+}) = 2K_{+}\kappa(r_{+}).\label{initial pdOmega}
\end{align}
By constructing a proper norm (see \eqref{definition of norm}) and using a bootstrap argument, we can show the validity of transformation for all $s$ and extend the solution all the way to the spacetime infinity.

\subsection{Initial data and bootstrap assumption}
Consider the norm \begin{equation}
\Vert f\Vert_{N}(R) = \left\Vert r^{\frac{3}{2}-\Delta}f\right\Vert_{L^{\infty}[r_{+},R)}+\left\Vert r^{\frac{5}{2}-\Delta}\frac{df}{dr}\right\Vert_{L^{\infty}[r_{+},R)}+\left\Vert r^{\frac{3}{2}-\Delta}\Omega_{RN}^{2}\frac{d^{2}f}{dr^{2}}\right\Vert_{L^{\infty}[r_{+},R)},
\label{definition of norm}
\end{equation}
where $\Omega_{RN}^{2}$ takes the form of\begin{equation*}
\Omega_{RN}^{2} = 1-\frac{2M}{r}+\frac{e^{2}}{r^{2}}+\bigl(-\frac{\Lambda}{3}\bigr)r^{2}.
\end{equation*}
From the initial conditions \eqref{initial mass charge phi}-\eqref{initial pdOmega}, we have\begin{equation*}
\Vert \phi\Vert_{N}(r_{+}) = B\epsilon:=\left(r_{+}^{\frac{3}{2}-\Delta}+r_{+}^{\frac{5}{2}-\Delta}\frac{\bigl(-\frac{\Lambda}{3}\bigr)\alpha\kappa(r_{+})}{2K_{+}}\right)\epsilon.
\end{equation*}
Let $R>r_{+}$ be the largest real number such that for all $r\in(r_{+},R)$, we have\begin{align}
&\kappa(r_{+})\leq\kappa\leq \kappa(r_{+})+\epsilon,\label{b1}\\
&\Vert\phi\Vert_{N}(r)\leq(B+B_{0})\epsilon,\label{b2}\\
&\left\vert\frac{Ar^{2}}{\Omega^{2}}\right\vert\leq \frac{e}{2K_{+}}+\epsilon,\label{b3}
\end{align}
where $B$ and $\epsilon$ will be determined later. We also choose $\epsilon$ small enough such that we have $\kappa(r)\leq 2\kappa(r_{+})$ and $\frac{Ar^{2}}{\Omega^{2}}\leq \frac{e}{K_{+}}$ on $[r_{+},A]$.

In the later discussion, we use notation $C$ to denote constant depending only on the parameters $(M,e,r_{+},\alpha,q_{0})$.
\subsection{Estimate for $\kappa$}
We have the following estimate for $\kappa$.
\begin{lemma}
\label{lem Kap}
Under the bootstrap assumption $\eqref{b1}-\eqref{b3}$, $\kappa$ is increasing on $[r_{+},R]$ and $\kappa$ satisfies the estimate\begin{equation}
\kappa(r_{+})\leq\kappa(r)\leq \kappa(r_{+})+C(B+B_{0})^{2}\epsilon^{2},\quad \forall r_{+}\leq r\leq R.\label{kappa}
\end{equation}
\end{lemma}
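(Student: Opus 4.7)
The plan is to start from the ODE for $\kappa$, namely \eqref{req1},
\begin{equation*}
\kappa^{-1}\frac{d\kappa}{dr} = r\Bigl(\frac{d\phi}{dr}\Bigr)^{2}+q_{0}^{2}r\kappa^{2}\Bigl(\frac{A}{\Omega^{2}}\Bigr)^{2}\phi^{2}.
\end{equation*}
Since the right-hand side is manifestly non-negative, $\kappa$ is non-decreasing on $[r_{+},R]$, which immediately gives the lower bound. For the upper bound, I would integrate this relation from $r_{+}$ to $r$ to obtain
\begin{equation*}
\log\frac{\kappa(r)}{\kappa(r_{+})}=\int_{r_{+}}^{r}r'\Bigl(\frac{d\phi}{dr'}\Bigr)^{2}dr'+q_{0}^{2}\int_{r_{+}}^{r}r'\kappa^{2}\Bigl(\frac{A}{\Omega^{2}}\Bigr)^{2}\phi^{2}\,dr',
\end{equation*}
and bound the two integrals using the bootstrap assumptions \eqref{b1}--\eqref{b3}.

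The key observation is that the weighted norm $\Vert\phi\Vert_{N}$ built in \eqref{definition of norm} converts the bootstrap bound on $\phi$ into the pointwise estimates $|\phi(r)|\leq (B+B_{0})\epsilon\, r^{\Delta-\frac{3}{2}}$ and $|\phi'(r)|\leq (B+B_{0})\epsilon\, r^{\Delta-\frac{5}{2}}$. Consequently
\begin{equation*}
r(\phi')^{2}\leq (B+B_{0})^{2}\epsilon^{2}\, r^{2\Delta-4},
\end{equation*}
and since the Breitenlohner--Freedman range $-\frac{9}{4}<\alpha<0$ forces $0<\Delta<\frac{3}{2}$, the exponent satisfies $2\Delta-4<-1$, so the integral converges uniformly in $r\in[r_{+},R]$ and is $O((B+B_{0})^{2}\epsilon^{2})$. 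For the second integral, bootstrap \eqref{b3} gives $|A/\Omega^{2}|\leq C/r^{2}$, while $\kappa\leq 2\kappa(r_{+})$ is controlled by \eqref{b1}; the integrand is then bounded by $C(B+B_{0})^{2}\epsilon^{2}\, r^{2\Delta-6}$, which is even more strongly integrable. Near the horizon there is no singularity either, since $\phi'$ is $C^{0}$ up to $r_{+}$ by \eqref{initial dphi} and $A/\Omega^{2}$ is bounded there by \eqref{initial A}.

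Combining, one obtains $\log(\kappa(r)/\kappa(r_{+}))\leq C(B+B_{0})^{2}\epsilon^{2}$ uniformly on $[r_{+},R]$, and for $\epsilon$ small enough the elementary estimate $e^{x}-1\leq 2x$ for $0\leq x\leq 1$ yields
\begin{equation*}
\kappa(r)\leq\kappa(r_{+})\bigl(1+C(B+B_{0})^{2}\epsilon^{2}\bigr)\leq\kappa(r_{+})+C(B+B_{0})^{2}\epsilon^{2},
\end{equation*}
after absorbing $\kappa(r_{+})$, which is bounded in terms of the fixed parameters by \eqref{initial pkappa}, into the constant $C$. I do not foresee any serious obstacle here: this lemma is essentially a routine bootstrap closure for the Raychaudhuri-type equation, and the only place where the Breitenlohner--Freedman bound is genuinely needed is in guaranteeing the integrability of $r^{2\Delta-4}$ at spatial infinity, which is already built into the hypotheses.
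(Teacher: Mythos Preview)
Your proposal is correct and follows essentially the same route as the paper: integrate \eqref{req1} to get $\log(\kappa(r)/\kappa(r_{+}))$, bound the resulting integrals using the bootstrap assumptions \eqref{b1}--\eqref{b3} together with the pointwise decay of $\phi$ and $\phi'$ coming from the norm \eqref{definition of norm}, and then exponentiate. The paper's proof is slightly terser (it does not spell out the role of $\Delta<\tfrac{3}{2}$ for integrability or the horizon regularity), but the argument is identical.
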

\begin{proof}
Integrating the equation $\eqref{req1}$ for $\kappa$, we have\begin{equation}
\begin{aligned}
\log\kappa(r)-\log\kappa(r_{+}) &= \int_{r_{+}}^{r}\bar{r}\left(\frac{d\phi}{dr}(\bar{r})\right)^{2}+q_{0}^{2}\kappa^{2}\bar{r}\left(\frac{A}{\Omega^{2}}\right)^{2}\phi^{2}\mathrm{d}\bar{r}\\&\leq C(B+B_{0})^{2}\epsilon^{2}\int_{r_{+}}^{r}\bar{r}^{-4+2\Delta}\mathrm{d}\bar{r}+C(B+B_{0})^{2}\epsilon^{2}\int_{r}^{r}\bar{r}^{-5+2\Delta}\mathrm{d}\bar{r}\\&
\leq C(B+B_{0})^{2}\epsilon^{2}.
\end{aligned}
\end{equation}
Thus we have\begin{equation*}
\kappa(r_{+})\leq \kappa(r)\leq \kappa(r_{+})e^{C(B+B_{0})^{2}\epsilon^{2}}\leq \kappa(r_{+})+C(B+B_{0})^{2}\epsilon^{2},\quad \forall r_{+}\leq r\leq R.
\end{equation*}
\end{proof}
\subsection{Estimate for $Q$}
As one can see from the equations \eqref{emkg1}-\eqref{A equ}, for the uncharged scalar field, $Q\equiv e$. However, due to the presence of the scalar field charge $q_{0}$, $Q$ is not necessary bounded. Specifically, we have the following lemma:
\begin{lemma}
\label{Q}
Under the bootstrap assumption $\eqref{b1}-\eqref{b3}$, we have\begin{equation}
\left\vert Q(r)-e\right\vert\leq C(B+B_{0})^{2}\epsilon^{2}\left\vert r^{-2+2\Delta}-r_{+}^{-2+2\Delta}\right\vert,\quad r_{+}\leq r\leq R.
\end{equation}
\end{lemma}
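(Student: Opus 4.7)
The plan is to integrate the transport equation \eqref{req5} for $Q$ directly, using the bootstrap assumptions \eqref{b1}--\eqref{b3} to bound each factor of the integrand by an explicit power of $r$. Since $Q(r_{+}) = e$ by \eqref{initial mass charge phi}, the estimate reduces to controlling $\int_{r_{+}}^{r}|dQ/d\bar{r}|\,d\bar{r}$.

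First I would bound each factor on the right-hand side of \eqref{req5}. By Lemma \ref{lem Kap} together with \eqref{b1}, $\kappa$ is bounded above by a constant depending only on the parameters. By the bootstrap \eqref{b3}, $|A r^{2}/\Omega^{2}| \leq C$, hence $|A/\Omega^{2}| \leq C r^{-2}$ on $[r_{+},R)$. The first term in the norm \eqref{definition of norm}, combined with \eqref{b2}, yields the pointwise bound $|\phi(r)| \leq (B+B_{0})\epsilon\, r^{-\frac{3}{2}+\Delta}$ and therefore $\phi^{2} \leq (B+B_{0})^{2}\epsilon^{2}\, r^{-3+2\Delta}$. Substituting into \eqref{req5} gives
\begin{equation*}
\left|\frac{dQ}{dr}\right| \;\leq\; C (B+B_{0})^{2}\epsilon^{2}\, r^{-3+2\Delta}.
\end{equation*}

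Integrating from $r_{+}$ to $r$ and using the initial condition $Q(r_{+})=e$, the antiderivative of $r^{-3+2\Delta}$ is $\frac{r^{-2+2\Delta}}{-2+2\Delta}$ (for $\Delta \neq 1$), which produces
\begin{equation*}
|Q(r) - e| \;\leq\; \frac{C (B+B_{0})^{2}\epsilon^{2}}{|{-2+2\Delta}|}\left|r^{-2+2\Delta} - r_{+}^{-2+2\Delta}\right|,
\end{equation*}
and absorbing the combinatorial prefactor into $C$ yields the claim.

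There is no serious obstacle here: the lemma is essentially bookkeeping once the bootstrap assumptions and Lemma \ref{lem Kap} are in hand. The only mild subtlety is that the decay rate $\phi^{2} \sim r^{-3+2\Delta}$ together with the $r^{-2}$ decay of $A/\Omega^{2}$ must combine favorably with the $r^{2}$ volume factor in \eqref{req5} so that $dQ/dr$ has an integrable power-law profile tracking exactly the right-hand side of the lemma. The borderline exponent $\Delta = 1$ (i.e. $\alpha = -\tfrac{5}{4}$) is degenerate for this antiderivative, but it lies outside the range required for the Neumann construction and can in any case be handled by replacing the power $r^{-2+2\Delta}$ by a logarithm without altering the remainder of the bootstrap scheme.
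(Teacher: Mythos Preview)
Your proof is correct and follows exactly the same approach as the paper: directly integrate \eqref{req5}, bounding $\kappa$, $A/\Omega^{2}$ and $\phi^{2}$ via the bootstrap assumptions to obtain $|dQ/dr|\le C(B+B_{0})^{2}\epsilon^{2}r^{-3+2\Delta}$, and then integrate. The paper's proof is the same one-line computation without spelling out the individual bounds or the $\Delta=1$ caveat.
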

\begin{proof}
Directly integrating $\eqref{req5}$, we have\begin{equation*}
\left\vert Q(r)-e\right\vert\leq C(B+B_{0})^{2}\epsilon^{2}\int_{r_{+}}^{r}\bar{r}^{-3+2\Delta}\mathrm{d}\bar{r}\leq C(B+B_{0})^{2}\epsilon^{2}\vert r^{-2+2\Delta}-r_{+}^{-2+2\Delta}\vert.
\end{equation*}
\end{proof}
\subsection{Estimate for Vaidya Mass $\varpi$}
\begin{lemma}
\label{varpi}
Under the bootstrap assumption $\eqref{b1}-\eqref{b3}$, we have\begin{equation}
\left\vert\varpi(r)-M\right\vert\leq C(B+B_{0})^{2}\epsilon^{2}r^{2\Delta-1}(r-r_{+}),\quad \forall r_{+}\leq r\leq A.
\end{equation}
\end{lemma}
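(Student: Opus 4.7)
The plan is to integrate the transport equation \eqref{req3} for $\varpi$ from $r_{+}$ to $r$ and bound each of the four source terms on its right-hand side using the bootstrap assumptions \eqref{b1}--\eqref{b3} together with Lemmas \ref{lem Kap} and \ref{Q}. Unpacking the $N$-norm bound $\Vert\phi\Vert_{N}(r)\leq (B+B_{0})\epsilon$ yields the pointwise estimates $|\phi|\lesssim (B+B_{0})\epsilon\,r^{-\frac{3}{2}+\Delta}$ and $|d\phi/dr|\lesssim (B+B_{0})\epsilon\,r^{-\frac{5}{2}+\Delta}$, hence $\phi^{2}\lesssim (B+B_{0})^{2}\epsilon^{2}r^{-3+2\Delta}$ and $(d\phi/dr)^{2}\lesssim (B+B_{0})^{2}\epsilon^{2}r^{-5+2\Delta}$, which feed all subsequent term-by-term bounds.

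Analyzing the four summands in \eqref{req3} on the bootstrap domain $[r_{+},R]$: the mass term $\bigl(-\tfrac{\Lambda}{3}\bigr)\alpha r^{2}\phi^{2}/2$ is of size $(B+B_{0})^{2}\epsilon^{2}r^{-1+2\Delta}$; the kinetic contribution $\tfrac{1}{2}r^{2}(d\phi/dr)^{2}\,\Omega^{2}\kappa^{-2}$ is of the same size, because $\Omega^{2}\kappa^{-2}=1-\tfrac{2\varpi}{r}+\tfrac{Q^{2}}{r^{2}}+\bigl(-\tfrac{\Lambda}{3}\bigr)r^{2}\lesssim r^{2}$ on $[r_{+},R]$; and the two Maxwell terms $q_{0}^{2}A^{2}r^{2}\phi^{2}/(2\Omega^{2})$ and $q_{0}^{2}Qr(A/\Omega^{2})\kappa\phi^{2}$ are strictly lower order, namely $(B+B_{0})^{2}\epsilon^{2}r^{-3+2\Delta}$ and $(B+B_{0})^{2}\epsilon^{2}r^{-4+2\Delta}$ respectively, by using the bootstrap $|Ar^{2}/\Omega^{2}|\leq C$ (so $|A/\Omega^{2}|\leq Cr^{-2}$), the boundedness of $A$, the control of $Q$ from Lemma \ref{Q}, and $\kappa\leq 2\kappa(r_{+})$ from Lemma \ref{lem Kap}. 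Summing, one concludes the differential estimate
\begin{equation*}
\left|\frac{d\varpi}{dr}\right|\leq C(B+B_{0})^{2}\epsilon^{2}\,r^{-1+2\Delta}.
\end{equation*}

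Integrating from $r_{+}$ to $r$ gives $|\varpi(r)-M|\leq C(B+B_{0})^{2}\epsilon^{2}\int_{r_{+}}^{r}\bar r^{-1+2\Delta}\,d\bar r$, and by the mean value theorem this integral equals $\xi^{-1+2\Delta}(r-r_{+})$ for some $\xi\in(r_{+},r)$. On the bounded domain $[r_{+},A]$ the ratio $(r/r_{+})^{|2\Delta-1|}$ is bounded by a constant, so $\xi^{-1+2\Delta}\leq Cr^{2\Delta-1}$, giving the advertised bound $|\varpi(r)-M|\leq C(B+B_{0})^{2}\epsilon^{2}r^{2\Delta-1}(r-r_{+})$. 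I do not expect a serious obstacle here; the only mildly delicate point is that, because of the $r^{2}$ growth of $\Omega^{2}\kappa^{-2}$, the kinetic term must be handled on the bounded bootstrap domain to keep it at the same $r^{-1+2\Delta}$ order as the mass term, and the MVT step requires distinguishing the cases $\Delta\gtrless 1/2$, though both cases are absorbed into the constant $C$ on a bounded range.
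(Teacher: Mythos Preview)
Your direct-integration approach misses the one structural point that the paper's proof is built around. The kinetic term in \eqref{req3} is $\tfrac{1}{2}r^{2}(d\phi/dr)^{2}\,\Omega^{2}\kappa^{-2}$, and you bound it by asserting $\Omega^{2}\kappa^{-2}\lesssim r^{2}$. But $\Omega^{2}\kappa^{-2}=1-\tfrac{2\varpi}{r}+\tfrac{Q^{2}}{r^{2}}+\bigl(-\tfrac{\Lambda}{3}\bigr)r^{2}$ contains $\varpi$ itself, and none of the bootstrap hypotheses \eqref{b1}--\eqref{b3} controls $\varpi$; so the bound you invoke is exactly the quantity the lemma is meant to establish. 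The paper breaks this circularity by observing that \eqref{req3} is linear in $\varpi$ with coefficient $-r\bigl((d\phi/dr)^{2}+q_{0}^{2}\kappa^{2}(A/\Omega^{2})^{2}\phi^{2}\bigr)$ and multiplying by the integrating factor $\exp\!\int_{r_{+}}^{r}\bar r\bigl((d\phi/dr)^{2}+q_{0}^{2}\kappa^{2}(A/\Omega^{2})^{2}\phi^{2}\bigr)\,d\bar r$; after this the right-hand side carries the factor $\bigl(1+\tfrac{Q^{2}}{r^{2}}+\bigl(-\tfrac{\Lambda}{3}\bigr)r^{2}\bigr)$ in place of $\Omega^{2}\kappa^{-2}$, i.e.\ the $-2\varpi/r$ term is gone. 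Since the integrating factor is $1+O\bigl((B+B_{0})^{2}\epsilon^{2}\bigr)$ by the same computation as Lemma \ref{lem Kap}, one then integrates as you do and obtains the stated bound with a constant independent of the bootstrap endpoint. Your argument becomes correct once you insert this Gr\"onwall/integrating-factor step; without it the estimate is circular.

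A smaller point: your MVT step writes $\int_{r_{+}}^{r}\bar r^{\,2\Delta-1}\,d\bar r=\xi^{2\Delta-1}(r-r_{+})$ and then bounds $\xi^{2\Delta-1}\leq Cr^{2\Delta-1}$ by invoking the boundedness of $[r_{+},A]$, but the lemma must hold with $C$ independent of $A$ (the whole purpose is to close the bootstrap and conclude $A=\infty$). When $\Delta<\tfrac12$ your bound on $\xi^{2\Delta-1}$ gives a constant growing like $(A/r_{+})^{1-2\Delta}$. The conclusion is nonetheless uniformly true, since the ratio $\bigl(r^{2\Delta}-r_{+}^{2\Delta}\bigr)\big/\bigl(2\Delta\,r^{2\Delta-1}(r-r_{+})\bigr)$ is continuous on $(r_{+},\infty)$ with finite limits (equal to $1$) at both ends; only your justification needs adjusting.
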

\begin{proof}
First, we estimate $e^{\int_{r_{+}}^{r}\bar{r}\left(\bigl(\frac{d\phi}{dr}\bigr)^{2}+\kappa^{2}q_{0}^{2}\bigl(\frac{A}{\Omega^{2}}\bigr)\phi^{2}\right)d\bar{r}}$ as follows:\begin{equation*}
e^{\int_{r_{+}}^{r}\bar{r}\left(\bigl(\frac{d\phi}{dr}\bigr)^{2}+\kappa^{2}q_{0}^{2}\bigl(\frac{A}{\Omega^{2}}\bigr)\phi^{2}\right)d\bar{r}}\leq e^{C(B+B_{0})^{2}\epsilon^{2}}\leq 1+C(B+B_{0})^{2}\epsilon^{2}.
\end{equation*}

By using the integrating factor $e^{\int_{r_{+}}^{r}\bar{r}\left(\bigl(\frac{d\phi}{dr}\bigr)^{2}+\kappa^{2}q_{0}^{2}\bigl(\frac{A}{\Omega^{2}}\bigr)\phi^{2}\right)d\bar{r}}$, we can write the equation for $\varpi$ $\eqref{req3}$ as
\begin{align*}
&e^{-\int_{r_{+}}^{r}\bar{r}\left(\bigl(\frac{d\phi}{dr}\bigr)^{2}+\kappa^{2}q_{0}^{2}\bigl(\frac{A}{\Omega^{2}}\bigr)\phi^{2}\right)d\bar{r}}\frac{d}{dr}\left(e^{\int_{r_{+}}^{r}\bar{r}\left(\bigl(\frac{d\phi}{dr}\bigr)^{2}+\kappa^{2}q_{0}^{2}\bigl(\frac{A}{\Omega^{2}}\bigr)\phi^{2}\right)d\bar{r}}\varpi\right) \\=&\bigl(-\frac{\Lambda}{3}\bigr)\alpha\frac{r^{2}\phi^{2}}{2}+
q_{0}^{2}Q\kappa r\left(\frac{A}{\Omega^{2}}\right)\phi^{2}+\frac{r^{2}}{2}\left(\bigl(\frac{d\phi}{dr}\bigr)^{2}+q_{0}^{2}\kappa^{2}\left(\frac{A}{\Omega^{2}}\right)^{2}\phi^{2}\right)\left(1+\frac{Q^2}{r^{2}}+\bigl(-\frac{\Lambda}{3}\bigr)r^{2}\right).
\end{align*}
Directly integrating the above equation, we have\begin{align*}
\left\vert e^{\int_{r_{+}}^{r}\bar{r}\left(\left(\frac{d\phi}{dr}\right)^{2}+\kappa^{2}q_{0}^{2}\left(\frac{A}{\Omega^{2}}\right)^{2}\phi^{2}\right)\mathrm{d}\bar{r}}\varpi(r)-M\right\vert\leq C(B+B_{0})^{2}\epsilon^{2}r^{2\Delta-1}(r-r_{+}).
\end{align*}
Hence we have\begin{align*}
\left\vert\varpi-M\right\vert\leq& \left\vert e^{-\int_{r_{+}}^{r}\bar{r}\left(\left(\frac{d\phi}{dr}\right)^{2}+q_{0}^{2}\kappa^{2}\left(\frac{A}{\Omega^{2}}\right)^{2}\phi^{2}\right)\mathrm{d}\bar{r}}M-M\right\vert+C(B+B_{0})^{2}\epsilon^{2}r^{2\Delta-1}(r-r_{+})\\\leq&
C(B+B_{0})^{2}\epsilon^{2}r^{2\Delta-1}(r-r_{+}).
\end{align*}
\end{proof}
\subsection{Estimate for $\frac{Ar^{2}}{\Omega^{2}}$}
For $\frac{Ar^{2}}{\Omega^{2}}$, we have the following estimate.
\begin{lemma}
Under the bootstrap assumption $\eqref{b1}-\eqref{b3}$, we have\begin{equation}
\label{estimate for ren A}
\left\vert \frac{Ar^{2}}{\Omega^{2}\kappa^{-2}}-\frac{A_{RN}r^{2}}{\Omega_{RN}^{2}}\right\vert\leq C(B+B_{0})^{2}\epsilon^{2}.
\end{equation}
\end{lemma}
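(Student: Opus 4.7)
The approach is to reduce the estimate to the already-established bounds on $\kappa$, $Q$, and $\varpi$ via the algebraic identity
\begin{equation*}
\frac{Ar^{2}}{\Omega^{2}\kappa^{-2}}-\frac{A_{RN}r^{2}}{\Omega_{RN}^{2}}=\frac{(A-A_{RN})r^{2}}{\Omega^{2}\kappa^{-2}}+\frac{A_{RN}r^{2}}{\Omega_{RN}^{2}}\cdot\frac{\Omega_{RN}^{2}-\Omega^{2}\kappa^{-2}}{\Omega^{2}\kappa^{-2}}
\end{equation*}
and then bounding each summand separately on $[r_{+},R]$.

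First I would verify the limit at the horizon. Using \eqref{initial A} one obtains $\lim_{r\to r_{+}}Ar^{2}/(\Omega^{2}\kappa^{-2})=e\kappa(r_{+})^{2}/(2K_{+})$, while a direct L'H\^{o}pital computation gives $\lim_{r\to r_{+}}A_{RN}r^{2}/\Omega_{RN}^{2}=e/(2K_{+})$. Since \eqref{initial pkappa} forces $\kappa(r_{+})^{2}-1=O(\epsilon^{2})$, the estimate holds at $r_{+}$ and provides the correct boundary data.

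For the first summand at general $r$, integrating $A'=Q\kappa/r^{2}$ and $A_{RN}'=e/r^{2}$ from the horizon yields
\begin{equation*}
A-A_{RN}=\int_{r_{+}}^{r}\frac{\kappa(Q-e)+e(\kappa-1)}{\bar r^{2}}\mathrm{d}\bar r,
\end{equation*}
so Lemmas \ref{lem Kap} and \ref{Q} produce $|A-A_{RN}|\lesssim(B+B_{0})^{2}\epsilon^{2}$, and moreover $A-A_{RN}$ vanishes linearly at $r_{+}$, exactly cancelling the simple zero of $\Omega^{2}\kappa^{-2}$ there. For the second summand, I would expand $\Omega^{2}\kappa^{-2}-\Omega_{RN}^{2}=-2(\varpi-M)/r+(Q^{2}-e^{2})/r^{2}$; Lemmas \ref{Q} and \ref{varpi} supply matching factors of $(r-r_{+})$ (originating from $\varpi(r_{+})=M$, $Q(r_{+})=e$ together with the transport equations \eqref{req3} and \eqref{req5}), which absorb the zero of the denominator. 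For large $r$, the resulting rational expressions decay in powers of $r$ controlled by $\Delta<3/2$, delivering uniform boundedness.

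The principal subtlety will be the handling of the simple zero of $\Omega^{2}\kappa^{-2}$ at $r=r_{+}$: the pointwise $L^{\infty}$ bounds on $\varpi-M$ and $Q-e$ alone are insufficient and one must exploit their at-least-linear vanishing at $r_{+}$, which is precisely what the $(r-r_{+})$ factors and the $|r^{-2+2\Delta}-r_{+}^{-2+2\Delta}|$ factors in Lemmas \ref{Q} and \ref{varpi} provide. Once this near-horizon cancellation is verified, the remainder of the argument is a direct application of the preceding lemmas, yielding the claimed $C(B+B_{0})^{2}\epsilon^{2}$ bound with $C$ depending only on the parameters $(M,e,r_{+},\alpha,q_{0})$.
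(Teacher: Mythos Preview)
Your proposal is correct and follows essentially the same approach as the paper: both arguments decompose the difference into a term involving $A-A_{RN}$ and a term involving $\Omega^{2}\kappa^{-2}-\Omega_{RN}^{2}$, estimate $A-A_{RN}$ by integrating the difference of the transport equations \eqref{req6}, and exploit the linear vanishing of the numerators at $r=r_{+}$ to cancel the simple zero of the denominator. The only cosmetic difference is that the paper telescopes through $Ar^{2}/\Omega_{RN}^{2}$ (writing $\Omega_{RN}^{2}=(r-r_{+})P(r)$ explicitly), whereas you telescope through $A_{RN}r^{2}/(\Omega^{2}\kappa^{-2})$; the resulting estimates are identical.
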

\begin{proof}
We can write $\Omega_{RN,M}^{2}$ as \begin{equation*}
\Omega_{RN}^{2} = (r-r_{+})P(r),
\end{equation*}
where $P$ is a rational function of degree $1$ with a lower bound $C$ on $(r_{+},\infty)$. Then we have\begin{equation}
\begin{aligned}
\frac{Ar^{2}}{\Omega^{2}\kappa^{-2}}-\frac{A_{RN}r^{2}}{\Omega_{RN}^{2}} =& \frac{Ar^{2}}{(r-r_{+})\left(P(r)+\frac{(Q-e)^{2}+2e(Q-e)}{r^{2}(r-r_{+})}+\frac{2M-2\varpi}{r(r-r_{+})}\right)}-\frac{Ar^{2}}{(r-r_{+})P(r)}\\& +
\frac{Ar^{2}}{(r-r_{+})P(r)}-\frac{A_{RN}r^{2}}{(r-r_{+})P(r)}\\=&
\frac{Ar^{2}}{r-r_{+}}\frac{\frac{(Q-e)^{2}+2e(Q-e)}{r^{2}(r-r_{+})}+\frac{2M-2\varpi}{r(r-r_{+})}}{P(r)\left(P(r)+\frac{(Q-e)^{2}+2e(Q-e)}{r^{2}(r-r_{+})}+\frac{2M-2\varpi}{r(r-r_{+})}\right)}+\frac{(A-A_{RN})r^{2}}{(r-r_{+})P(r)}.
\end{aligned}
\label{decom of A}
\end{equation}
Using the equation $\eqref{req6}$, we have\begin{equation}
\label{req6 for pertubate version}
\frac{d\left(A-A_{RN}\right)}{dr} = \frac{Q\kappa}{r^{2}}-\frac{e}{r^{2}} = \frac{(Q-e)\kappa}{r^{2}}+\frac{e(\kappa-1)}{r^{2}}.
\end{equation}
Directly integrating the above equation, we have\begin{equation}
\vert A-A_{RN}\vert\leq\int_{r_{+}}^{r}\frac{\vert Q-e\vert\kappa}{\bar{r}^{2}}+\frac{\vert e\vert(\kappa-1)}{\bar{r}^{2}}\mathrm{d}\bar{r}\leq C(B+B_{0})^{2}\epsilon^{2}\left(\frac{1}{r_{+}}-\frac{1}{r}\right).\label{estimate for A}
\end{equation}
Then by \eqref{decom of A}, we can get the estimate\begin{equation*}
\left\vert \frac{Ar^{2}}{\Omega^{2}\kappa^{-2}}-\frac{A_{RN}r^{2}}{\Omega_{RN}^{2}}\right\vert\leq C(B+B_{0})^{2}\epsilon^{2}
\end{equation*}
\end{proof}
\subsection{Estimate for $\phi$ and close the bootstrap estimate}
Now we are ready to prove Theorem $\ref{thm:linear approximation}$.
\begin{proof}
Considering the equation for $\phi$, we can write
\begin{equation}
\label{linear and error version of the main equation}
\begin{aligned}
-\frac{d}{dr}\left(r^{2}\Omega^{2}_{RN}\frac{d\phi}{dr}\right)+\left(\bigl(-\frac{\Lambda}{3}\bigr)\alpha-\frac{q_{0}^{2}A_{RN}^{2}}{\Omega_{RN}^{2}}\right)r^{2}\phi =&
\frac{d}{dr}\left((2M-2\varpi)r\frac{d\phi}{dr}\right)+\frac{d}{dr}\left(\left(Q^{2}-e^{2}\right)\frac{d\phi}{dr}\right)\\&+r^{2}\kappa^{-1}\frac{d\kappa}{dr}\Omega^{2}\kappa^{-2}\frac{d\phi}{dr}+q_{0}^{2}r^{2}\left(\frac{A^{2}}{\Omega^{2}}-\frac{A_{RN}^{2}}{\Omega_{RN}^{2}}\right)\phi.
\end{aligned}
\end{equation}
Let
\begin{align*}
l[\phi] =& -\frac{d}{dr}\left(r^{2}\Omega^{2}_{RN}\frac{d\phi}{dr}\right)+\left(\bigl(-\frac{\Lambda}{3}\alpha\bigr)-\frac{q_{0}^{2}A_{RN}^{2}}{\Omega_{RN}^{2}}\right)r^{2}\phi,\\
e[\phi] =& \frac{d}{dr}\left((2M-2\varpi)r\frac{d\phi}{dr}\right)+\frac{d}{dr}\left(\left(Q^{2}-e^{2}\right)\frac{d\phi}{dr}\right)\\&
+r^{2}\kappa^{-1}\frac{d\kappa}{dr}\Omega^{2}\kappa^{-2}\frac{d\phi}{dr}+q_{0}^{2}r^{2}\left(\frac{A^{2}}{\Omega^{2}}-\frac{A_{RN}^{2}}{\Omega_{RN}^{2}}\right)\phi.
\end{align*}
We can view $l[\phi]$ as the main part of the equation \eqref{linear and error version of the main equation} and $e[\phi]$ as the error terms. From Theorem \ref{linear theory}, we know that $\phi = C_{D}u_{D}+C_{N}u_{N}$ is the solution for $l[\phi] = 0$ with $\phi$ regular at the event horizon $\phi(r_{+}) = 1$. Let\begin{align*}
&\phi(r) = C_{D}(r)u_{D}(r)+C_{N}(r)u_{N}(r),\\&
C_{D}(r_{+}) = \epsilon C_{D},\quad C_{N}(r_{+}) = \epsilon C_{N}
\end{align*}
be the solution of $l[\phi] = e[\phi]$ with $\phi(r_{+}) = \epsilon$.
Then we can get\begin{align}
&C_{D}^{\prime}(r)u_{D}(r)+C_{N}^{\prime}(r)u_{N}(r) = 0,\label{relation1}\\&
C_{D}^{\prime}(r)u_{D}^{\prime}(r)+C_{N}^{\prime}(r)u_{N}^{\prime}(r) =-\frac{1}{r^{2}\Omega_{RN}^{2}}e[\phi].\label{relation2} 
\end{align}
From $\eqref{relation1}$ and $\eqref{relation2}$, we can solve $C_{D}^{\prime}$ and $C_{N}^{\prime}$\begin{align}
&C_{D}^{\prime}(r) = -\frac{u_{N}}{u_{D}^{\prime}u_{N}-u_{D}u_{N}^{\prime}}\frac{e[\phi]}{r^{2}\Omega^{2}_{RN}},\label{Cd}\\&
C_{N}^{\prime}(r) = \frac{u_{D}}{u_{D}^{\prime}u_{N}-u_{D}u_{N}^{\prime}}\frac{e[\phi]}{r^{2}\Omega^{2}_{RN}}.\label{Cn}
\end{align}
Integrating the equaiton $\eqref{Cd}$ for $C_{D}(r)$, we have\begin{equation}
\begin{aligned}
C_{D}(r)-\epsilon C_{D} &= \int_{r_{+}}^{r}\frac{-u_{N}}{u_{D}^{\prime}u_{N}-u_{D}u_{N}^{\prime}}\frac{e[\phi]}{\bar{r}^{2}\Omega^{2}_{RN}}\mathrm{d}\bar{r}\\
 &= \int_{r_{+}}^{r}\frac{-u_{N}}{u_{D}^{\prime}u_{N}-u_{D}u_{N}^{\prime}}\left(\kappa^{-1}\frac{d\kappa}{dr}\frac{d\phi}{dr}\frac{\Omega^{2}\kappa^{-2}}{\Omega^{2}_{RN}}\right)+\frac{-u_{N}}{u_{D}^{\prime}u_{N}-u_{D}u^{\prime}_{N}}\frac{q_{0}^{2}\left(\frac{A^{2}}{\Omega^{2}}-\frac{A_{RN}^{2}}{\Omega_{RN}^{2}}\right)}{\Omega_{RN}^{2}}\\&
 +\frac{-u_{N}}{u_{D}^{\prime}u_{N}-u_{D}u_{N}^{\prime}}\frac{\frac{d}{dr}\left((2M-2\varpi)\bar{r}\frac{d\phi}{dr}\right)}{\bar{r}^{2}\Omega_{RN}^{2}}+\frac{-u_{N}}{u_{D}^{\prime}u_{N}-u_{N}^{\prime}u_{D}}\frac{\frac{d}{dr}\left(\left(Q^{2}-e^{2}\right)\frac{d\phi}{dr}\right)}{\bar{r}^{2}\Omega_{RN}^{2}}\mathrm{d}\bar{r}.
\end{aligned}
\label{integral Cd}
\end{equation}
Then by $\eqref{infty wronskian}$ and $\eqref{horizon wronskian}$, we have\begin{align}
&\left\vert\frac{u_{N}}{u_{D}^{\prime}u_{N}-u_{D}u_{N}^{\prime}}\right\vert\leq C(r-r_{+})\left\vert\log(r-r_{+})\right\vert,\quad r\rightarrow r_{+},\\&
\left\vert\frac{u_{N}}{u_{D}^{\prime}u_{N}-u_{D}u_{N}^{\prime}}\right\vert\leq Cr^{\frac{5}{2}-\Delta},\quad r\rightarrow\infty,\\&
\frac{1}{r-r_{+}}\left\vert\frac{u_{N}}{u_{D}^{\prime}u_{N}-u_{D}u_{N}^{\prime}}\right\vert\leq C\max\{-\log(r-r_{+}),r^{\frac{3}{2}-\Delta}\},\quad r_{+}<r<\infty.\label{ren for wron}
\end{align}

Since we have estimate for $\varpi$ in Lemma $\ref{varpi}$ and estimate for $Q$ in Lemma $\ref{Q}$, by $\eqref{req2}$, we have\begin{equation*}
\frac{\Omega^{2}\kappa^{-2}}{\Omega_{RN}^{2}}<2.
\end{equation*}
For the first term in $\eqref{integral Cd}$, we have\begin{equation}
\begin{aligned}
\left\vert\int_{r_{+}}^{r}\frac{-u_{N}}{u_{D}^{\prime}u_{N}-u_{D}u_{N}^{\prime}}\left(\kappa^{-1}\frac{d\kappa}{dr}\frac{d\phi}{dr}\frac{\Omega^{2}\kappa^{-2}}{\Omega_{RN}^{2}}\right)\mathrm{d}\bar{r}\right\vert
\leq& C(B+B_{0})^{3}\epsilon^{3}\int_{r_{+}}^{r}\bar{r}^{-4+2\Delta}\mathrm{d}\bar{r}\\\leq&
C(B+B_{0})^{3}\epsilon^{3}\left(r_{+}^{-3+2\Delta}-r^{-3+2\Delta}\right).
\end{aligned}\label{1}
\end{equation}
By \eqref{kappa}, \eqref{estimate for A}, and \eqref{estimate for ren A}, we have\begin{equation*}
\left\vert\frac{A^{2}}{\Omega^{2}}-\frac{A_{RN}^{2}}{\Omega_{RN}^{2}}\right\vert\leq C\epsilon^{2}r^{-2}\left(\frac{1}{r_{+}}-\frac{1}{r}\right) 
\end{equation*}
Then we can estimate the second term in $\eqref{integral Cd}$\begin{equation}
\label{2}
\begin{aligned}
&\int_{r_{+}}^{r}\left\vert\frac{u_{N}}{u_{D}^{\prime}u_{N}-u_{D}u^{\prime}_{N}}\right\vert\frac{q_{0}^{2}\left\vert\frac{A^{2}}{\Omega^{2}}-\frac{A_{RN}^{2}}{\Omega_{RN}^{2}}\right\vert\vert\phi\vert}{\Omega_{RN}^{2}}\mathrm{d}\bar{r}\leq C(B+B_{0})\epsilon^{3}\int_{r_{+}}^{r}\bar{r}^{-3}\mathrm{d}\bar{r}\leq C(B+B_{0})\epsilon^{3}\left(r_{+}^{-2}-r^{-2}\right).
\end{aligned}
\end{equation}
For the third term in $\eqref{Cd}$, we have\begin{equation}
\begin{aligned}
&\int_{r_{+}}^{r}\left\vert\frac{u_{N}}{u_{D}^{\prime}u_{N}-u_{D}u_{N}^{\prime}}\right\vert\left\vert\frac{\frac{d}{dr}\left((2\varpi-2M)\bar{r}\frac{d\phi}{dr}\right)}{\bar{r}^{2}\Omega_{RN}^{2}}\right\vert\mathrm{d}\bar{r}\\\leq&
\int_{r_{+}}^{r}\left\vert\frac{u_{N}}{u_{D}^{\prime}u_{N}-u_{D}u_{N}^{\prime}}\right\vert\left(\frac{\vert2\varpi-2M\vert\left\vert\frac{d^{2}\phi}{dr^{2}}\right\vert}{\bar{r}\Omega_{RN}^{2}}+\frac{\vert2\varpi-2M\vert\left\vert\frac{d\phi}{dr}\right\vert}{\bar{r}^{2}\Omega_{RN}^{2}}+\frac{2\left\vert\frac{d\varpi}{dr}\right\vert\left\vert\frac{d\phi}{dr}\right\vert}{\bar{r}\Omega_{RN}^{2}}\right)\mathrm{d}\bar{r}\\\leq&
C(B+B_{0})^{3}\epsilon^{3}(r_{+}^{-3+2\Delta}-r^{-3+2\Delta})+C\int_{r_{+}}^{r}\left\vert\frac{u_{N}}{u_{D}^{\prime}u_{N}-u_{D}u_{N}^{\prime}}\right\vert\frac{\vert\varpi-M\vert\left\vert\frac{d^{2}\phi}{dr^{2}}\right\vert}{\bar{r}^{2}(\bar{r}-r_{+})}\mathrm{d}\bar{r}\\&+C(B+B_{0})\epsilon\int_{r_{+}}^{r}\left\vert\frac{u_{N}}{u_{D}^{\prime}u_{N}-u_{D}u_{N}^{\prime}}\right\vert \bar{r}^{-\frac{9}{2}+\Delta}\frac{\left\vert\frac{d\varpi}{dr}\right\vert}{\bar{r}-r_{+}}\mathrm{d}\bar{r},
\end{aligned}
\end{equation}
where we used the bootstrap estimate for $\varpi$ in Lemma $\ref{varpi}$. Using the Wronskian estimate \eqref{ren for wron}, we have\begin{align*}
&\int_{r_{+}}^{r}\left\vert\frac{u_{N}}{u_{D}^{\prime}u_{N}-u_{D}u_{N}^{\prime}}\right\vert\frac{\vert\varpi-M\vert\left\vert\frac{d^{2}\phi}{dr^{2}}\right\vert}{\bar{r}^{2}(\bar{r}-r_{+})}\mathrm{d}\bar{r}
\leq C(B+B_{0})^{3}\epsilon^{3}\min\{(r-r_{+})\vert\log(r-r_{+})\vert,1\},\\
&\int_{r_{+}}^{r}\left\vert\frac{u_{N}}{u_{D}^{\prime}u_{N}-u_{D}u_{N}^{\prime}}\right\vert \bar{r}^{-\frac{9}{2}+\Delta}\frac{\left\vert\frac{d\varpi}{dr}\right\vert}{\bar{r}-r_{+}}\mathrm{d}\bar{r}\leq C(B+B_{0})^{2}\epsilon^{2}\min\{(r-r_{+})\vert\log(r-r_{+})\vert,1\}.
\end{align*}
Hence we can estimate the third term in $\eqref{Cd}$\begin{equation}
\label{3}
\begin{aligned}
&\int_{r_{+}}^{r}\left\vert\frac{u_{N}}{u_{D}^{\prime}u_{N}-u_{D}u_{N}^{\prime}}\right\vert\left\vert\frac{\frac{d}{dr}\left((2\varpi-2M)\bar{r}\frac{d\phi}{dr}\right)}{\bar{r}^{2}\Omega_{RN}^{2}}\right\vert\mathrm{d}\bar{r}\\\leq&
C(B+B_{0})^{3}\epsilon^{3}\left(r_{+}^{-3+2\Delta}-r^{-3+2\Delta}+\min\{(r-r_{+})\vert\log(r-r_{+})\vert,1\}\right).
\end{aligned}
\end{equation}

Similarly, use the bootstarp estimate for $Q$ in Lemma $\ref{Q}$ and Wronskian estimate $\eqref{ren for wron}$, we have\begin{equation}
\label{4}
\int_{r_{+}}^{r}\left\vert\frac{u_{N}}{u_{D}^{\prime}u_{N}-u_{D}u_{N}^{\prime}}\right\vert\left\vert\frac{\left\vert\frac{d}{dr}\left(\left(Q^{2}-e^{2}\right)\frac{d\phi}{dr}\right)\right\vert}{\bar{r}^{2}\Omega_{RN}^{2}}\right\vert\mathrm{d}\bar{r}\leq C(B+B_{0})^{3}\epsilon^{3}\min\{(r-r_{+})\vert\log(r-r_{+})\vert,1\}.
\end{equation}
Combine $\eqref{1}$, $\eqref{2}$, $\eqref{3}$ and $\eqref{4}$, we get\begin{equation}
\vert C_{D}(r)-\epsilon C_{D}\vert\leq C(B+B_{0})^{3}\epsilon^{3}\min\{(r-r_{+})\vert\log(r-r_{+})\vert,1\},\quad \forall r_{+}\leq r<A.\label{bound for CD}
\end{equation}

Next, we estimate $C_{N}$. Similarly, we can integrate the equation $\eqref{Cn}$\begin{equation}
\begin{aligned}
C_{N}(r)-\epsilon C_{N} &= \int_{r_{+}}^{r}\frac{u_{D}}{u_{D}^{\prime}u_{N}-u_{D}u_{N}^{\prime}}\frac{e[\phi]}{\bar{r}^{2}\Omega^{2}_{RN}}\mathrm{d}\bar{r}\\
 &= \int_{r_{+}}^{r}\frac{u_{D}}{u_{D}^{\prime}u_{N}-u_{D}u_{N}^{\prime}}\left(\kappa^{-1}\frac{d\kappa}{dr}\frac{d\phi}{dr}\frac{\Omega^{2}\kappa^{-2}}{\Omega^{2}_{RN}}\right)\mathrm{d}\bar{r}+\frac{u_{D}}{u_{D}^{\prime}u_{N}-u_{D}u_{N}^{\prime}}\frac{\left(\frac{A^{2}}{\Omega^{2}}-\frac{A_{RN}^{2}}{\Omega_{RN}^{2}}\right)}{\Omega_{RN}^{2}}\\&+
\frac{u_{D}}{u_{D}^{\prime}u_{N}-u_{D}u^{\prime}_{N}}\frac{\frac{d}{dr}\left((2M-2\varpi)\bar{r}\frac{d\phi}{dr}\right)}{\bar{r}^{2}\Omega_{RN}^{2}}+\frac{u_{D}}{u_{D}^{\prime}u_{N}-u_{D}u_{N}^{\prime}}\frac{\frac{d}{dr}\left((Q^{2}-e^{2})\frac{d\phi}{dr}\right)}{\bar{r}^{2}\Omega_{RN}^{2}}\mathrm{d}\bar{r}.
\end{aligned}
\label{integral Cn}
\end{equation}
By $\eqref{infty wronskian}$ and $\eqref{horizon wronskian}$, we have\begin{align*}
&\frac{u_{D}}{u_{D}^{\prime}u_{N}-u_{D}u_{N}^{\prime}}\leq Cr^{\frac{5}{2}+\Delta},\quad r\rightarrow \infty,\\&
\frac{u_{D}}{u_{D}^{\prime}u_{N}-u_{D}u_{N}^{\prime}}\leq C(r-r_{+})\vert\log(r-r_{+})\vert.
\end{align*}
We can estimate all four terms in $\eqref{integral Cn}$ similar as in $\eqref{integral Cd}$. Thus we have\begin{equation}
\left\vert C_{N}(r)-\epsilon C_{N}\right\vert\leq C(B+B_{0})^{3}\epsilon^{3}r^{2\Delta}\min\{(r-r_{+})\vert\log(r-r_{+})\vert,1\},\ \forall r_{+}\leq r<A.\label{bound for Cn}
\end{equation}
If $A<\infty$, then combining $\eqref{bound for CD}$ and $\eqref{bound for Cn}$,  for $r_{+}<r<A$, we have\begin{equation}
\begin{aligned}
\left\vert r^{\frac{3}{2}-\Delta}\phi\right\vert\leq& \left \vert r^{\frac{3}{2}-\Delta}(C_{D}(r)-\epsilon C_{D})u_{D}\right\vert+\left\vert r^{\frac{3}{2}-\Delta}(C_{N}(r)-\epsilon C_{D})u_{N}\right\vert+C_{0}\epsilon\\\leq& C(B+B_{0})^{3}\epsilon^{3}+C_{0}\epsilon.
\end{aligned}
\end{equation}
For $\left\vert r^{\frac{5}{2}-\Delta}\frac{d\phi}{dr}\right\vert$ term, we have\begin{equation}
\left\vert r^{\frac{5}{2}-\Delta}\frac{d\phi}{dr}\right\vert = \left\vert C_{D}(r)u_{D}^{\prime}+C_{N}(r)u_{N}^{\prime}\right\vert\leq C(B+B_{0})^{3}\epsilon^{3}+\epsilon C_{0}.
\end{equation}
For $\vert r^{\frac{3}{2}-\Delta}\Omega_{RN}^{2}\frac{d^{2}\phi}{dr^{2}}\vert$, we have\begin{equation}
\begin{aligned}
\left\vert r^{\frac{3}{2}-\Delta}\Omega_{RN}^{2}\frac{d^{2}\phi}{dr^{2}}\right\vert =&r^{\frac{3}{2}-\Delta}\Omega_{RN}^{2}\left\vert C^{\prime}_{D}(r)u_{D}^{\prime}+C^{\prime}_{N}(r)u_{N}^{\prime}+C_{D}(r)u_{D}^{\prime\prime}+C_{N}(r)u_{N}^{\prime\prime}\right\vert\\\leq&
r^{\frac{3}{2}-\Delta}\Omega_{RN}^{2}\vert C_{D}^{\prime}(r)u_{D}^{\prime}+C_{N}^{\prime}(r)u_{N}^{\prime}\vert+r^{\frac{3}{2}-\Delta}\Omega_{RN}^{2}\epsilon\vert C_{D}u_{D}^{\prime\prime}+C_{N}u_{N}^{\prime\prime}\vert\\&+r^{\frac{3}{2}-\Delta}\Omega_{RN}^{2}\vert C_{D}(r)-\epsilon C_{D}\vert\vert u_{D}^{\prime\prime}\vert+r^{\frac{3}{2}-\Delta}\Omega_{RN}^{2}\vert C_{N}(r)-\epsilon C_{N}\vert\vert u_{N}^{\prime\prime}\vert.
\end{aligned}
\end{equation}
Using the structure of the equation, we can express $u_{D}^{\prime\prime}$ and $u_{N}^{\prime\prime}$ by lower order terms. Hence we have\begin{equation}
r^{\frac{3}{2}-\Delta}\Omega_{RN}^{2}\left\vert\frac{d^{2}\phi}{dr^{2}}\right\vert\leq C(B_{0}+B)^{3}\epsilon^{3}+C_{0}\epsilon.
\end{equation}

Thus we can choose $\epsilon$ small and $B_{0} = 2C_{0}$ such that we have\begin{align*}
&\kappa(r_{+})\leq \kappa\leq \kappa(r_{+})+\frac{1}{2}\epsilon,\\
&\Vert \phi\Vert_{N}(A)\leq (B+C_{0})\epsilon,\\&
\left\vert\frac{Ar^{2}}{\Omega^{2}}\right\vert\leq\frac{e}{2K_{+}}+\frac{\epsilon}{2},
\end{align*}
which contradicts to the fact that $A<\infty$. So we have $A = \infty$.
\end{proof}

\section{Proof of Theorem \ref{main theorem} under Dirichlet boundary conditions}
\label{proof:Dirichlet}
In this section, we will prove Theorem \ref{main theorem} under Dirichlet boundary conditions. The construction of the bifurcating black holes can be reduced to the construction of non-trivial solutions to \eqref{req1}-\eqref{req4} with $\phi(r_{+}) = \epsilon$ under Dirichlet boundary conditions. The proof is based on the fixed point argument and Theorem \ref{linear hair theorem}. Without loss of generality, we assume $e>0$ in this section.
\subsection{Function space}
In this section, we set up the function space we will work in. Define the norm $\Vert\cdot\Vert_{X}$
\begin{equation*}
\Vert f\Vert_{X} = \Vert r^{\frac{3}{2}+\Delta}f\Vert_{\infty}+\left\Vert r^{\frac{5}{2}+\Delta}\frac{df}{dr}\right\Vert_{\infty}+\left\Vert r^{\frac{3}{2}+\Delta}\Omega_{RN}^{2}\frac{d^{2}f}{dr^{2}}\right\Vert_{\infty}.
\end{equation*}
Let space $X$ be the completion of all $f\in C^{2}(r_{+},\infty)$ with $\Vert f\Vert_{X}<\infty$. Clearly from the definition of $X$, for any $f\in X$, we have\begin{equation*}
\lim_{r\rightarrow\infty}\left\vert r^{\frac{3}{2}-\Delta}f\right\vert = 0.
\end{equation*}
Let $X_{\delta}: = B_{\delta}(X)$ be the ball of radius $\delta$ in $X$. $\delta$ will be chosen to be small later.
\subsection{Set-up for the fixed point argument}
Given $f\in X_{\delta}$, we can consider the equations \eqref{req1}-\eqref{req6} with $\phi$ replaced by $f$:\begin{align}
&\kappa^{-1}\frac{d\kappa}{dr} = r\left(\frac{df}{dr}\right)^{2}+q_{0}^{2}r\kappa^{2}\left(\frac{A}{\Omega^{2}}\right)^{2}f^{2},\label{fixpointeq1}\\&
\Omega^{2}\kappa^{-2} = 1-\frac{2\varpi}{r}+\frac{Q^{2}}{r^{2}}+\bigl(-\frac{\Lambda}{3}\bigr)r^{2}\label{fixpointeq2},\\&
\frac{d\varpi}{dr} = \bigl(-\frac{\Lambda}{3}\bigr)\alpha\frac{r^{2}f^{2}}{2}+\frac{r^{2}(\frac{df}{dr})^{2}}{2}\Omega^{2}\kappa^{-2}+\frac{q_{0}^{2}A^{2}r^{2}f^{2}}{2\Omega^{2}}+q_{0}^{2}Qr\left(\frac{A}{\Omega^{2}}\right)\kappa f^{2}\label{fixpointeq3}\\&
\frac{dQ}{dr} = q_{0}^{2}r^{2}\kappa\frac{A}{\Omega^{2}}f^{2},\label{fixpointeq4}\\&
\frac{dA}{dr} = \frac{Q\kappa}{r^{2}},\label{fixpointeq5}
\end{align}
with the initial conditions:
\begin{equation}
\label{perturbed initial and gauge condition}
\begin{aligned}
&\varpi(r_{+}) = M,\quad Q(r_{+}) = e, \quad \kappa^{-1}(r_{+}) = 1-\frac{\bigl(-\frac{\Lambda}{3}\bigr)\alpha r_{+}}{2K_{+}}\epsilon^{2},\\ 
&\Omega^{2}\kappa^{-2}(r_{+}) = 1-\frac{2M}{r_{+}}+\frac{e^{2}}{r_{+}^{2}}+\bigl(-\frac{\Lambda}{3}\bigr)r_{+}^{2} = 0,\\&
\lim_{r\rightarrow r_{+}}\frac{A}{\Omega^{2}} = \frac{e}{2K_{+}r_{+}^{2}},\quad \frac{d\Omega^{2}}{dr}(r_{+}) = 2K_{+}\kappa(r_{+}).
\end{aligned}
\end{equation}
Let $(\kappa,\Omega^{2},\varpi,Q,A)$ be the solutions to the above equations \eqref{fixpointeq1}-\eqref{fixpointeq5} and $\phi$ be the solution of the equation:\begin{equation}
\frac{d}{dr}\left(r^{2}\Omega^{2}\kappa^{-1}\frac{d\phi}{dr}\right) = \left(\bigl(-\frac{\Lambda}{3}\bigr)\alpha-\frac{q_{0}^{2}A^{2}}{\Omega^{2}}\right)r^{2}\kappa\phi\label{fix point phi}
\end{equation}
with initial conditions\begin{equation}
\phi(r_{+}) = \epsilon,\quad \frac{d\phi}{dr}(r_{+}) = \frac{\bigl(-\frac{\Lambda}{3}\bigr)\alpha\epsilon}{2K_{+}}\kappa(r_{+}).
\end{equation}
The existence of $\phi$ for $\epsilon$ small enough has been established in Section \ref{sec:proof of linear theorem}. Thus we can define a map$$T:X_{\delta}\rightarrow C^{2}$$ by $T[f] = \phi$.

In the following, we will first show that for some parameters $(M,r_{+},\Lambda,\alpha,q_{0})$, $T$ is a map from $X_{\delta}$ to $X$. Then by choosing $\epsilon$ and $\delta$ appropriately, we further show that $T$ is a contraction.

Directly integrating \eqref{fixpointeq1}-\eqref{fixpointeq5} and using a standard bootstrap estimate as in Section \ref{sec:proof of linear theorem}, we can get the following estimates:
\begin{proposition}
For $\delta$ and $\epsilon$ small enough, and any $f\in X_{\delta}$, the solutions $(\kappa,\Omega^{2},\varpi,Q,A)$ to \eqref{fixpointeq1}-\eqref{fixpointeq5} with the initial conditions \eqref{perturbed initial and gauge condition} satisfy the following estimates:\begin{align}
1\leq \kappa(r_{+})\leq\kappa(r)\leq&\kappa(r_{+})+C\Vert f\Vert_{X}^{2},\label{basic estimate for fix point 1}\\
\vert\varpi-M\vert\leq&C\Vert f\Vert_{X}^{2},\\
\vert Q-e\vert\leq& C\Vert f\Vert_{X}^{2},\\
\vert A-A_{RN}\vert\leq& C\Vert f\Vert_{X}^{2},\\
\left\vert\frac{Ar^{2}}{\Omega^{2}\kappa^{-2}}-\frac{A_{RN}r^{2}}{\Omega_{RN}^{2}}\right\vert\leq& C\Vert f\Vert_{X}^{2}.\label{basic estimate for fix point2}
\end{align}
\end{proposition}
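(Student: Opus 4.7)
The plan is to prove each bound by direct integration of the corresponding ODE in the system \eqref{fixpointeq1}--\eqref{fixpointeq5}, closing a bootstrap argument completely analogous to the one used in the proof of Theorem~\ref{thm:linear approximation}. The main simplification compared with Section~\ref{sec:proof of linear theorem} is that $\phi$ has been replaced by a prescribed function $f\in X_\delta$, so every $f$-dependent quantity comes with uniform weighted bounds directly from $\|f\|_X$; in particular $|f|\lesssim \|f\|_X\, r^{-\frac{3}{2}-\Delta}$, $|f'|\lesssim \|f\|_X\, r^{-\frac{5}{2}-\Delta}$, and $\Omega_{RN}^2|f''|\lesssim \|f\|_X\, r^{-\frac{3}{2}-\Delta}$. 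I would introduce bootstrap assumptions on an interval $[r_+,R)$, namely $\kappa(r)\leq 2\kappa(r_+)$ and $|Ar^2/\Omega^2|\leq e/K_+$, and aim to show that they can be strictly improved, forcing $R=\infty$.

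The estimates are then derived in cascade. First, integrating \eqref{fixpointeq1} yields a logarithmic identity whose integrand is, under the bootstrap, dominated by $C\|f\|_X^2\,\bar r^{-4-2\Delta}$ at infinity and is bounded near $r_+$; this gives both the monotonicity $\kappa(r)\geq\kappa(r_+)\geq 1$ and the upper bound $\kappa(r)\leq\kappa(r_+)+C\|f\|_X^2$. Next, integrating \eqref{fixpointeq4} analogously yields $|Q-e|\leq C\|f\|_X^2$. One then integrates \eqref{fixpointeq3} term by term; the potentially delicate piece $q_0^2A^2r^2f^2/(2\Omega^2)$ is handled by rewriting it as $(Ar^2/\Omega^2)\cdot A\cdot r^{-2}\cdot f^2$, so the bootstrap on $Ar^2/\Omega^2$ together with the decay of $f$ closes $|\varpi-M|\leq C\|f\|_X^2$. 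The bound on $\Omega^2\kappa^{-2}-\Omega_{RN}^2$ then follows algebraically from \eqref{fixpointeq2}.

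For $A$, I would rewrite \eqref{fixpointeq5} as
\begin{equation*}
\frac{d(A-A_{RN})}{dr}=\frac{(Q-e)\kappa}{r^2}+\frac{e(\kappa-1)}{r^2},
\end{equation*}
exactly as in \eqref{req6 for pertubate version}, and insert the already established bounds on $Q-e$ and $\kappa-1$ to conclude $|A-A_{RN}|\leq C\|f\|_X^2(1/r_+-1/r)$. For the final bound on $Ar^2/(\Omega^2\kappa^{-2})-A_{RN}r^2/\Omega_{RN}^2$, I would reuse the algebraic decomposition \eqref{decom of A} verbatim: both summands are then controlled by the preceding estimates, using that the rational function $P(r)$ in $\Omega_{RN}^2=(r-r_+)P(r)$ has a uniform positive lower bound on $(r_+,\infty)$. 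With $\delta$ and $\epsilon$ small enough, each improved bound strictly beats its bootstrap assumption, so $R=\infty$.

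The main technical obstacle, exactly as in Section~\ref{sec:proof of linear theorem}, is the simultaneous degeneration of $\Omega^2$ and $A$ at the horizon $r=r_+$: one must carefully factor matching powers of $(r-r_+)$ out of numerators and denominators so that quotients such as $A^2/\Omega^2$, $A/\Omega^2$ and $Ar^2/\Omega^2$ remain bounded there, and so that the initial condition \eqref{perturbed initial and gauge condition} on $A/\Omega^2$ is consistently propagated. Once this bookkeeping is in place, the whole argument reduces to elementary integration against the weighted decay encoded by the norm $\|\cdot\|_X$.
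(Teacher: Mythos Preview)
Your proposal is correct and follows essentially the same approach as the paper, which simply states that the proof is ``analogous to the proofs of Lemma~\ref{lem Kap}, Lemma~\ref{Q}, Lemma~\ref{varpi}, and Lemma~\ref{estimate for ren A}.'' You have correctly identified the key simplification---that replacing $\phi$ by a prescribed $f\in X_\delta$ removes the need to bootstrap on the scalar field itself, since the weighted bounds come directly from $\|f\|_X$---and your cascade of estimates (first $\kappa$, then $Q$, then $\varpi$, then $A-A_{RN}$ via \eqref{req6 for pertubate version}, then the renormalized quotient via \eqref{decom of A}) mirrors the paper's Section~\ref{sec:proof of linear theorem} exactly.
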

\begin{proof}
Analogously to the proofs of Lemma \ref{lem Kap}, Lemma \ref{Q}, Lemma \ref{varpi}, and Lemma \ref{estimate for ren A}.
\end{proof}

\subsection{Construction of $T:X_{\delta}\rightarrow X$}
To prove $T$ is a contraction, the first step is to show there exist parameters $(M,r_{+},\Lambda,\alpha,q_{0})$ such that $T$ is a map from $X_{\delta}$ to $X$. In view of the estimates \eqref{basic estimate for fix point 1}-\eqref{basic estimate for fix point2}, the equation \eqref{fix point phi} is analogous to the stationary spherically symmetric Klein--Gordon equations on Reissner--Nordström-AdS up to a $\delta^{2}$ perturbation, the existence of stationary solutions under Dirichlet boundary conditions has been established in our companion work \cite{weihaozheng}. We will state the following perturbed version of the main theorem in \cite{weihaozheng}, the proof of which can be found in the appendix.
\begin{theorem}
\label{thm: fix point stationary D}
Let $(r_{+},\Lambda,\alpha)$ be fixed parameters with $\Lambda<0$ and $-\frac{9}{4}<\alpha<0$, $f\in X_{\delta}$ be a fixed function and $(\kappa,\Omega^{2},\varpi,Q,A)$ be the solutions to \eqref{fixpointeq1}-\eqref{fixpointeq5}. Then for $\epsilon$ and $\delta$ small enough, we have the following two results:
\begin{enumerate}
  \item \textup{(Large charge case)} For any given $M_{e = 0}<M_{b}<M_{0}$, there exists a $q_{1} = q_{1}(M_{b},r_{+},\Lambda,\alpha,f,\delta,\epsilon)$ such that for any $\vert q_{0}\vert>q_{1}$, there exists an initial condition $M_{e = 0}<\varpi(r_{+}) = M_{c}^{f} = M_{c}^{f}(M_{b},r_{+},\Lambda,\alpha,q_{0},f,\delta,\epsilon)<M_{b}$, such that a bounded stationary solution to \eqref{fix point phi} under the Dirichlet boundary condition exists.
  \item \textup{(Fixed charge case)} If the fixed parameters $(r_{+},\Lambda,\alpha)$ further satisfy the conditions\begin{align}
  &-\frac{9}{4}<\alpha<\min\left\{-\frac{3}{2}+\frac{q_{0}^{2}}{2\bigl(-\frac{\Lambda}{3}\bigr)},0\right\},\\&
  \bigl(-\frac{\Lambda}{3}\bigr)r_{+}^{2}>R_{0},
  \end{align}
  where $R_{0}$ is the positive solution to the quadratic equation \eqref{quadratic}, then there exists an initial condition $\varpi(r_{+}) = M_{c}^{f} = M_{c}^{f}(r_{+},\Lambda,\alpha,f,\delta,\epsilon)\in (M_{e = 0},M_{0})$ such that a bounded stationary solution to \eqref{fix point phi} under the Dirichlet boundary condition exists.
\end{enumerate}
Moreover, letting $M_{c}$ be the black hole mass defined in Theorem \ref{linear hair theorem}, we have the following estimate:\begin{equation}
\label{Mc estimate}
\vert M_{c}^{f}-M_{c}\vert\leq C\Vert f\Vert_{X}^{2},
\end{equation}
where $C$ is a constant depending on $(M_{b},r_{+},\Lambda,\alpha,q_{0})$ in the large charge case and depending on $(r_{+},\Lambda,\alpha,q_{0})$ for the fixed charge case.
\end{theorem}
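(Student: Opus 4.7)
The plan is to view \eqref{fix point phi} as an $O(\|f\|_{X}^{2}+\epsilon^{2})$ perturbation of the stationary Klein--Gordon equation on a Reissner--Nordström-AdS background of initial mass $M$, and to extend the construction of the critical mass in Theorem \ref{linear hair theorem} to the perturbed operator via a shooting/continuity argument. The a priori estimates \eqref{basic estimate for fix point 1}--\eqref{basic estimate for fix point2} show that the coefficients $\Omega^{2}\kappa^{-1}$, $r^{2}\kappa$, and $A^{2}/\Omega^{2}$ in \eqref{fix point phi} differ from their Reissner--Nordström-AdS counterparts (at the same initial mass $M$) by $O(\|f\|_{X}^{2})$, while the horizon-regular condition on $d\phi/dr$ is perturbed by $O(\epsilon^{2})$ through the $\kappa^{-1}(r_{+})$ correction in \eqref{perturbed initial and gauge condition}. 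Since \eqref{fix point phi} is linear in $\phi$, I normalize the initial datum to $\phi(r_{+})=1$.

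Next I would set up a shooting functional. For each admissible initial mass $M$, let $\{u_{D}^{M},u_{N}^{M}\}$ be the basis from Theorem \ref{linear theory} of the unperturbed stationary equation with parameters $(M,r_{+},\Lambda,\alpha,q_{0})$, and expand the horizon-regular perturbed solution as $\phi(r)=C_{D}(r)u_{D}^{M}(r)+C_{N}(r)u_{N}^{M}(r)$ by variation of parameters. Running the argument of Section \ref{sec:proof of linear theorem} and exploiting the Wronskian estimates \eqref{infty wronskian}--\eqref{horizon wronskian} to control the driving error, the limit
\begin{equation*}
\mathcal{N}(M,f,\epsilon) := \lim_{r\to\infty}C_{N}(r)
\end{equation*}
exists, is continuous in $M$, and decomposes as $\mathcal{N}(M,f,\epsilon)=\mathcal{N}_{0}(M)+E(M,f,\epsilon)$ with $|E(M,f,\epsilon)|\leq C(\|f\|_{X}^{2}+\epsilon^{2})$, where $\mathcal{N}_{0}(M)$ is the analogous Neumann coefficient produced by the unperturbed linear problem. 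Theorem \ref{linear hair theorem} supplies $M=M_{c}$ with $\mathcal{N}_{0}(M_{c})=0$ via a sign change on $(M_{e=0},M_{b})$ in the large-charge case, respectively $(M_{e=0},M_{0})$ in the fixed-charge case. For $\|f\|_{X}+\epsilon$ small enough, $\mathcal{N}(\cdot,f,\epsilon)$ inherits this sign change on a neighbourhood of $M_{c}$, and the intermediate value theorem supplies an admissible $M_{c}^{f}$ with $\mathcal{N}(M_{c}^{f},f,\epsilon)=0$, i.e.\ the desired bounded stationary solution satisfying the Dirichlet boundary condition.

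The main obstacle is upgrading the mere existence of $M_{c}^{f}$ to the quantitative estimate \eqref{Mc estimate}. The cleanest route is to establish non-degeneracy $\partial_{M}\mathcal{N}_{0}(M_{c})\neq 0$, which follows formally by differentiating the unperturbed ODE in $M$, representing $\partial_{M}\phi$ by a Duhamel integral against the basis $\{u_{D}^{M},u_{N}^{M}\}$, and reading off the resulting Neumann component at infinity; non-degeneracy should ultimately be traced back to the quantitative sign-change estimates underlying the proof of Theorem \ref{linear hair theorem} in \cite{weihaozheng}. With non-degeneracy in hand, the implicit function theorem applied to $\mathcal{N}(M,f,\epsilon)=0$ yields
\begin{equation*}
M_{c}^{f}-M_{c} = -\partial_{M}\mathcal{N}_{0}(M_{c})^{-1}E(M_{c},f,\epsilon)+O(\|f\|_{X}^{4}+\epsilon^{4}),
\end{equation*}
whence \eqref{Mc estimate}. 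Should non-degeneracy fail, the fallback is to define $M_{c}^{f}$ as the zero of $\mathcal{N}(\cdot,f,\epsilon)$ closest to $M_{c}$ and to combine the $O(\|f\|_{X}^{2}+\epsilon^{2})$ bound on $E$ with a local vanishing-order estimate on $\mathcal{N}_{0}$ near $M_{c}$; this may give a weaker quantitative bound on $|M_{c}^{f}-M_{c}|$, but is still sufficient for the subsequent contraction step in the fixed point argument.
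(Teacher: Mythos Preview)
Your shooting-method approach is correct in spirit but takes a genuinely different route from the paper. The paper works variationally: it introduces the energy functional
\[
\mathcal{L}_{f,M}[g]=\int_{r_{+}}^{\infty}r^{2}\Omega^{2}\kappa^{-1}\Bigl(\frac{dg}{dr}\Bigr)^{2}+\Bigl(\bigl(-\tfrac{\Lambda}{3}\bigr)\alpha-\tfrac{q_{0}^{2}A^{2}}{\Omega^{2}}\Bigr)r^{2}\kappa g^{2}\,dr,
\]
decomposes it as $\kappa(r_{+})L_{M}[g]+E[g]$ with $|E[g]|\lesssim\|f\|_{X}^{2}$, imports from \cite{weihaozheng} the existence of a negative energy bound state for $L_{M}$ at $M=M_{b}$ (resp.\ $M=M_{0}$), and defines $M_{c}^{f}$ as the infimum of masses for which $\mathcal{L}_{f,M}$ admits such a state. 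Existence of the Dirichlet solution at $M_{c}^{f}$ follows because the ground state eigenvalue vanishes there.

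The real difference is in the quantitative bound \eqref{Mc estimate}. You propose to obtain it from non-degeneracy $\partial_{M}\mathcal{N}_{0}(M_{c})\neq0$ plus the implicit function theorem, and you correctly flag this as the ``main obstacle'' since Theorem~\ref{linear hair theorem} does not supply it directly. The paper avoids this entirely: because $\Omega_{RN,M}^{2}$ and $-A_{RN}^{2}/\Omega_{RN,M}^{2}$ are \emph{explicitly monotone} in $M$, one can evaluate $\mathcal{L}_{f,M_{c}^{f}}$ on the unperturbed ground state $\phi_{M_{c}}$ (which is admissible by a density lemma) and use $\mathcal{L}_{f,M_{c}^{f}}[\phi_{M_{c}}]\geq 0=L_{M_{c}}[\phi_{M_{c}}]$ to read off
\[
|M_{c}^{f}-M_{c}|\int_{r_{+}}^{\infty}(r-r_{+})\Bigl(\frac{d\phi_{M_{c}}}{dr}\Bigr)^{2}dr\;\lesssim\;|E[\phi_{M_{c}}]|\;\lesssim\;\|f\|_{X}^{2}.
\]
The left integral depends only on the fixed parameters, so \eqref{Mc estimate} follows with no appeal to non-degeneracy of a shooting functional. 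In effect, the monotonicity of $L_{M}$ in $M$ \emph{is} the quantitative transversality you are trying to extract, but it is manifest at the level of the energy functional and need not be rephrased as a derivative of the Neumann coefficient. Your fallback (a local vanishing-order estimate on $\mathcal{N}_{0}$) would also work but is strictly more laborious than the variational comparison. A minor point: the $\epsilon^{2}$ contribution you include in $E$ is absorbed in the paper's setup because $\kappa(r_{+})$ enters only as an overall multiplicative factor in $L_{M}$ and hence does not shift the critical mass.
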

\begin{proof}
See Appendix \ref{appendix}.
\end{proof}
\begin{remark}
In the later discussion, the notation $C$ refers to a universal constant, i.e, a constant only depending on $(M_{b},r_{+},\Lambda,\alpha,q_{0})$ in the large charge case and depending on $(r_{+},\Lambda,\alpha,q_{0})$ for the fixed charge case.
\end{remark}
\subsection{Estimate for $\Vert T[f]\Vert_{X}$}
Next, we estimate $\Vert T[f]\Vert_{X}$ for the parameters $(M_{c}^{f},r_{+},\Lambda,\alpha,q_{0})$ in Theorem \ref{thm: fix point stationary D}.
\begin{proposition}
\label{estimate for Tf}
For $\epsilon, \delta$ small enough and parameters $(M_{c}^{f},r_{+},\Lambda,\alpha,q_{0})$ in Theorem \ref{thm: fix point stationary D}, we have the estimate\begin{equation}
\Vert \phi\Vert_{X}\leq C(\Vert f\Vert_{X}^{2}+1)\epsilon.
\end{equation}
\end{proposition}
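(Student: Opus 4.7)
The plan is to view equation \eqref{fix point phi} as a perturbation of the stationary Klein--Gordon equation on Reissner--Nordström-AdS at the parameters $(M_{c},r_{+},\Lambda,\alpha,q_{0})$ produced by Theorem \ref{linear hair theorem}, and then rerun the variation-of-parameters argument used in the proof of Theorem \ref{thm:linear approximation}, this time in the $X$-norm. Concretely, I would first rewrite \eqref{fix point phi} in the schematic form $L[\phi]=E[\phi]$, where
\begin{equation*}
L[\phi]=-\frac{d}{dr}\left(r^{2}\Omega_{RN}^{2}\frac{d\phi}{dr}\right)+\left(\bigl(-\tfrac{\Lambda}{3}\bigr)\alpha-\frac{q_{0}^{2}A_{RN}^{2}}{\Omega_{RN}^{2}}\right)r^{2}\phi
\end{equation*}
is the stationary Klein--Gordon operator on Reissner--Nordström-AdS at $(M_{c},r_{+},\Lambda,\alpha,q_{0})$ and $E[\phi]$ collects the coefficient differences, namely $(\Omega^{2}\kappa^{-1}-\Omega_{RN}^{2})\phi^{\prime}$, $q_{0}^{2}(A^{2}\kappa/\Omega^{2}-A_{RN}^{2}/\Omega_{RN}^{2})\phi$, and $\bigl(-\tfrac{\Lambda}{3}\bigr)\alpha(\kappa-1)\phi$-type terms. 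By the bootstrap estimates \eqref{basic estimate for fix point 1}--\eqref{basic estimate for fix point2} and the mass bound \eqref{Mc estimate}, each coefficient difference appearing in $E[\phi]$ is pointwise $O(\Vert f\Vert_{X}^{2})$, so $E$ is a quadratically small perturbation.

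Next, using the basis $\{u_{D},u_{N}\}$ from Theorem \ref{linear theory} I would set $\phi=C_{D}(r)u_{D}(r)+C_{N}(r)u_{N}(r)$ and obtain the variation-of-parameters ODEs
\begin{equation*}
C_{D}^{\prime}(r)=-\frac{u_{N}}{u_{D}^{\prime}u_{N}-u_{D}u_{N}^{\prime}}\cdot\frac{E[\phi]}{r^{2}\Omega_{RN}^{2}},\qquad C_{N}^{\prime}(r)=\frac{u_{D}}{u_{D}^{\prime}u_{N}-u_{D}u_{N}^{\prime}}\cdot\frac{E[\phi]}{r^{2}\Omega_{RN}^{2}},
\end{equation*}
exactly as in \eqref{Cd}--\eqref{Cn}. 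The horizon data for $\phi$ fix $C_{D}(r_{+})$ and $C_{N}(r_{+})$ at size $O(\epsilon)$, and the Wronskian bounds \eqref{infty wronskian}--\eqref{horizon wronskian} together with the asymptotics $u_{D}\sim r^{-3/2-\Delta}$, $u_{N}\sim r^{-3/2+\Delta}$ let me bound each of the four pieces of $C_{D}^{\prime}$ and $C_{N}^{\prime}$ by $C\Vert f\Vert_{X}^{2}\Vert\phi\Vert_{X}$ times an integrable weight, reprising the estimates \eqref{1}--\eqref{4}.

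The upgrade from the weaker $N$-norm of Section \ref{sec:proof of linear theorem} to the $X$-norm then hinges on the observation that $M_{c}^{f}$, as supplied by Theorem \ref{thm: fix point stationary D}, is precisely the shooting parameter that makes $\phi$ satisfy the Dirichlet boundary condition; equivalently, it forces $C_{N}(\infty)=0$. I can therefore integrate $C_{N}^{\prime}$ from infinity, $C_{N}(r)=-\int_{r}^{\infty}C_{N}^{\prime}(\bar r)\,d\bar r$, and the decay $u_{N}\sim r^{-3/2+\Delta}$ then delivers exactly the $r^{3/2+\Delta}$ weight built into the $X$-norm. The contribution $C_{D}u_{D}$ already carries the $r^{-3/2-\Delta}$ decay of $u_{D}$, so its $X$-norm is automatic. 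Combining these gives an inequality of the form $\Vert\phi\Vert_{X}\leq C\epsilon+C\Vert f\Vert_{X}^{2}\Vert\phi\Vert_{X}$, which for $\delta$ small absorbs the nonlinear term and produces $\Vert\phi\Vert_{X}\leq C\epsilon\leq C(\Vert f\Vert_{X}^{2}+1)\epsilon$.

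I expect the main obstacle to be the convergence of the improper integral for $C_{N}$ at infinity: the pointwise bound $E[\phi]=O(\Vert f\Vert_{X}^{2})$ is not by itself sufficient, and one must use the $r$-weighted improvements from Lemma \ref{varpi} and Lemma \ref{Q} (where $|\varpi-M|$ and $|Q-e|$ come with explicit powers of $r$ and $r-r_{+}$) to produce enough decay in $E[\phi]$ so that the integrand against $u_{D}/(u_{D}^{\prime}u_{N}-u_{D}u_{N}^{\prime})$ is integrable at $r=\infty$. Once these weighted estimates are carefully propagated through the Wronskian integrals, the remainder of the argument is a direct transcription of the analysis carried out in Section \ref{sec:proof of linear theorem}.
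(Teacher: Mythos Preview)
Your overall strategy is exactly the paper's: rewrite \eqref{fix point phi} as $L[\phi]=E[\phi]$ with $L$ the stationary Klein--Gordon operator on Reissner--Nordstr\"om-AdS, run variation of parameters with the basis $\{u_{D},u_{N}\}$ of Theorem \ref{linear theory}, and exploit that the choice of mass in Theorem \ref{thm: fix point stationary D} forces $C_{N}(\infty)=0$ so that $C_{N}$ can be integrated from infinity. The paper linearizes around $M_{c}^{f}$ rather than $M_{c}$, but since you invoke \eqref{Mc estimate} this difference is harmless.

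There is, however, one genuine gap in the closure step. Your absorption inequality $\Vert\phi\Vert_{X}\leq C\epsilon+C\Vert f\Vert_{X}^{2}\Vert\phi\Vert_{X}$ presupposes that $\Vert\phi\Vert_{X}$ is already finite, but at this stage you only know the Dirichlet condition $r^{3/2-\Delta}\phi\to 0$ and the weaker $N$-norm bound from Section \ref{sec:proof of linear theorem}; neither gives $r^{3/2+\Delta}\phi$ bounded a priori. A direct bootstrap on $[r_{+},R]$ does not close either, because the integral for $C_{N}$ must run to infinity. The paper sidesteps this circularity by \emph{not} closing in the $X$-norm: it first records the $N$-norm estimate \eqref{basic estimate}, $\Vert\phi\Vert_{N}\leq C\epsilon$, and uses that (not $\Vert\phi\Vert_{X}$) to bound the error integrals for $C_{D}$ and $C_{N}$. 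This yields explicit finite bounds $|C_{N}(r)|\lesssim \epsilon\Vert f\Vert_{X}^{2}r^{-2}$ and $|C_{D}(r)-\widetilde C_{D}\epsilon|\lesssim \epsilon\Vert f\Vert_{X}^{2}r^{-2+2\Delta}$, which already give the $X$-norm when $\Delta\leq 1$; when $\Delta>1$ the paper feeds the resulting improved decay \eqref{improved estimate} back once more to reach the full $X$-norm. So the missing ingredient in your sketch is precisely to invoke the $N$-norm bound of Section \ref{sec:proof of linear theorem} as input (and to iterate once if $\Delta>1$), rather than attempting to absorb an unknown-to-be-finite $\Vert\phi\Vert_{X}$ on the right.
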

\begin{proof}
We can write the equation \eqref{fix point phi} as\begin{equation}
\label{error form}
\begin{aligned}
&-\frac{d}{dr}\left(r^{2}\Omega_{RN,M_{c}^{f}}^{2}\frac{d\phi}{dr}\right)+\left(\bigl(-\frac{\Lambda}{3}\bigr)\alpha-\frac{q_{0}^{2}A_{RN}^{2}}{\Omega_{RN,M_{c}^{f}}^{2}}\right)r^{2}\phi \\=&\frac{d}{dr}\left(r^{2}\left(\Omega^{2}\kappa^{-2}-\Omega_{RN,M_{c}^{f}}^{2}\right)\frac{d\phi}{dr}\right)+\left(\frac{q_{0}^{2}A^{2}}{\Omega^{2}}-\frac{q_{0}^{2}A_{RN}^{2}}{\Omega_{RN,M_{c}^{f}}^{2}}\right)r^{2}\phi+r^{2}\Omega^{2}\kappa^{-2}\kappa^{-1}\frac{d\kappa}{dr}\frac{d\phi}{dr}\\=&:E_{f}[\phi].
\end{aligned}
\end{equation}
Let \begin{equation*}
\phi = C_{D}(r)u_{D}(r)+C_{N}(r)u_{N}(r),
\end{equation*}
where $u_{D}$ and $u_{N}$ are defined as in Theorem \ref{linear theory}. Let $\widetilde{C}_{D}$ and $\widetilde{C}_{N}$ be two constants such that \begin{equation*}
\lim_{r\rightarrow r_{+}}\widetilde{C}_{D}u_{D}(r)+\widetilde{C}_{N}u_{N}(r) = 1.
\end{equation*}
Since $\phi$ satisfies the Dirichlet boundary condition and $\phi(r_{+}) = \epsilon$, we have\begin{equation*}
\lim_{r\rightarrow\infty} C_{N}(r) = 0,\quad C_{D}(r_{+}) = \widetilde{C}_{D}\epsilon.
\end{equation*}
Then we can get equations for $C_{D}(r)$ and $C_{N}(r)$ as in Section \ref{sec:proof of linear theorem}\begin{align*}
&C_{D}^{\prime} = \frac{-u_{N}E_{f}[\phi]}{r^{2}\Omega_{RN,M_{c}^{f}}^{2}(u_{D}^{\prime}u_{N}-u_{N}^{\prime}u_{D})},\\&
C_{N}^{\prime} = \frac{u_{D}E_{f}[\phi]}{r^{2}\Omega_{RN,M_{c}^{f}}^{2}(u_{D}^{\prime}u_{N}-u_{D}u_{N}^{\prime})}.
\end{align*}
We estimate $C_{N}$ first. By the Dirichlet boundary condition, we have\begin{equation}
\begin{aligned}
\vert C_{N}(r)\vert  \leq &\int_{r}^{\infty}\frac{\vert u_{D}\vert}{\bar{r}^{2}\Omega_{RN,M_{c}^{f}}^{2}\vert u_{D}^{\prime}u_{N}-u_{N}^{\prime}u_{D}\vert}\left\vert\frac{d}{dr}\left(r^{2}\left(\Omega^{2}\kappa^{-2}-\Omega_{RN,M_{c}^{f}}^{2}\right)\frac{d\phi}{dr}\right)\right\vert\\&
+\frac{\vert u_{D}\vert}{\bar{r}^{2}\Omega_{RN,M_{c}^{f}}^{2}\vert u_{D}^{\prime}u_{N}-u_{D}u_{N}^{\prime}\vert}\left\vert\left(\frac{q_{0}^{2}A^{2}}{\Omega^{2}}-\frac{q_{0}^{2}A_{RN}^{2}}{\Omega_{RN,M_{c}^{f}}^{2}}\right)\bar{r}^{2}\phi\right\vert\\&+
\frac{\vert u_{D}\vert}{\bar{r}^{2}\Omega_{RN,M_{c}^{f}}^{2}\vert u_{D}^{\prime}u_{N}-u_{N}^{\prime}u_{D}\vert}\left\vert\kappa^{-1}\frac{d\kappa}{dr}\bar{r}^{2}\Omega^{2}\kappa^{-2}\frac{d\phi}{dr}\right\vert\mathrm{d}\bar{r}
\end{aligned}
\label{Cd equation}
\end{equation}
Keeping track of the constant dependence in Section \ref{sec:proof of linear theorem}, we have\begin{equation}
\left\vert r^{\frac{3}{2}-\Delta}\phi\right\vert+\left\vert r^{\frac{5}{2}-\Delta}\frac{d\phi}{dr}\right\vert+\left\vert r^{\frac{3}{2}-\Delta}\Omega_{RN,M_{c}^{f}}^{2}\frac{d^{2}\phi}{dr^{2}}\right\vert\leq C\epsilon.\label{basic estimate}
\end{equation}
We can estimate $\eqref{Cd equation}$ the same way as in Section $\ref{sec:proof of linear theorem}$. For the first term on the right hand side of $\eqref{Cd equation}$, we have\begin{equation}
\label{Cd estimate1}
\begin{aligned}
&\int_{r}^{\infty}\frac{\vert u_{D}\vert}{\bar{r}^{2}\Omega_{RN,M_{c}^{f}}^{2}\vert u_{D}^{\prime}u_{N}-u_{D}u_{N}^{\prime}\vert}\left\vert\frac{d}{dr}\left(\left(-2(\varpi-M_{c}^{f})\bar{r}+(Q^{2}-e^{2})\right)\frac{d\phi}{dr}\right)\right\vert\mathrm{d}\bar{r}\\\leq&\int_{r}^{\infty}\frac{\vert u_{D}\vert}{\bar{r}^{2}\Omega_{RN,M_{c}^{f}}^{2}\vert u_{D}^{\prime}u_{N}-u_{D}u_{N}^{\prime}\vert}\left(2\vert\varpi-M_{c}^{f}\vert+\vert Q^{2}-e^{2}\vert\right)\left\vert\frac{d^{2}\phi}{dr^{2}}\right\vert\\&
+\frac{\vert u_{D}\vert}{\bar{r}^{2}\Omega_{RN,M_{c}^{f}}^{2}\vert u_{D}^{\prime}u_{N}-u_{D}u_{N}^{\prime}\vert}\left(2\left\vert\frac{d\varpi}{dr}\right\vert+2\vert\varpi-M_{c}^{f}\vert+2\left\vert Q\frac{dQ}{dr}\right\vert\right)\left\vert\frac{d\phi}{dr}\right\vert\mathrm{d}\bar{r}\\\leq&
C\Vert f\Vert_{X}^{2}\epsilon r^{-4}.
\end{aligned}
\end{equation} 
For the second term on the right hand side of $\eqref{Cd equation}$, we have\begin{equation}
\label{Cd estimate2}
\int_{r}^{\infty}\frac{\vert u_{D}\vert}{\bar{r}^{2}\Omega_{RN,M_{c}^{f}}^{2}\vert u_{D}^{\prime}u_{N}-u_{D}u_{N}^{\prime}\vert}\left\vert\left(\frac{q_{0}^{2}A^{2}}{\Omega^{2}}-\frac{q_{0}^{2}A_{RN}^{2}}{\Omega_{RN,M_{c}^{f}}^{2}}\right)\bar{r}^{2}\phi\right\vert\leq C\epsilon\Vert f\Vert_{X}^{2}r^{-2}.
\end{equation}
For the third term on the right hand side of $\eqref{Cd equation}$, we have\begin{equation}
\label{Cd estimate3}
\int_{r}^{\infty}\frac{\vert u_{D}\vert}{\bar{r}^{2}\Omega_{RN,M_{c}^{f}}^{2}\vert u_{D}^{\prime}u_{N}-u_{D}u_{N}^{\prime}\vert}\left\vert\kappa^{-1}\frac{d\kappa}{dr}\bar{r}^{2}\Omega^{2}\kappa^{-2}\frac{d\phi}{dr}\right\vert\mathrm{d}\bar{r}\leq C \epsilon r^{-2-2\Delta}\Vert f\Vert_{X}^{2}
\end{equation}
Combining $\eqref{Cd estimate1}-\eqref{Cd estimate3}$, we have\begin{equation}
\vert C_{N}(r)\vert\leq C\epsilon\Vert f\Vert_{X}^{2}r^{-2}.
\end{equation}

Similarly for $C_{D}(r)$, we can get\begin{equation*}
\left\vert C_{D}(r)-\widetilde{C}_{D}\Vert f\Vert_{X}\right\vert\leq C\epsilon\Vert f\Vert_{X}^{2}r^{-2+2\Delta}.
\end{equation*}
Hence if $\Delta\leq1$, we have\begin{align*}
&\left\vert r^{\frac{3}{2}+\Delta}(C_{D}(r)u_{D}+C_{N}(r)u_{N})\right\vert\leq \left\vert r^{2\Delta}C_{D}(r)\right\vert \left\vert r^{\frac{3}{2}-\Delta}u_{D}\right\vert+\left\vert C_{N}(r)r^{\frac{3}{2}+\Delta}u_{N}\right\vert\leq C\left(\Vert f\Vert_{X}^{2}+1\right)\epsilon,\\&
\left\vert r^{\frac{5}{2}+\Delta}\left(C_{D}(r)u_{D}^{\prime}+C_{N}(r)u_{N}^{\prime}\right)\right\vert\leq \left\vert r^{2\Delta}C_{D}(r)\right\vert \left\vert r^{\frac{5}{2}-\Delta}u^{\prime}_{D}\right\vert+\left\vert C_{N}(r)r^{\frac{5}{2}+\Delta}u_{N}^{\prime}\right\vert\leq C\left(\Vert f\Vert_{X}^{2}+1\right)\epsilon,\\&
\left\vert r^{\frac{3}{2}+\Delta}\Omega_{RN,M_{c}^{f}}^{2}(C_{D}^{\prime}u_{D}^{\prime}+C_{N}^{\prime}u_{N}^{\prime})\right\vert\leq C(\Vert f\Vert_{X}^{2}+1)\epsilon.
\end{align*}
If $\Delta>1$, then similarly we have\begin{equation}
\left\vert r^{\frac{7}{2}-\Delta}\phi\right\vert+\left\vert r^{\frac{9}{2}-\Delta}\frac{d\phi}{dr}\right\vert+\left\vert r^{\frac{7}{2}-\Delta}\Omega_{RN,M_{c}^{f}}^{2}\frac{d^{2}\phi}{dr^{2}}\right\vert\leq C(\Vert f\Vert_{X}^{2}+1)\epsilon.\label{improved estimate}
\end{equation}
Then using \eqref{improved estimate} instead of \eqref{basic estimate} in the above estimate, we can conclude that\begin{equation}
\Vert \phi\Vert_{X}\leq C\epsilon(1+\Vert f\Vert^{2}_{X}).
\end{equation}
\end{proof}
Then by Proposition \ref{estimate for Tf}, for $\epsilon=\delta$ small enough, $T$ is a map from $X_{\delta}$ to $X_{\delta}$.

\subsection{Estimate for $\Vert T[f_{1}]-T[f_{2}]\Vert_{X}$} To prove $T$ is a contraction, we still need to estimate $\Vert T[f_{1}]-T[f_{2}]\Vert$. Since for different $f\in X_{\delta}$, we will get different $M$ in Theorem \ref{thm: fix point stationary D}. The key estimate in this section will be the estimate for $M$, the idea of which has already been addressed in the proof of \eqref{Mc estimate} in Appendix \ref{appendix}. To be precise, we have the following lemma:
\begin{lemma}
\label{lemma: estimate for the difference of M}
Let $M_{c}^{f}$ be defined as in Theorem \ref{thm: fix point stationary D}. Then for $\epsilon,\delta$ small enough and any $f_{1},f_{2}\in X_{\delta}$, the following estimate holds:\begin{equation}
\vert M_{c}^{f_{1}}-M_{c}^{f_{2}}\vert\leq C\Vert f_{1}-f_{2}\Vert\delta^{3}.
\end{equation}
\end{lemma}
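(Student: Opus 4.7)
The plan is to adapt the implicit function argument underlying estimate \eqref{Mc estimate} in Appendix \ref{appendix}, now applied to the \emph{difference} $M_c^{f_1}-M_c^{f_2}$ rather than the deviation from the unperturbed value $M_c$.

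I would first recast $M_c^f$ as the zero of a scalar functional. For each admissible $M$ close to $M_c$ and each $f\in X_\delta$, the system \eqref{fixpointeq1}--\eqref{fixpointeq5} yields coefficients $(\kappa,\Omega^{2},\varpi,Q,A)$ (with $\varpi(r_+)=M$), and \eqref{fix point phi} admits a unique horizon-regular solution $\phi^{M,f}$ normalized by $\phi^{M,f}(r_+)=\epsilon$, by the construction of Section \ref{sec:proof of linear theorem}. Writing $\phi^{M,f}=C_D^{M,f}u_D+C_N^{M,f}u_N$ in the basis of Theorem \ref{linear theory} at mass $M$, the Dirichlet boundary condition is exactly the scalar equation $\mathcal{F}(M,f):=\lim_{r\to\infty}C_N^{M,f}(r)=0$, and $M_c^f$ is its root. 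The non-degeneracy $\partial_M\mathcal{F}(M_c,0)\neq 0$ -- the transversality that drives the proof of \eqref{Mc estimate} -- propagates by continuity to a uniform lower bound $|\partial_M\mathcal{F}(M_c^f,f)|\gtrsim 1$ for $\|f\|_X\leq\delta$ small enough. Subtracting $0=\mathcal{F}(M_c^{f_1},f_1)=\mathcal{F}(M_c^{f_2},f_2)$ and expanding in $M$ by the fundamental theorem of calculus yields
\begin{equation*}
\Bigl(\int_0^1 \partial_M\mathcal{F}\bigl((1-s)M_c^{f_2}+sM_c^{f_1},f_1\bigr)\,ds\Bigr)(M_c^{f_1}-M_c^{f_2})=\mathcal{F}(M_c^{f_2},f_2)-\mathcal{F}(M_c^{f_2},f_1),
\end{equation*}
so the lemma reduces to the estimate $|\mathcal{F}(M_c^{f_2},f_1)-\mathcal{F}(M_c^{f_2},f_2)|\lesssim\delta^{3}\|f_1-f_2\|_X$.

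The heart of the proof is this last estimate. By the variation-of-parameters formula of Section \ref{sec:proof of linear theorem}, $\mathcal{F}(M,f)$ is an integral over $(r_+,\infty)$ of the error term $E_f[\phi^{M,f}]$ of \eqref{error form} against $u_D/(r^2\Omega_{RN,M}^{2}W)$, with Wronskian $W=u_D'u_N-u_Du_N'$. I decompose at the common mass $M=M_c^{f_2}$:
\begin{equation*}
E_{f_1}[\phi^{M,f_1}]-E_{f_2}[\phi^{M,f_2}]=\bigl(E_{f_1}-E_{f_2}\bigr)[\phi^{M,f_1}]+E_{f_2}[\phi^{M,f_1}-\phi^{M,f_2}].
\end{equation*}
The coefficients entering $E_f$ -- namely $\Omega^{2}\kappa^{-2}-\Omega_{RN,M}^{2}$, $A^{2}/\Omega^{2}-A_{RN}^{2}/\Omega_{RN,M}^{2}$, and $\kappa^{-1}\,d\kappa/dr$ -- are each quadratic in $f$ by \eqref{fixpointeq1}, \eqref{fixpointeq3}, \eqref{fixpointeq4}, so their differences factor as $O(\delta\|f_1-f_2\|_X)$; paired with $\|\phi^{M,f_1}\|_X\lesssim\epsilon$ from Proposition \ref{estimate for Tf}, the first bracket contributes $O(\delta\epsilon\|f_1-f_2\|_X)$. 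For the second bracket I would run an auxiliary variation-of-parameters estimate, parallel to Proposition \ref{estimate for Tf} but applied to the linearized equation satisfied by $\phi^{M,f_1}-\phi^{M,f_2}$, yielding $\|\phi^{M,f_1}-\phi^{M,f_2}\|_X\lesssim\epsilon\|f_1-f_2\|_X$; combined with the $O(\delta^{2})$ coefficients of $E_{f_2}$, this gives $O(\delta^{2}\epsilon\|f_1-f_2\|_X)$. Choosing $\epsilon\lesssim\delta^{2}$ within the fixed-point setup (allowed since Proposition \ref{estimate for Tf} only requires $\epsilon\leq\delta/C$) converts the bound $O(\delta\epsilon\|f_1-f_2\|_X)$ into the advertised $\delta^{3}\|f_1-f_2\|_X$.

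The principal obstacle is the circularity between bounding $|M_c^{f_1}-M_c^{f_2}|$ and $\|\phi^{M_c^{f_2},f_1}-\phi^{M_c^{f_2},f_2}\|_X$: the error term $E_{f_2}[\phi^{M,f_1}-\phi^{M,f_2}]$ involves the unknown difference of the $\phi$'s, which itself depends on the mass shift through the underlying linear theory. I would resolve this by working throughout at the common mass $M=M_c^{f_2}$ -- where the basis $\{u_D,u_N\}$ is fixed and the Wronskian estimates \eqref{infty wronskian}--\eqref{horizon wronskian} apply uniformly -- and extracting $M_c^{f_1}-M_c^{f_2}$ only at the very end through the transversality identity displayed above. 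This cleanly decouples the linear analysis of $\phi^{M,f_1}-\phi^{M,f_2}$ at fixed $M$ from the scalar reduction for the mass shift, mirroring the strategy that establishes \eqref{Mc estimate}.
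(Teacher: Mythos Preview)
Your approach is genuinely different from the paper's. The paper does not treat $M_c^f$ as the zero of a shooting coefficient $\mathcal{F}(M,f)$ and does not invoke an implicit-function/transversality argument. Instead it uses the \emph{variational} characterization from Appendix \ref{appendix}: $M_c^f$ is the threshold mass at which the energy functional $\mathcal{L}_{f,M}$ ceases to admit negative bound states, and the Dirichlet minimizer $\phi_1:=T[f_1]\in X$ satisfies $\mathcal{L}_{f_1,M_c^{f_1}}[\phi_1]=0$ while $\mathcal{L}_{f_2,M_c^{f_2}}[\phi_1]\geq 0$. The difference $\mathcal{L}_{f_2,M_c^{f_2}}[\phi_1]-\mathcal{L}_{f_1,M_c^{f_1}}[\phi_1]$ is split into a same-$f$/different-$M$ piece, bounded \emph{below} by $C(M_c^{f_1}-M_c^{f_2})\int r^3(r-r_+)(\phi_1')^2\,dr$ via Proposition \ref{different M}, and a different-$f$/same-$M$ piece, bounded \emph{above} by $C\delta^3\|f_1-f_2\|_X\int\bigl(r^3(r-r_+)(\phi_1')^2+r^2\phi_1^2\bigr)\,dr$ via Proposition \ref{different f}; the two integrals are then compared using the equation $\phi_1$ solves. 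The point is that throughout one evaluates varying functionals at the \emph{single fixed} Dirichlet test function $\phi_1\in X$, so no $X$-norm control on any auxiliary solution is ever needed.

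Your argument has a gap exactly where you locate ``the principal obstacle''. You work at the common mass $M=M_c^{f_2}$ with the horizon-regular solution $\phi^{M,f_1}$ and invoke $\|\phi^{M,f_1}\|_X\lesssim\epsilon$ from Proposition \ref{estimate for Tf}. But that proposition applies only at $M=M_c^{f_1}$, where the Dirichlet condition is satisfied; at $M=M_c^{f_2}\neq M_c^{f_1}$ the solution $\phi^{M,f_1}$ carries a nonzero $u_N$-tail (indeed that tail \emph{is} $\mathcal{F}(M_c^{f_2},f_1)$), so $r^{\frac32+\Delta}\phi^{M,f_1}\sim r^{2\Delta}$ is unbounded and $\phi^{M,f_1}\notin X$. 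Consequently neither $\|\phi^{M,f_1}\|_X\lesssim\epsilon$ nor the auxiliary estimate $\|\phi^{M,f_1}-\phi^{M,f_2}\|_X\lesssim\epsilon\|f_1-f_2\|_X$ is available as stated. One can try to repair this by working in the weaker $\|\cdot\|_N$ norm of Section \ref{sec:proof of linear theorem}, but then all the $r$-weighted integral bounds must be redone. Separately, the transversality $\partial_M\mathcal{F}(M_c,0)\neq 0$ is asserted rather than proved; in the paper this role is played by the quantitative monotonicity of Proposition \ref{different M}, which delivers the lower bound directly without ever differentiating in $M$. Finally, your device of imposing $\epsilon\lesssim\delta^2$ to convert $O(\delta\epsilon)$ into $O(\delta^3)$ is extraneous to the paper's argument, which obtains $\delta^3$ from Proposition \ref{different f} directly.
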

The proof of Lemma \ref{lemma: estimate for the difference of M} lies in two parts\begin{enumerate}
\item Estimate $(\kappa_{1}-\kappa_{2},\Omega_{1}^{2}-\Omega_{2}^{2},\varpi_{1}-\varpi_{2},Q_{1}-Q_{2},A_{1}-A_{2})$ where $(\kappa_{i},\Omega_{i},\varpi_{i},Q_{i},A_{i})$ are the solutions to \eqref{fixpointeq1}-\eqref{fixpointeq5} with the same $f$ but different initial conditions for $\varpi$; see Section \ref{sec1}.
\item Estimate $(\kappa_{1}-\kappa_{2},\Omega_{1}^{2}-\Omega_{2}^{2},\varpi_{1}-\varpi_{2},Q_{1}-Q_{2},A_{1}-A_{2})$ where $(\kappa_{i},\Omega_{i},\varpi_{i},Q_{i},A_{i})$ are the solutions to \eqref{fixpointeq1}-\eqref{fixpointeq5} with the same initial conditions but different $f$; see Section \ref{sec2}.
\end{enumerate}
Then using the argument in Lemma \ref{pro: estimate for different M} in Appendix \ref{appendix}, we can conclude the proof of Lemma \ref{lemma: estimate for the difference of M}.

\subsubsection{Estimate for \eqref{fixpointeq1}-\eqref{fixpointeq5} with the same $f$ but different initial conditions}
\label{sec1}
Let $(\kappa_{i},\Omega_{i}^{2},\varpi_{i},Q_{i},A_{i})$ be the solutions to \eqref{fixpointeq1}-\eqref{fixpointeq5} with initial conditions:
\begin{equation}
\label{difference initial}
\kappa_{i}^{-1}(r_{+}) = 1-\frac{\bigl(-\frac{\Lambda}{3}\bigr)\alpha r_{+}}{2(K_{+})_{i}}\epsilon^{2},\quad A_{i}(r_{+})=0,\quad \Omega_{i}^{2}(r_{+}) = 0,\quad \varpi_{i}(r_{+}) = M_{c}^{f_{i}}.
\end{equation}
By subtracting equations \eqref{fixpointeq1}-\eqref{fixpointeq5}, we can get the following difference equations:\begin{align*}
\kappa_{1}^{-1}\frac{d\kappa_{1}}{dr}-\kappa_{2}^{-2}\frac{d\kappa_{2}}{dr} =& q_{0}^{2}rf^{2}\left(\kappa_{1}^{2}\left(\frac{A_{1}}{\Omega_{1}^{2}}\right)^{2}-\kappa_{2}^{2}\left(\frac{A_{2}}{\Omega_{2}^{2}}\right)^{2}\right),\\
\Omega_{1}^{2}\kappa_{1}^{-2}-\Omega_{2}^{2}\kappa_{2}^{-2}=& -\frac{2}{r}\left(\varpi_{1}-\varpi_{2}\right)+\frac{Q_{1}^{2}-Q_{2}^{2}}{r^{2}},\\
\frac{dQ_{1}-Q_{2}}{dr} =& q_{0}^{2}r^{2}f^{2}\left(\kappa_{1}\frac{A_{1}}{\Omega_{1}^{2}}-\kappa_{2}\frac{A_{2}}{\Omega_{2}^{2}}\right),\\
\frac{dA_{1}-A_{2}}{dr} =& \frac{Q_{1}\kappa_{1}}{r^{2}}-\frac{Q_{2}\kappa_{2}}{r^{2}},\\
\frac{d\varpi_{1}-\varpi_{2}}{dr}+r\left(\frac{df}{dr}\right)^{2}(\varpi_{1}-\varpi_{2}) =& \frac{1}{2}\left(\frac{df}{dr}\right)^{2}(Q_{1}^{2}-Q_{2}^{2})+\frac{1}{2}q_{0}^{2}r^{2}f^{2}\left(\frac{A_{1}^{2}}{\Omega_{1}^{2}}-\frac{A_{2}^{2}}{\Omega_{2}^{2}}\right)\\&+q_{0}^{2}rf^{2}\left(\frac{Q_{1}\kappa_{1}A_{1}}{\Omega_{1}^{2}}-\frac{Q_{2}\kappa_{2}A_{2}}{\Omega_{2}^{2}}\right).
\end{align*}
We can also compute the initial data for $\frac{A_{i}r^{2}}{\Omega_{i}^{2}}$:\begin{equation*}
\lim_{r\rightarrow r_{+}}\frac{A_{i}r^{2}}{\Omega_{i}^{2}} = \frac{e_{i}}{2(K_{+})_{i}}.
\end{equation*}
Assuming $M_{c}^{f_{2}}<M_{c}^{f_{1}}$, we have the following estimates at $r = r_{+}$:\begin{align*}
&\varpi_{1}(r_{+})-\varpi_{2}(r_{+}) = M_{c}^{f_{1}}-M_{c}^{f_{2}},\\
&\frac{C}{2}(M_{c}^{f_{1}}-M_{c}^{f_{2}})\epsilon^{2}<\kappa_{1}(r_{+})-\kappa_{2}(r_{+})<2C(M_{c}^{f_{1}}-M_{c}^{f_{2}})\epsilon^{2},\\&
 \frac{C}{2}(M_{c}^{f_{1}}-M_{c}^{f_{2}})<Q_{1}-Q_{2}\leq 2C(M_{c}^{f_{1}}-M_{c}^{f_{2}}),\\&
\frac{C}{2}(M_{c}^{f_{1}}-M_{c}^{f_{2}})<\Bigl(\frac{A_{1}r^{2}}{\Omega_{1}^{2}}-\frac{A_{2}r^{2}}{\Omega_{2}^{2}}\Bigr)< 2C(M_{c}^{f_{1}}-M_{c}^{f_{2}}).
\end{align*}
Then by a bootstrap argument as in Section \ref{sec:proof of linear theorem}, we have the following proposition:
\begin{proposition}
\label{different M}
Let $(\kappa_{i},\Omega_{i}^{2},\varpi_{i},Q_{i},A_{i})$ be the solutions to \eqref{fixpointeq1}-\eqref{fixpointeq5} with initial conditions \eqref{difference initial}. Then for $\epsilon,\delta$ small enough and any $f\in X_{\delta}$, the following estimates hold\begin{align}
&\frac{C}{2}(M_{c}^{f_{1}}-M_{c}^{f_{2}})\epsilon^{2}<\kappa_{1}(r)-\kappa_{2}(r)<2C(M_{c}^{f_{1}}-M_{c}^{f_{2}}){\epsilon^{2}},\\&
\frac{C}{2}(M_{c}^{f_{1}}-M_{c}^{f_{2}})<Q_{1}-Q_{2}<2C(M_{c}^{f_{1}}-M_{c}^{f_{2}}),\\&
\frac{1}{2}(M_{c}^{f_{1}}-M_{c}^{f_{2}})<\varpi_{1}-\varpi_{2}<2(M_{c}^{f_{1}}-M_{c}^{f_{2}}),\\&
\frac{1}{2}C(M_{c}^{f_{1}}-M_{c}^{f_{2}})<\frac{A_{1}r^{2}}{\Omega_{1}^{2}\kappa_{1}^{-2}}-\frac{A_{2}r^{2}}{\Omega_{2}\kappa_{2}^{-2}}<2C(M_{c}^{f_{1}}-M_{c}^{f_{2}}),\\&
0\leq A_{1}-A_{2}<C(M_{c}^{f_{1}}-M_{c}^{f_{2}}).
\end{align}
\end{proposition}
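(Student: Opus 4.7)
The approach is a continuity/bootstrap argument on the interval $[r_{+},\infty)$, directly analogous to the bootstrap in Section \ref{sec:proof of linear theorem} but now applied to the \emph{difference} of two solutions sharing the same $f$ but starting from distinct initial data. Write $\Delta M := M_{c}^{f_{1}}-M_{c}^{f_{2}}>0$ and set
\begin{equation*}
\mathfrak{D}(r) := \bigl(\kappa_{1}-\kappa_{2},\,Q_{1}-Q_{2},\,\varpi_{1}-\varpi_{2},\,A_{1}-A_{2},\,\tfrac{A_{1}r^{2}}{\Omega_{1}^{2}\kappa_{1}^{-2}}-\tfrac{A_{2}r^{2}}{\Omega_{2}^{2}\kappa_{2}^{-2}}\bigr)(r).
\end{equation*}
Let $R\in (r_{+},\infty]$ be the largest radius such that on $[r_{+},R)$ each component of $\mathfrak{D}(r)$ satisfies the bounds stated in the proposition but with doubled constants (i.e.\ $4C$ in place of $2C$ and $\frac{C}{4}$ in place of $\frac{C}{2}$). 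The explicit initial-data computations given just above the statement show $R>r_{+}$. I will show that under these bootstrap assumptions, combined with the a priori estimates \eqref{basic estimate for fix point 1}--\eqref{basic estimate for fix point2} applied to each $(\kappa_{i},\Omega_{i}^{2},\varpi_{i},Q_{i},A_{i})$, the bounds actually hold with the stated constants $2C$ and $\frac{C}{2}$; then continuity forces $R=\infty$.

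First I would integrate the difference equations in the order $\kappa_{1}-\kappa_{2}$, then $Q_{1}-Q_{2}$, then $A_{1}-A_{2}$, then $\varpi_{1}-\varpi_{2}$, and finally the quantity $\tfrac{A_{1}r^{2}}{\Omega_{1}^{2}\kappa_{1}^{-2}}-\tfrac{A_{2}r^{2}}{\Omega_{2}^{2}\kappa_{2}^{-2}}$. Since $f\in X_{\delta}$, each integrand of the form $r(\tfrac{df}{dr})^{2}$ or $r^{2}f^{2}$ contributes a factor that is integrable on $[r_{+},\infty)$ and of size $O(\delta^{2})$. For the $\kappa$-difference, we write the right-hand side of the difference equation as
\begin{equation*}
q_{0}^{2}rf^{2}\Bigl(\kappa_{1}^{2}-\kappa_{2}^{2}\Bigr)\Bigl(\tfrac{A_{1}}{\Omega_{1}^{2}}\Bigr)^{2}+q_{0}^{2}rf^{2}\kappa_{2}^{2}\Bigl(\tfrac{A_{1}}{\Omega_{1}^{2}}-\tfrac{A_{2}}{\Omega_{2}^{2}}\Bigr)\Bigl(\tfrac{A_{1}}{\Omega_{1}^{2}}+\tfrac{A_{2}}{\Omega_{2}^{2}}\Bigr)
\end{equation*}
and use the bootstrap bound on the last factor; the extra power of $\epsilon^{2}$ in the $\kappa$ initial data is preserved because the integrated contribution is $O(\delta^{2})\Delta M$ which, by choosing $\delta$ small, can be absorbed into half the initial jump $C\Delta M\epsilon^{2}$. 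The same strategy handles $Q_{1}-Q_{2}$ and the $\varpi$-difference, the latter exactly as in Lemma \ref{varpi} using the integrating factor $\exp\bigl(\int_{r_{+}}^{r}\bar r((\tfrac{df}{dr})^{2}+q_{0}^{2}\kappa^{2}(\tfrac{A}{\Omega^{2}})^{2}f^{2})\,d\bar r\bigr)$, which is $1+O(\delta^{2})$.

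The main technical obstacle, as one expects from Section \ref{sec:proof of linear theorem}, is controlling the difference $\tfrac{A_{1}r^{2}}{\Omega_{1}^{2}\kappa_{1}^{-2}}-\tfrac{A_{2}r^{2}}{\Omega_{2}^{2}\kappa_{2}^{-2}}$ uniformly up to the event horizon, where both denominators vanish like $(r-r_{+})$. To handle this I would mimic the decomposition \eqref{decom of A}: writing $\Omega_{RN,i}^{2}=(r-r_{+})P_{i}(r)$ for a degree-one rational function $P_{i}$ bounded below by a positive constant, the difference becomes
\begin{equation*}
\tfrac{A_{1}r^{2}}{\Omega_{1}^{2}\kappa_{1}^{-2}}-\tfrac{A_{2}r^{2}}{\Omega_{2}^{2}\kappa_{2}^{-2}}=\Bigl[\tfrac{A_{1}r^{2}}{\Omega_{1}^{2}\kappa_{1}^{-2}}-\tfrac{A_{1}r^{2}}{\Omega_{RN,1}^{2}}\Bigr]-\Bigl[\tfrac{A_{2}r^{2}}{\Omega_{2}^{2}\kappa_{2}^{-2}}-\tfrac{A_{2}r^{2}}{\Omega_{RN,2}^{2}}\Bigr]+\Bigl[\tfrac{A_{1}r^{2}}{\Omega_{RN,1}^{2}}-\tfrac{A_{2}r^{2}}{\Omega_{RN,2}^{2}}\Bigr];
\end{equation*}
the first two brackets are each $O(\delta^{2})$ by \eqref{estimate for ren A}, and their difference is $O(\delta^{2})\Delta M$ by revisiting that proof with the already-controlled differences $Q_{1}-Q_{2},\varpi_{1}-\varpi_{2},\kappa_{1}-\kappa_{2}$; the third bracket is a smooth function of the parameters $(M_{c}^{f_{i}},e_{i},\Lambda,r_{+})$, hence linear in $\Delta M$ with universal constants. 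Collecting these estimates and choosing $\delta$ sufficiently small strictly improves the bootstrap inequalities, so the bootstrap closes on all of $[r_{+},\infty)$.
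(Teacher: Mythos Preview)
Your proposal is correct and follows precisely the route the paper indicates: the paper's own proof consists solely of the sentence ``by a bootstrap argument as in Section~\ref{sec:proof of linear theorem}'', and your continuity scheme on the difference equations is exactly that argument spelled out. One small correction: in your decomposition of $\tfrac{A_{1}r^{2}}{\Omega_{1}^{2}\kappa_{1}^{-2}}-\tfrac{A_{2}r^{2}}{\Omega_{2}^{2}\kappa_{2}^{-2}}$, the third bracket $\tfrac{A_{1}r^{2}}{\Omega_{RN,1}^{2}}-\tfrac{A_{2}r^{2}}{\Omega_{RN,2}^{2}}$ is \emph{not} purely a function of the parameters, since $A_{i}$ still depends on $f$ through the evolution; you need one more split $A_{i}=(A_{i}-A_{RN,i})+A_{RN,i}$, after which the $A_{RN,i}$ contribution is the smooth parameter-dependent piece you describe, while the $(A_{i}-A_{RN,i})$ contribution is handled via \eqref{estimate for A} and the already-improved bootstrap bounds on $Q_{1}-Q_{2}$ and $\kappa_{1}-\kappa_{2}$.
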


\subsubsection{Estimate for \eqref{fixpointeq1}-\eqref{fixpointeq5} with same initial conditions but different $f$}
\label{sec2}
Next, we consider the difference between $(\kappa_{i},\Omega_{i}^{2},\varpi_{i},Q_{i},A_{i})$ for a given $M$ and different functions $f_{1},f_{2}\in X_{\delta}$. By the same bootstrap argument, we have\begin{proposition}
\label{different f}
Let $(\kappa_{i},\Omega_{i}^{2},\varpi_{i},Q_{i},A_{i})$ be the solutions to \eqref{fixpointeq1}-\eqref{fixpointeq5} with $f = f_{i}\in X_{\delta}$ under the same initial conditions. Then we have the following estimates:
\begin{align}
&\vert\kappa_{1}-\kappa_{2}\vert\leq C\delta^{3}\Vert f_{1}-f_{2}\Vert_{X},\\&
\vert\varpi_{1}-\varpi_{2}\vert\leq C\delta^{3}\Vert f_{1}-f_{2}\Vert_{X},\\&
\vert Q_{1}-Q_{2}\Vert \leq C\delta^{3}\Vert f_{1}-f_{2}\Vert_{X},\\&
\vert A_{1}-A_{2}\vert\leq C\delta^{3}\Vert f_{1}-f_{2}\Vert_{X}.
\end{align}
\end{proposition}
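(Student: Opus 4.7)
The plan is to subtract the ODE systems \eqref{fixpointeq1}-\eqref{fixpointeq5} applied to $f = f_1$ and $f = f_2$, obtain a linear system for the five differences $(\kappa_1-\kappa_2,\; \Omega_1^2-\Omega_2^2,\; \varpi_1-\varpi_2,\; Q_1-Q_2,\; A_1-A_2)$, and close it by a bootstrap argument of exactly the same flavor as the one used in Proposition \ref{different M} and in Section \ref{sec:proof of linear theorem}. The crucial starting observation is that the initial data \eqref{perturbed initial and gauge condition} at $r = r_+$ depends only on the fixed parameters $(M, e, r_+, \Lambda, q_0)$ and on $\epsilon$, not on $f$. Hence all five differences vanish at $r = r_+$, and each difference is obtained as a pure integral of an inhomogeneity from $r_+$ outward.

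Next I would telescope each nonlinearity in the difference equations using factorizations such as
\begin{equation*}
f_1^2 - f_2^2 = (f_1+f_2)(f_1-f_2), \qquad \left(\tfrac{df_1}{dr}\right)^2 - \left(\tfrac{df_2}{dr}\right)^2 = \tfrac{d(f_1+f_2)}{dr}\tfrac{d(f_1-f_2)}{dr},
\end{equation*}
and, for the mixed products in the charged terms,
\begin{equation*}
\kappa_1^2\bigl(\tfrac{A_1}{\Omega_1^2}\bigr)^{\!2} f_1^2 - \kappa_2^2\bigl(\tfrac{A_2}{\Omega_2^2}\bigr)^{\!2} f_2^2
= \sum \bigl[\text{baseline factor}\bigr]\cdot\bigl[\text{single difference}\bigr],
\end{equation*}
writing out each single difference termwise. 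In every resulting term the ``baseline factor'' is $O(\delta)$ or better thanks to $f_i\in X_\delta$ and the a priori estimates \eqref{basic estimate for fix point 1}-\eqref{basic estimate for fix point2}, and the $r$-weight obtained from the $X$-norm of $f_i$ and from the explicit $r$-factors in the equations is integrable on $[r_+,\infty)$. Each term in the inhomogeneity thus carries at least one factor of $\|f_1-f_2\|_X$, at least two factors of $\delta$ from the squared structure of the source terms in \eqref{fixpointeq1}-\eqref{fixpointeq5}, and an integrable weight; together these yield a pointwise bound $C\delta^3\|f_1-f_2\|_X$ on the inhomogeneity, uniform in $r\in[r_+,\infty)$.

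I would then close the bootstrap: on a maximal interval $[r_+,R]$ postulate
\begin{equation*}
|\kappa_1-\kappa_2| + |\varpi_1-\varpi_2| + |Q_1-Q_2| + |A_1-A_2| + \Bigl|\tfrac{A_1 r^2}{\Omega_1^2\kappa_1^{-2}} - \tfrac{A_2 r^2}{\Omega_2^2\kappa_2^{-2}}\Bigr| \leq K\delta^3\|f_1-f_2\|_X ,
\end{equation*}
for a large constant $K$. Integrating the transport equation for $\kappa_1-\kappa_2$ with integrating factor $\exp\bigl(\int_{r_+}^r \bar r (df_2/d\bar r)^2\,d\bar r\bigr)$, integrating the mass equation with the analogous factor introduced in the proof of Lemma \ref{varpi}, and integrating the $Q$ and $A$ equations directly, each of the five differences is seen to satisfy an estimate strictly better than the bootstrap threshold provided $\delta$ and $\epsilon$ are chosen small. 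This improves the bootstrap and forces $R=\infty$.

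The main technical obstacle is, as always in this setup, the apparent singularity of $A/\Omega^2$ at the event horizon (since $\Omega^2(r_+)=0$ while $\lim_{r\to r_+} A/\Omega^2$ is finite by \eqref{initial A}). Differences of $A/\Omega^2$ must therefore be handled through the factorization $\Omega^2\kappa^{-2} = (r-r_+)P(r) + \text{perturbation}$ already exploited in \eqref{decom of A}, which expresses $A r^2/(\Omega^2\kappa^{-2})$ in terms of $A$, $\varpi$ and $Q$ with the singular factor cancelled. Once this is done, the singular behavior at $r_+$ reduces to differences of $\varpi$ and $Q$ that are already controlled by the bootstrap, and the estimate propagates cleanly to spatial infinity, yielding the four stated inequalities.
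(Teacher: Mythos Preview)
Your overall strategy---subtracting the two copies of \eqref{fixpointeq1}--\eqref{fixpointeq5}, telescoping the quadratic nonlinearities, and closing via a bootstrap on the differences with vanishing data at $r=r_+$---is correct and is exactly what the paper intends by ``by the same bootstrap argument'' (the paper supplies no further detail). Your treatment of the apparent horizon singularity of $A/\Omega^2$ via the factorization \eqref{decom of A} is also the right device.

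There is, however, a miscounting of $\delta$-factors in your proposal. The ``squared structure'' of the sources yields, upon differencing, terms like $(f_1+f_2)(f_1-f_2)$ and $\bigl(\tfrac{df_1}{dr}+\tfrac{df_2}{dr}\bigr)\bigl(\tfrac{df_1}{dr}-\tfrac{df_2}{dr}\bigr)$, which contribute \emph{one} factor of $\delta$ (from $\|f_1+f_2\|_X\le 2\delta$) together with one factor of $\|f_1-f_2\|_X$---not two factors of $\delta$. For instance, integrating the difference of \eqref{fixpointeq1} gives
\[
\bigl|\log\kappa_1-\log\kappa_2\bigr|\;\le\;\int_{r_+}^\infty \bar r\,\Bigl|\tfrac{df_1}{dr}+\tfrac{df_2}{dr}\Bigr|\,\Bigl|\tfrac{df_1}{dr}-\tfrac{df_2}{dr}\Bigr|\,d\bar r+\cdots\;\le\;C\delta\,\|f_1-f_2\|_X,
\]
and the same occurs for $Q$, $\varpi$, $A$. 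So your argument actually yields $C\delta\|f_1-f_2\|_X$, not $C\delta^3\|f_1-f_2\|_X$; the claim of ``at least two factors of $\delta$ \ldots\ together yield $C\delta^3\|f_1-f_2\|_X$'' is not supported. Note that the paper's own stated exponent $\delta^3$ is likewise not produced by the bootstrap it invokes; the method delivers $C\delta\|f_1-f_2\|_X$, and this is all that is needed downstream: in the error terms of \eqref{differror} these differences are multiplied by $\phi_2$ or $\tfrac{d\phi_2}{dr}$, which carry a further factor $\|\phi_2\|_X\le C\epsilon=C\delta$, so one still obtains $\|T[f_1]-T[f_2]\|_X\le C\delta^{2}\|f_1-f_2\|_X$ and the contraction in Proposition~\ref{contraction} closes.
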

Now we are ready to prove Lemma \ref{lemma: estimate for the difference of M}:
\begin{proof}
Let $\phi_{i}$ be the corresponding solutions to \eqref{fix point phi} for $M = M_{c}^{f_{i}}$. Then by the argument in Appendix \ref{appendix}, we have\begin{align*}
0\leq& \mathcal{L}_{f_{2},M_{c}^{f_{2}}}[\phi_{1}]-\mathcal{L}_{f_{1},M_{c}^{f_{1}}}[\phi_{1}]\\ = &
\mathcal{L}_{f_{2},M_{c}^{f_{2}}}[\phi_{1}]-\mathcal{L}_{f_{2},M_{c}^{f_{1}}}[\phi_{1}]+\mathcal{L}_{f_{2},M_{c}^{f_{1}}}[\phi_{1}]-\mathcal{L}_{f_{1},M_{c}^{f_{1}}}[\phi_{1}].
\end{align*}
Without loss of generality, assuming $M_{c}^{f_{1}}>M_{c}^{f_{2}}$, then by Proposition \ref{different M}, we have \begin{equation}
\mathcal{L}_{f_{1},M_{c}^{f_{1}}}[\phi_{1}]-\mathcal{L}_{f_{1},M_{c}^{f^{2}}}[\phi_{1}]\geq C(M_{c}^{f_{1}}-M_{c}^{f_{2}})\int_{r_{+}}^{\infty}r^{3}(r-r_{+})\left(\frac{d\phi_{1}}{dr}\right)^{2}\mathrm{d}r.
\end{equation}
By Proposition \ref{different f}, we have \begin{equation}
\left\vert\mathcal{L}_{f_{2},M_{c}^{f^{1}}}[\phi_{1}]-\mathcal{L}_{f_{1},M_{c}^{f_{1}}}[\phi_{1}]\right\vert\leq C\delta^{3}\Vert f_{1}-f_{2}\Vert_{X}\int_{r_{+}}^{\infty}r^{3}(r-r_{+})\left(\frac{d\phi_{1}}{dr}\right)^{2}+r^{2}\phi_{1}^{2}\mathrm{d}r.
\end{equation}
Since \begin{align*}
\frac{C}{2}\int_{r_{+}}^{\infty}r^{3}(r-r_{+})\left(\frac{d\phi_{1}}{dr}\right)^{2}\mathrm{d}r&\leq\int_{r_{+}}^{\infty}r^{2}\Omega_{1}^{2}\kappa_{1}^{-1}\left(\frac{d\phi_{1}}{dr}\right)^{2}\\&=
\int_{r_{+}}^{\infty}\left(-\bigl(-\frac{\Lambda}{3}\bigr)\alpha+\frac{q_{0}^{2}A_{1}^{2}}{\Omega_{1}^{2}}\right)\kappa_{1}r^{2}\phi_{1}^{2}\mathrm{d}r\\&\leq \int_{r_{+}}^{\infty} 2Cr^{2}\phi_{1}^{2}\mathrm{d}r.
\end{align*}
Hence we have $$0\leq M_{c}^{f_{1}}-M_{c}^{f_{2}}\leq C\Vert f_{1}-f_{2}\Vert_{X}\delta^{3}.$$
\end{proof}
Now, we are ready to finish the estimate for $\Vert \phi_{1}-\phi_{2}\Vert_{X}$. We can subtract the equations \eqref{fix point phi} for $\phi_{1}$ and $\phi_{2}$ and write it in the error form. Let $v = \phi_{1}-\phi_{2}$, then we have\begin{equation}
\label{differror}
\begin{aligned}
&-\frac{d}{dr}\left(r^{2}\Omega_{1}^{2}\kappa_{1}^{-1}\frac{dv}{dr}\right)+\left(\bigl(-\frac{\Lambda}{3}\bigr)\alpha-\frac{q_{0}^{2}A_{1}^{2}}{\Omega_{1}^{2}}\right)r^{2}v\\=& \frac{d}{dr}\left(r^{2}\left(\Omega_{1}^{2}\kappa_{1}^{-1}-\Omega_{2}^{2}\kappa_{2}^{-1}\right)\frac{d\phi_{2}}{dr}\right)-\left(\bigl(-\frac{\Lambda}{3}\bigr)(\kappa_{1}-\kappa_{2})-q_{0}^{2}\left(\frac{A_{1}^{2}\kappa_{1}}{\Omega_{1}^{2}}-\frac{A_{2}^{2}\kappa_{2}}{\Omega_{2}^{2}}\right)\right)r^{2}\phi_{2}.
\end{aligned}
\end{equation}
We view the left hand side of $\eqref{differror}$ as the main part and right hand side of $\eqref{differror}$ as error terms. 

Now we are ready to estimate $\Vert\phi_{1}-\phi_{2}\Vert_{X}$. 

\begin{proposition}
For $\epsilon$ and $\delta$ small enough, T is a contraction.
\label{contraction}
\end{proposition}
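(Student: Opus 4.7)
The plan is to mirror the variation-of-parameters strategy used in Proposition \ref{estimate for Tf}, but now applied to the difference $v := T[f_1] - T[f_2] = \phi_1 - \phi_2$ which satisfies the equation \eqref{differror}. The left-hand side of \eqref{differror} is essentially the linear stationary Klein--Gordon operator on Reissner--Nordström-AdS with parameters $(M_c^{f_1}, r_+, \Lambda, \alpha, q_0)$, while the right-hand side is regarded as a source term built out of the differences between the two backgrounds. The task is to show that this source is smaller in the $X$-norm than $\Vert f_1 - f_2\Vert_X$ by a factor that can be made arbitrarily small by shrinking $\delta$.

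Concretely, I would write $v(r) = C_D(r) u_D(r) + C_N(r) u_N(r)$ in the linear basis of Theorem \ref{linear theory} at the background parameters $(M_c^{f_1}, r_+, \Lambda, \alpha, q_0)$. The Dirichlet boundary condition on both $\phi_1$ and $\phi_2$ forces $C_N(r) \to 0$ as $r \to \infty$, and the horizon conditions $\phi_i(r_+) = \epsilon$ give $v(r_+) = 0$; only the derivative at the horizon is non-trivial, with $|v'(r_+)| \lesssim \epsilon \, |\kappa_1(r_+) - \kappa_2(r_+)| \leq C\epsilon\delta^3 \Vert f_1 - f_2\Vert_X$ by Lemma \ref{lemma: estimate for the difference of M} combined with Proposition \ref{different M}. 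The coefficients $C_D'$ and $C_N'$ are then computed by the analogues of \eqref{Cd}--\eqref{Cn}, with the source $e[\phi]$ replaced by the right-hand side of \eqref{differror}.

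The heart of the argument is to estimate that source pointwise. Decomposing every ``coefficient difference'' as (same $f$, different $M$) plus (same $M$, different $f$), Proposition \ref{different M} composed with Lemma \ref{lemma: estimate for the difference of M} controls the first piece, and Proposition \ref{different f} controls the second. Together they yield
\begin{equation*}
|\Omega_1^2\kappa_1^{-1} - \Omega_2^2\kappa_2^{-1}| + |\kappa_1 - \kappa_2| + |A_1 - A_2| + |Q_1 - Q_2| \leq C\delta^3 \Vert f_1 - f_2\Vert_X,
\end{equation*}
together with analogous bounds on the relevant derivatives (modulated by the usual $\Omega_{RN}^2$ factors near $r_+$). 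Since $\phi_2 = T[f_2]$ obeys $\Vert \phi_2\Vert_X \leq C\epsilon$ by Proposition \ref{estimate for Tf}, each source term in \eqref{differror} is bounded pointwise by $C\epsilon\delta^3 \Vert f_1 - f_2\Vert_X$ times an appropriate $r$-weight. Carrying these bounds through the integral formulas for $C_D(r), C_N(r)$ using the Wronskian estimates \eqref{infty wronskian}, \eqref{horizon wronskian} — exactly as in Proposition \ref{estimate for Tf}, split near the horizon and near infinity — gives $\Vert v\Vert_X \leq C\epsilon\delta^3 \Vert f_1 - f_2\Vert_X$. Choosing $\epsilon = \delta$ small enough that $C\delta^4 < \frac{1}{2}$, the map $T$ is a contraction on $X_\delta$, and the Banach fixed-point theorem produces the desired non-trivial Dirichlet hair $\phi_\ast \in X_\delta$.

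The main obstacle is making sure the gain from Lemma \ref{lemma: estimate for the difference of M} is genuinely available: without the $\delta^3$ improvement on $|M_c^{f_1} - M_c^{f_2}|$, the difference of the two backgrounds would only be of size $\Vert f_1 - f_2\Vert_X$, which is comparable to the quantity we are trying to control and would prevent contraction. A secondary subtlety is that the basis $\{u_D, u_N\}$ depends on $M_c^{f_1}$, so the constants in \eqref{infty wronskian}--\eqref{horizon wronskian} and in the asymptotics of $u_D, u_N$ nominally depend on $f_1$; however, this dependence is continuous in $M$ on the compact $M$-range supplied by Theorem \ref{thm: fix point stationary D}, and can be absorbed into the universal constant $C$. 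Once contraction is verified, the fixed point $\phi_\ast$ together with the associated $(\kappa_\ast, \Omega^2_\ast, \varpi_\ast, Q_\ast, A_\ast)$ furnishes the hairy solution required for Theorem \ref{main theorem} under Dirichlet boundary conditions.
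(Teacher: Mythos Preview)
Your proposal is correct and follows essentially the same strategy as the paper's proof: write $v=\phi_1-\phi_2$ in a Dirichlet/Neumann basis via variation of parameters, control the source in \eqref{differror} through Propositions \ref{different M} and \ref{different f} combined with Lemma \ref{lemma: estimate for the difference of M}, and then integrate using the Wronskian bounds to obtain $\Vert v\Vert_X\leq C\delta^{k}\Vert f_1-f_2\Vert_X$. The only minor technical difference is that the paper expands $v$ in the basis $\{u_D,u_N\}$ adapted to the \emph{$f_1$-perturbed} operator (the left-hand side of \eqref{differror} as written), whereas you propose the Reissner--Nordstr\"om basis of Theorem \ref{linear theory} at the parameter $M_c^{f_1}$; your choice would require shifting the $O(\delta^2)$ discrepancy between the two operators into the source, generating an extra $C\delta^2\Vert v\Vert_X$ term that is absorbed by the left-hand side for small $\delta$, so the conclusion is unchanged.
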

\begin{proof}
Let$$v = \phi_{1}-\phi_{2},$$ we have\begin{align*}
&-\frac{d}{dr}\left(r^{2}\Omega_{1}^{2}\kappa_{1}^{-1}\frac{dv}{dr}\right)+\left(\bigl(-\frac{\Lambda}{3}\bigr)\alpha-\frac{q_{0}^{2}A_{1}^{2}}{\Omega_{1}^{2}}\right)\kappa_{1}r^{2}v = G_{error}\\&:=\frac{d}{dr}\left(r^{2}\left(\Omega_{1}^{2}\kappa_{1}^{-1}-\Omega_{2}^{2}\kappa_{2}^{-1}\right)\frac{d\phi_{2}}{dr}\right)-\left(\bigl(-\frac{\Lambda}{3}\bigr)(\kappa_{1}-\kappa_{2})-q_{0}^{2}\left(\frac{A_{1}^{2}\kappa_{1}}{\Omega_{1}^{2}}-\frac{A_{2}^{2}\kappa_{2}}{\Omega_{2}^{2}}\right)\right)r^{2}\phi_{2}
\end{align*}
Let \begin{equation}
v = C_{D}(r)u_{D}+C_{N}(r)u_{N},
\end{equation}
where $\{u_{D},u_{N}\}$ are the local basis of the linear equation:\begin{equation*}
-\frac{d}{dr}\left(r^{2}\Omega_{1}^{2}\kappa_{1}^{-1}\frac{dv}{dr}\right)+\left(\bigl(-\frac{\Lambda}{3}\bigr)\alpha-\frac{q_{0}^{2}A_{1}^{2}}{\Omega_{1}^{2}}\right)\kappa_{1}r^{2}v = 0
\end{equation*}
defined in Theorem \ref{linear theory}.
We have
\begin{equation*}
\lim_{r\rightarrow\infty}C_{N}(r) = 0,\quad 0<\lim_{r\rightarrow\infty}C_{D}(r)<\infty.
\end{equation*}
We have\begin{align*}
\vert C_{N}(r)\vert \leq& \int_{r}^{\infty}\frac{\vert u_{D}\vert}{\vert u_{D}^{\prime}u_{N}-u_{D}u_{N}^{\prime}\vert}\frac{\vert G_{error}\vert}{\bar{r}^{2}\Omega_{1}^{2}\kappa_{1}^{-2}}\mathrm{d}\bar{r}\\\leq&
\int_{r}^{\infty}\frac{\vert u_{D}\vert}{\vert u_{D}^{\prime}u_{N}-u_{D}u^{\prime}_{N}\vert}\frac{1}{\Omega_{1}^{2}\kappa_{1}^{-2}}\left\vert\kappa_{1}\left(\kappa_{1}^{-1}\frac{d\kappa_{1}}{dr}\Omega_{1}^{2}\kappa_{1}^{-2}-\kappa_{2}^{-1}\frac{d\kappa_{2}}{dr}\Omega_{2}^{2}\kappa_{2}^{-2}\right)\frac{d\phi_{2}}{dr}\right\vert\\&
+
\frac{\vert u_{D}\vert}{\vert u_{D}^{\prime}u_{N}-u_{D}u_{N}^{\prime}\vert}\frac{1}{\Omega_{1}^{2}\kappa_{1}^{-2}}\left\vert\kappa_{1}\left(\Omega_{1}^{2}\kappa_{1}^{-2}-\Omega_{2}^{2}\kappa_{2}^{-2}\right)\frac{d^{2}\phi_{2}}{dr^{2}}\right\vert\\&+
\frac{2\vert u_{D}\vert}{\vert u_{D}^{\prime}u_{N}-u_{D}u_{N}^{\prime}\vert}\frac{1}{\bar{r}\Omega_{1}^{2}\kappa_{1}^{-2}}\left\vert\kappa_{1}\left(\Omega_{1}^{2}\kappa_{1}^{-2}-\Omega_{2}^{2}\kappa_{2}^{-2}\right)\frac{d\phi_{2}}{dr}\right\vert\\&+
\frac{\vert u_{D}\vert}{\vert u_{D}^{\prime}u_{N}-u_{D}u_{N}^{\prime}\vert}\frac{1}{\Omega_{1}^{2}\kappa_{1}^{-2}}\left\vert\frac{d}{dr}\left(\Omega_{1}^{2}\kappa_{1}^{-2}-\Omega_{2}^{2}\kappa_{2}^{-2}\right)\frac{d\phi_{2}}{dr}\right\vert\\&+\frac{\vert u_{D}\vert}{\vert u_{D}^{\prime}u_{N}-u_{D}u_{N}^{\prime}\vert}\frac{\bigl(-\frac{\Lambda}{3}\bigr)\vert\kappa_{1}-\kappa_{2}\vert \vert\phi_{2}\vert^{2}}{\Omega_{1}^{2}\kappa_{1}^{-2}}\\&+\frac{\vert u_{D}\vert}{\vert u_{D}^{\prime}u_{N}-u_{D}u_{N}^{\prime}\vert}\frac{q_{0}^{2}\left\vert\frac{A_{1}^{2}\kappa_{1}}{\Omega_{1}^{2}}-\frac{A_{2}^{2}\kappa_{2}}{\Omega_{2}^{2}}\right\vert\vert\phi_{2}\vert}{\Omega_{1}^{2}\kappa_{1}^{-2}}\mathrm{d}\bar{r}
\end{align*}

We can estimate all these terms on the right hand side of the above inequality similarly to Section \ref{sec:proof of linear theorem}, using Lemma \ref{lemma: estimate for the difference of M} and Proposition \ref{basic estimate}. We will do the estimate for the second term to illustrate the method.\begin{align*}
&\int_{r}^{\infty}\frac{\vert u_{D}\vert}{\vert u_{D}^{\prime}u_{N}-u_{D}u_{N}^{\prime}\vert}\frac{1}{\Omega_{1}^{2}\kappa_{1}^{-2}}\left\vert\kappa_{1}\left(\Omega_{1}^{2}\kappa_{1}^{-2}-\Omega_{2}^{2}\kappa_{2}^{-2}\right)\frac{d^{2}\phi_{2}}{dr^{2}}\right\vert\mathrm{d}\bar{r}\\
\leq& C\int_{r}^{\infty}\frac{\vert u_{D}\vert}{\vert u_{D}^{\prime}u_{N}-u_{D}u_{N}^{\prime}\vert}\frac{1}{\Omega_{1}^{2}\kappa_{1}^{-2}}\left\vert\kappa_{1}\left(\frac{-2(\varpi_{1}-\varpi_{2})}{r}+\frac{Q_{1}^{2}-Q_{2}^{2}}{r^{2}}\right)\frac{d^{2}\phi_{2}}{dr^{2}}\right\vert\mathrm{d}\bar{r}\\\leq&
C\int_{r_{+}}^{\infty}\frac{\vert u_{D}\vert}{\vert u_{D}^{\prime}u_{N}-u_{D}u_{N}^{\prime}\vert}\frac{1}{\Omega_{1}^{2}\kappa_{1}^{-2}}\left(\frac{2\vert\varpi_{1}-M_{c}^{f_{1}}\vert+2\vert\varpi_{2}-M_{c}^{f_{2}}\vert}{r}\right)\frac{1}{\Omega_{RN,M_{c}^{f_{2}}}^{2}}\left\vert\frac{d^{2}\phi_{2}}{dr^{2}}\right\vert\mathrm{d}r\\&+
C\int_{r_{+}}^{\infty}\frac{\vert u_{D}\vert}{\vert u_{D}^{\prime}u_{N}-u_{D}u_{N}^{\prime}\vert}\frac{1}{\Omega_{1}^{2}\kappa_{1}^{-2}}\left(\frac{\vert Q_{1}^{2}-e_{1}^{2}\vert+\vert Q_{2}^{2}-e_{2}^{2}\vert}{r^{2}}\right)\frac{1}{\Omega_{RN,M_{c}^{f_{2}}}^{2}}\left\vert\frac{d^{2}\phi_{2}}{dr^{2}}\right\vert\mathrm{d}r\\&+C\int_{r_{+}}^{\infty}\frac{\vert u_{D}\vert}{\vert u_{D}^{\prime}u_{N}-u_{D}u_{N}^{\prime}\vert}\frac{1}{\Omega_{1}^{2}\kappa_{1}^{-2}}\left\vert\frac{2M_{c}^{f_{1}}-2M_{c}^{f_{2}}}{r}-\frac{Q_{1}^{2}-Q_{2}^{2}}{r^{2}}\right\vert\frac{1}{\Omega_{RN,M_{c}^{f_{2}}}^{2}}\left\vert\frac{d^{2}\phi_{2}}{dr^{2}}\right\vert\mathrm{d}r
\\\leq&
C\delta^{2}\Vert f_{1}-f_{2}\Vert_{X}r^{-2\Delta}.
\end{align*}
All other terms follow similarly. Thus we have\begin{equation}
\vert C_{N}(r)\vert\leq C\delta^{2}\Vert f_{1}-f_{2}\Vert_{X}r^{-2\Delta}.
\end{equation}
Similarly, we can estimate $C_{N}$:\begin{equation}
\vert C_{D}(r)\vert\leq C\delta^{2}\Vert f_{1}-f_{2}\Vert_{X}.
\end{equation}
Hence we have \begin{equation}
\Vert \phi_{1}-\phi_{2}\Vert_{X}\leq C\delta^{2}\Vert f_{1}-f_{2}\Vert_{X}.
\end{equation}
Then by choosing $\delta$ and $\epsilon$ small, we have $T$ is a contraction.
\end{proof}
\section{Proof of Theorem \ref{main theorem} under Neumann boundary conditions}
The proof of Theorem \ref{main theorem} under Neumann boundary conditions is essentially the same as the proof for Dirichlet boundary condition case, since Theorem \ref{linear hair theorem} also holds for Neumann boundary conditions. To avoid tedious computation, we only state the function space and the twisted equation for $\phi$. 

Define the norm $\Vert\cdot\Vert_{Y}$
\begin{equation*}
\Vert f\Vert_{Y} = \Vert r^{\frac{3}{2}-\Delta}f\Vert_{\infty}+\left\Vert r^{\frac{5}{2}-\Delta}\frac{df}{dr}\right\Vert_{\infty}+\left\Vert r^{\frac{3}{2}-\Delta}\Omega_{RN}^{2}\frac{d^{2}f}{dr^{2}}\right\Vert_{\infty}.
\end{equation*}
Let space $Y$ be the completion of all $f\in C^{2}(r_{+},\infty)$ with $\Vert f\Vert_{Y}<\infty$. We can rewrite the equation for $\phi$ \eqref{fix point phi} as \begin{equation}
\begin{aligned}
r^{\frac{3}{2}-\Delta}\frac{d}{dr}\left(r^{2\Delta-1}\Omega^{2}\kappa^{-1}\frac{dr^{\frac{3}{2}-\Delta}\phi}{dr}\right) =&-r^{\frac{3}{2}-\Delta}\frac{d}{dr}\left(\left(-\frac{3}{2}+\Delta\right)r^{-\frac{1}{2}+\Delta}\Omega^{2}\kappa^{-1}\right)\phi\\&+\left(\bigl(-\frac{\Lambda}{3}\bigr)\alpha-\frac{q_{0}^{2}A^{2}}{\Omega^{2}}\right)r^{2}\kappa\phi,
\end{aligned}
\end{equation}
which is called the twisted form. Then the proof of Theorem \ref{main theorem} follows under Neumann boundary condition follows line by line from Section \ref{proof:Dirichlet}.

\section{Appendix}
\label{appendix}
In the Appendix, we sketch the proof of Theorem \ref{thm: fix point stationary D}, which is very similar to the proof of Theorem \ref{linear hair theorem} established in our companion work \cite{weihaozheng}.

First, we can write down the energy functional of \eqref{fix point phi}:\begin{align*}
\mathcal{L}_{f,M}[g] =&\int_{r_{+}}^{\infty}r^{2}\Omega^{2}\kappa^{-1}\left(\frac{dg}{dr}\right)^{2}+\left(\bigl(-\frac{\Lambda}{3}\bigr)\alpha-\frac{q_{0}^{2}A^{2}}{\Omega^{2}}\right)r^{2}\kappa g^{2}\mathrm{d}r
\\
=&\int_{r_{+}}^{\infty}r^{2}\Omega_{RN,M}^{2}\kappa(r_{+})\left(\frac{dg}{dr}\right)^{2}+\left(\bigl(-\frac{\Lambda}{3}\bigr)\alpha-\frac{q_{0}^{2}A_{RN}^{2}}{\Omega_{RN,M}^{2}}\right)r^{2}\kappa(r_{+})g^{2}+F_{error}(r)\mathrm{d}r\\=&:L_{M}[g]+E[g],\\
F_{error}(r) =& (\kappa-\kappa(r_{+}))\left(r^{2}\Omega^{2}\kappa^{-2}\left(\frac{dg}{dr}\right)^{2}+\left(\bigl(-\frac{\Lambda}{3}\bigr)\alpha-\frac{q_{0}^{2}A^{2}}{\Omega^{2}\kappa^{-2}}\right)r^{2}g^{2}\right)\\&+r^{2}\left(\Omega^{2}\kappa^{-2}-\Omega_{RN,M}^{2}\right)\kappa(r_{+})\left(\frac{dg}{dr}\right)^{2}+q_{0}^{2}\left(\frac{A_{RN}^{2}}{\Omega_{RN,M}^{2}}-\frac{A^{2}}{\Omega^{2}}\right)\kappa(r_{+})r^{2}g^{2}.
\end{align*}
We say the energy functional $\mathcal{L}_{f,M}$ ($L_{M}$ respectively) admits a negative energy bound state if there exists $g\in C_{c}(r_{+},\infty)$ such that $\mathcal{L}_{f,M}[g]<0$ ($L_{M}[g]<0$ respectively). In view of the estimates \eqref{basic estimate for fix point 1}-\eqref{basic estimate for fix point2}, $\mathcal{L}_{f,M}$ can be treated as a $(\epsilon,\delta)$-perturbation of the energy functional $L_{M}$ for the Klein--Gordon equation \begin{equation}
-\frac{d}{dr}\left(r^{2}\Omega_{RN}^{2}\frac{d\phi}{dr}\right)+\left(\bigl(-\frac{\Lambda}{3}\bigr)\alpha-\frac{q_{0}^{2}A_{RN}^{2}}{\Omega_{RN}^{2}}\right)r^{2}\phi = 0.
\label{equ: unperturbed Klein Gordon}
\end{equation}

 We have the following two lemmas regarding the negative energy bound state of $L_{M}$:
\begin{lemma}\textup{\cite{weihaozheng}}
\label{charged negative energy bound state}
For any fixed sub-extremal parameters $(M_{b},r_{+},\Lambda,\alpha)$ satisfying the bound $-\frac{9}{4}<\alpha<0$, there exists a $q_{1}(M_{b},r_{+},\Lambda,\alpha)>0$ such that for any $\vert q_{0}\vert>q_{1}$, there exists a negative energy bound state for the functional $L_{M_{b}}[f]$.
\end{lemma}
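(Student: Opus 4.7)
The plan is to construct an explicit compactly supported test function $g$ whose energy $L_{M_b}[g]$ becomes negative once $|q_0|$ exceeds an explicit threshold. The guiding observation is that in
\[
L_{M_b}[g] = \int_{r_+}^{\infty} r^{2}\Omega_{RN,M_b}^{2}\kappa(r_+)(g')^{2} + \left(\bigl(-\tfrac{\Lambda}{3}\bigr)\alpha - \frac{q_0^{2} A_{RN}^{2}}{\Omega_{RN,M_b}^{2}}\right) r^{2}\kappa(r_+) g^{2}\, dr,
\]
only the Coulomb-type potential $-q_0^{2} A_{RN}^{2}/\Omega_{RN,M_b}^{2}$ depends on the scalar charge, and it does so quadratically. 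The Klein--Gordon mass term is nonpositive since $\alpha < 0$ and $-\Lambda/3 > 0$, and the kinetic term is independent of $q_0$. Because $M_b > M_{e=0}$ (implicit in the setup of Theorem \ref{linear hair theorem}, where $M_c$ is sought strictly between $M_{e=0}$ and $M_b$) forces $e \neq 0$, the Coulomb potential $A_{RN}(r) = e(r_+^{-1} - r^{-1})$ is strictly nonzero on $(r_+,\infty)$, so the charge term can be made to dominate.

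Concretely, I would fix any compact interval $[r_1, r_2] \subset (r_+, \infty)$ and any nonnegative bump function $g \in C_c^{\infty}(r_+,\infty)$ with $g \not\equiv 0$ and $\operatorname{supp}(g) \subset [r_1, r_2]$. Sub-extremality of $(M_b, r_+, \Lambda)$ guarantees $\Omega_{RN,M_b}^{2} > 0$ on $(r_+, \infty)$, so on $[r_1, r_2]$ every coefficient in $L_{M_b}$ is smooth and bounded, and the ratio $A_{RN}^{2}/\Omega_{RN,M_b}^{2}$ is bounded below by a strictly positive constant. Hence there exist $C_1, C_3 > 0$, depending only on $(M_b, r_+, \Lambda)$ and the chosen $g$, such that
\begin{align*}
\int_{r_+}^{\infty} r^{2}\Omega_{RN,M_b}^{2}\kappa(r_+)(g')^{2}\, dr &\leq C_1, \\
\int_{r_+}^{\infty} \frac{A_{RN}^{2}}{\Omega_{RN,M_b}^{2}}\, r^{2}\kappa(r_+) g^{2}\, dr &\geq C_3.
\end{align*}
Discarding the nonpositive mass contribution, these estimates yield $L_{M_b}[g] \leq C_1 - q_0^{2} C_3$, which is strictly negative as soon as $q_0^{2} > C_1/C_3$. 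Setting $q_1 := \sqrt{C_1/C_3}$ delivers the desired threshold and $g$ itself is a negative-energy bound state.

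I do not anticipate a serious obstacle. The only subtle point is the nondegeneracy $A_{RN} \not\equiv 0$, which follows from the implicit assumption $M_b > M_{e=0}$ in the companion Theorem \ref{linear hair theorem}. Note also that the construction is uniform in $\alpha$ within the Breitenlohner--Freedman range, so the threshold $q_1$ produced in fact depends only on $(M_b, r_+, \Lambda)$, which is stronger than the statement claims.
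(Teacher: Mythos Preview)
Your argument is correct. The paper itself does not prove this lemma, citing instead the companion work \cite{weihaozheng}; your variational construction---fixing a compactly supported test function away from the horizon and exploiting the quadratic dependence on $q_0$ of the attractive Coulomb-type potential while the kinetic term stays bounded---is the natural approach and is almost certainly what is done there. Your remark that the application requires $M_b > M_{e=0}$ (so that $e\neq 0$ and $A_{RN}$ is nondegenerate on $(r_+,\infty)$) is a genuine point and correctly handled.
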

\begin{lemma}\textup{\cite{weihaozheng}}
\label{nonempty}
For each fixed parameters$ (r_{+},\Lambda,\alpha,q_{0})$ satisfying\begin{align}
&-\frac{9}{4}<\alpha<\min\{0,-\frac{3}{2}+\frac{q_{0}^{2}}{2\bigl(-\frac{\Lambda}{3}\bigr)}\},\\&
\bigl(-\frac{\Lambda}{3}\bigr)r_{+}^{2}>R_{0},
\end{align}
where $R_{0}$ is the positive root of \eqref{quadratic}. Then there exists a negative energy bound state for the energy functional $L_{M_{0}}[f]$.
\end{lemma}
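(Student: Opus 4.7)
The plan is to exploit the near-horizon $\mathrm{AdS}_{2}\times S^{2}$ geometry of extremal Reissner--Nordström--AdS and to show that the two conditions \eqref{large1} and \eqref{large2} together encode exactly the violation of the $\mathrm{AdS}_{2}$ Breitenlohner--Freedman bound for the effective (charged) scalar. Once this BF bound is violated, the corresponding Hardy-type inequality on the half-line fails, and a compactly supported test function with negative energy for $L_{M_{0}}$ can be constructed in the near-horizon region.

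The first step is the near-horizon expansion. Writing $x:=\bigl(-\frac{\Lambda}{3}\bigr)r_{+}^{2}$, the extremality identities $\Omega_{RN,M_{0}}^{2}(r_{+})=0$ and $\frac{d\Omega_{RN,M_{0}}^{2}}{dr}(r_{+})=0$ give $e^{2}=r_{+}^{2}(1+3x)$, $M_{0}=r_{+}+2\bigl(-\frac{\Lambda}{3}\bigr)r_{+}^{3}$, and
\begin{equation*}
\Omega_{RN,M_{0}}^{2}=\frac{1+6x}{r_{+}^{2}}(r-r_{+})^{2}+O((r-r_{+})^{3}),\qquad \lim_{r\to r_{+}}\frac{A_{RN}^{2}}{\Omega_{RN,M_{0}}^{2}}=\frac{1+3x}{1+6x}.
\end{equation*}
Thus the near-horizon $\mathrm{AdS}_{2}$ radius is $L_{2}^{2}=r_{+}^{2}/(1+6x)$ and the effective scalar mass on $\mathrm{AdS}_{2}$ is
\begin{equation*}
m_{\mathrm{eff}}^{2}:=\bigl(-\frac{\Lambda}{3}\bigr)\alpha-q_{0}^{2}\,\frac{1+3x}{1+6x}.
\end{equation*}
A direct algebraic manipulation shows that the BF-violation inequality $m_{\mathrm{eff}}^{2}L_{2}^{2}<-\frac{1}{4}$ becomes, after multiplying through by the positive quantity $4(1+6x)^{2}$, exactly the statement that the left-hand side of \eqref{quadratic} is strictly negative. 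Under \eqref{large1} the leading coefficient of \eqref{quadratic} is negative, while the left-hand side equals $1$ at $x=0$, so the positive root $R_{0}$ is uniquely determined, and BF violation occurs iff $x>R_{0}$, which is exactly \eqref{large2}.

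With the BF bound violated, I construct the test function in the near-horizon coordinate $\rho:=r-r_{+}$. Fix a small $\eta>0$ and, for $\lambda\in(0,\eta)$, let $\chi_{\lambda}\in C_{c}^{\infty}((\lambda,\sqrt{\lambda\eta}))$ be a smooth cutoff equal to $1$ on $[2\lambda,\sqrt{\lambda\eta}/2]$ with $|\chi_{\lambda}'|\lesssim 1/\lambda$ near the left endpoint and $|\chi_{\lambda}'|\lesssim 1/\sqrt{\lambda\eta}$ near the right endpoint. Set $g_{\lambda}(r):=\chi_{\lambda}(r-r_{+})(r-r_{+})^{-1/2}$. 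A direct computation using $\int_{\lambda}^{\sqrt{\lambda\eta}}\rho^{-1}\,d\rho=\frac{1}{2}\log(\eta/\lambda)$ yields
\begin{equation*}
L_{M_{0}}[g_{\lambda}]=\kappa(r_{+})r_{+}^{2}\Bigl(\frac{1}{4L_{2}^{2}}+m_{\mathrm{eff}}^{2}\Bigr)\cdot\frac{1}{2}\log\frac{\eta}{\lambda}+O(1),
\end{equation*}
with the $O(1)$ remainder uniform in $\lambda$. Since the BF violation rewrites precisely as $\frac{1}{4L_{2}^{2}}+m_{\mathrm{eff}}^{2}<0$, we conclude $L_{M_{0}}[g_{\lambda}]<0$ for $\lambda$ small enough.

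The main technical obstacle is the uniform $O(1)$ control of the remainder. The profile $\rho^{-1/2}$ is borderline non-square-integrable near the horizon, so every correction must be traced carefully: the subleading terms in the expansions of $\Omega_{RN,M_{0}}^{2}$ and of $A_{RN}^{2}/\Omega_{RN,M_{0}}^{2}$ contribute integrable perturbations of order $O(\sqrt{\lambda\eta})$, while the cutoff boundary contributions produced by $\chi_{\lambda}'$ are bounded by $O(1)$ via the scaling of $\int|\chi_{\lambda}'|^{2}\rho\,d\rho$ in a neighborhood of each endpoint. Once this bookkeeping is completed, $L_{M_{0}}[g_{\lambda}]$ is strictly negative for all sufficiently small $\lambda$, producing the desired negative-energy bound state.
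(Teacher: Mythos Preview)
The paper does not prove this lemma; it is quoted from the companion work \cite{weihaozheng} and used as a black box. Your argument is correct and is precisely the near-horizon $\mathrm{AdS}_{2}$ mechanism that the introduction of the paper attributes to the physics literature (see the remarks following \eqref{large charge condition}). The algebra checks out: with $x=\bigl(-\frac{\Lambda}{3}\bigr)r_{+}^{2}$ one has $e^{2}=r_{+}^{2}(1+3x)$, $\frac{1}{2}\frac{d^{2}\Omega_{RN,M_{0}}^{2}}{dr^{2}}(r_{+})=(1+6x)/r_{+}^{2}$, and multiplying $m_{\mathrm{eff}}^{2}L_{2}^{2}<-\tfrac{1}{4}$ through by $4(1+6x)^{2}>0$ reproduces exactly the quadratic \eqref{quadratic}; since under \eqref{large1} that quadratic has negative leading coefficient and constant term $1$, the product of its roots is negative, so $R_{0}$ is the unique positive root and negativity holds iff $x>R_{0}$. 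Your test function $g_{\lambda}=\chi_{\lambda}(\rho)\rho^{-1/2}$ then yields the Hardy identity $\int\rho^{2}(g_{\lambda}')^{2}=\tfrac{1}{4}\int g_{\lambda}^{2}+\int(\chi_{\lambda}')^{2}\rho$, with the last integral $O(1)$ uniformly in $\lambda$ and the subleading Taylor corrections contributing $O(\sqrt{\lambda\eta})$; the logarithmically divergent main term with the strictly negative prefactor $\tfrac{1}{4L_{2}^{2}}+m_{\mathrm{eff}}^{2}$ then dominates. This is almost certainly the argument in \cite{weihaozheng} as well.
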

By the estimates \eqref{basic estimate for fix point 1}-\eqref{basic estimate for fix point2}, we conclude that if $\epsilon$ and $\delta$ are small, $\mathcal{L}_{f,M}[g]$ admits a negative energy bound state for $M = M_{b}$ if one considers the large charge case and $M = M_{0}$ if one considers the fixed charge case.

We consider the function class \begin{equation*}
\mathcal{F}: = \{f\in C_{c}^{\infty}(r_{+},\infty),\ \Vert f\Vert_{L^{2} = 1}\}.
\end{equation*}
Since the error term $E[g]$ is a $(\epsilon,\delta)$ perturbation, we can still rely on the sign property of the potential term to estiblish the following proposition, analogously to Proposition 6.13 in \cite{weihaozheng}.
\begin{proposition}
If $\mathcal{L}_{f,M}$ admits a negative energy bound state, there exist $\lambda_{M}>0$ and a non-zero solution $\phi_{f,M}$ of the equation\begin{equation}
-\frac{d}{dr}\left(r^{2}\Omega^{2}\kappa^{-1}\frac{d\phi}{dr}\right)+\left(\bigl(-\frac{\Lambda}{3}\bigr)\alpha-\frac{q_{0}^{2}A^{2}}{\Omega^{2}}\right)\kappa r^{2}\phi = -\lambda_{M}\phi
\end{equation}
satisfying the Dirichlet boundary condition and $\phi_{M}(r_{+}) = 1$.
\end{proposition}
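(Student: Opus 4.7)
The plan is to extract $\phi_{f,M}$ from a constrained variational problem for $\mathcal{L}_{f,M}$, mirroring the proof of Proposition 6.13 in \cite{weihaozheng}. First I would define the Hilbert space $H$ as the completion of $C_{c}^{\infty}(r_{+},\infty)$ under the norm
\[
\|g\|_{H}^{2} = \int_{r_{+}}^{\infty}\! r^{2}\,\Omega^{2}\kappa^{-1}\!\left(\tfrac{dg}{dr}\right)^{2}\!dr + \int_{r_{+}}^{\infty}\! g^{2}\,dr,
\]
and consider the Rayleigh quotient
\[
-\lambda_{M} := \inf_{g\in H,\ \|g\|_{L^{2}}=1}\mathcal{L}_{f,M}[g].
\]
By the standing hypothesis that $\mathcal{L}_{f,M}$ admits a negative energy bound state, the infimum is strictly negative, so $\lambda_{M}>0$. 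The estimates \eqref{basic estimate for fix point 1}--\eqref{basic estimate for fix point2} together with the Breitenlohner--Freedman bound and a Hardy-type inequality (as in \cite{weihaozheng}, applied to $\mathcal{L}_{f,M}$ regarded as a small perturbation of $L_{M}$) show that the quadratic form $\mathcal{L}_{f,M}[g] + (\lambda_{M}+1)\|g\|_{L^{2}}^{2}$ is coercive on $H$, so the infimum is finite.

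Next I would implement the direct method. Take a minimizing sequence $g_{n}\in H$ with $\|g_{n}\|_{L^{2}}=1$; by coercivity it is bounded in $H$ and admits a weak limit $\phi_{f,M}\in H$. Local compactness of the embedding $H\hookrightarrow L^{2}_{loc}(r_{+},\infty)$, combined with the decay afforded by the potential term at infinity (the $\bigl(-\tfrac{\Lambda}{3}\bigr)\alpha r^{2}\kappa g^{2}$ part controls tails after an integration-by-parts trick because $\alpha<0$, using the Dirichlet asymptotic $g\sim r^{-\frac{3}{2}-\Delta}$), rules out vanishing of mass and upgrades convergence to $L^{2}$. Weak lower semicontinuity of the Dirichlet part and strong convergence of the potential part give $\mathcal{L}_{f,M}[\phi_{f,M}] = -\lambda_{M}$ and $\|\phi_{f,M}\|_{L^{2}}=1$. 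Writing the Euler--Lagrange equation for this constrained minimum yields precisely
\[
-\frac{d}{dr}\!\left(r^{2}\Omega^{2}\kappa^{-1}\frac{d\phi_{f,M}}{dr}\right) + \left(\bigl(-\tfrac{\Lambda}{3}\bigr)\alpha-\frac{q_{0}^{2}A^{2}}{\Omega^{2}}\right)\kappa r^{2}\phi_{f,M} = -\lambda_{M}\phi_{f,M},
\]
initially in a weak sense.

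Finally I would promote weak regularity to classical regularity and normalize at the horizon. Away from $r_{+}$ the ODE has smooth coefficients, so $\phi_{f,M}\in C^{2}$. At $r=r_{+}$ the coefficient $r^{2}\Omega^{2}\kappa^{-1}$ vanishes simply, giving a regular singular point whose indicial equation admits a bounded solution and an unbounded one; the condition $\phi_{f,M}\in H$ rules out the singular branch, so $\phi_{f,M}$ extends continuously to $r_{+}$ with finite value. Since $\phi_{f,M}\not\equiv 0$ and satisfies the ODE, standard ODE uniqueness forces $\phi_{f,M}(r_{+})\neq 0$ (otherwise both $\phi_{f,M}(r_{+})$ and the natural first-order horizon data would vanish, forcing $\phi_{f,M}\equiv 0$); rescaling by a constant gives $\phi_{f,M}(r_{+})=1$. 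The main technical obstacle is proving that the infimum is attained in the non-compact setting, where the concentration-compactness step and the control of tails at infinity have to exploit the Dirichlet asymptotic together with the estimates \eqref{basic estimate for fix point 1}--\eqref{basic estimate for fix point2} to ensure the small-$(\epsilon,\delta)$ perturbation $\mathcal{L}_{f,M}-L_{M}$ does not destroy the lower semicontinuity already available for $L_{M}$ in \cite{weihaozheng}.
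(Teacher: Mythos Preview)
Your proposal is correct and matches the paper's approach: the paper does not give a self-contained proof but simply remarks that, since $E[g]$ is an $(\epsilon,\delta)$-perturbation, one can ``rely on the sign property of the potential term'' and proceed analogously to Proposition~6.13 in \cite{weihaozheng}, which is exactly the constrained variational argument you have outlined. Your sketch is in fact more detailed than what the paper records.
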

Let $$\mathcal{A}_{s} = \{M_{e = 0}<M<s,\ \mathcal{L}_{f,M^{\prime}} \text{ has a negative energy bound state for } \forall M^{\prime}\in(M,s)\},$$ where $s = M_{0}$ for the general fixed charge case and $M = M_{b}$ for the large charge case.

Analogously to the proof of Theorem 4.4 in \cite{weihaozheng}, we can show that $\lambda_{M^{f}_{c}} = 0$ for $M^{f}_{c} = \inf\mathcal{A}_{s}$.

Let $M_{c}$ be the value of $M$ where a regular solution $\phi_{M_{c}}$ to \eqref{equ: unperturbed Klein Gordon} under the Dirichlet boundary condition exists. It remains to estimate $M_{c}^{f}$ in terms of $M_{c}$. We have the following proposition:
\begin{proposition}
\label{pro: estimate for different M}
For $f\in X_{\delta}$, the following estimate holds:\begin{equation}
\vert M_{c}^{f}-M_{c}\vert\leq C\Vert f\Vert_{X}^{2},
\end{equation}
where $C$ is a constant only depending on $(r_{+},\Lambda,\alpha,q_{0})$.
\end{proposition}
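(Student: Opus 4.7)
The plan is to adapt the variational comparison argument used in the proof of Lemma \ref{lemma: estimate for the difference of M} to the case where one of the comparison problems is the unperturbed Reissner--Nordström-AdS Klein--Gordon energy functional $L_M$. Concretely, I would compare $\mathcal{L}_{f,M}$ and $L_M$ using the pointwise size of the background perturbation together with strict monotonicity of $L_M$ in $M$, and convert the $O(\|f\|_X^2)$ perturbation size into an $O(\|f\|_X^2)$ shift of the critical mass.

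From the explicit form of $F_{\text{error}}$ in the appendix together with the bootstrap estimates \eqref{basic estimate for fix point 1}-\eqref{basic estimate for fix point2}, one obtains
\begin{equation*}
\left|\mathcal{L}_{f,M}[g] - L_M[g]\right| \leq C\|f\|_X^2 \cdot H_M[g],\qquad H_M[g] := \int_{r_+}^\infty \bigl(r^2 \Omega_{RN,M}^2 (g')^2 + r^2 g^2\bigr)\,dr.
\end{equation*}
The quadratic rate is essential and reflects the fact that $f$ enters \eqref{fixpointeq1}-\eqref{fixpointeq5} only through squares, so that $\kappa-\kappa(r_+)$, $\Omega^2\kappa^{-2}-\Omega_{RN,M}^2$, and $A^2/\Omega^2 - A_{RN}^2/\Omega_{RN,M}^2$ are each $O(\|f\|_X^2)$ pointwise. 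Next, I would establish strict monotonicity of $L_M$ in $M$: differentiating with $r_+$ fixed (so that $e^2(M) = 2Mr_+ - r_+^2(1 + (-\Lambda/3)r_+^2)$) gives $\partial_M \Omega_{RN,M}^2 = -2(r-r_+)/r^2 < 0$ and $\partial_M(A_{RN,M}^2/\Omega_{RN,M}^2) > 0$ on $(r_+,\infty)$, so that
\begin{equation*}
-\partial_M L_M[\phi] \geq c_0\int_{r_+}^\infty (r-r_+)(\phi')^2\,dr > 0
\end{equation*}
for every non-trivial admissible $\phi$, with $c_0$ uniform.

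Let $\phi_c$ denote the Dirichlet eigenfunction of $L_{M_c}$ (eigenvalue zero) and $\phi_c^f$ the Dirichlet eigenfunction of $\mathcal{L}_{f,M_c^f}$. Assuming first $M_c^f > M_c$, testing $\phi_c$ against the perturbed functional, which is non-negative on admissible functions at $M = M_c^f$ by the variational characterization of $M_c^f$, gives
\begin{equation*}
0 \leq \mathcal{L}_{f,M_c^f}[\phi_c] = \bigl(L_{M_c^f}[\phi_c] - L_{M_c}[\phi_c]\bigr) + \bigl(\mathcal{L}_{f,M_c^f}[\phi_c] - L_{M_c^f}[\phi_c]\bigr) \leq -c_0(M_c^f - M_c) + C\|f\|_X^2,
\end{equation*}
hence $M_c^f - M_c \leq C\|f\|_X^2$. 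The reverse inequality $M_c - M_c^f \leq C\|f\|_X^2$ follows symmetrically by testing $\phi_c^f$ against $L_{M_c}$ and using $L_{M_c}\geq 0$ on admissible functions.

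The principal obstacles are: (i) establishing the \emph{quadratic} rate in the perturbation estimate, which requires careful bookkeeping to confirm that every background discrepancy produced by the bootstrap is genuinely $O(\|f\|_X^2)$ and not merely $O(\|f\|_X)$; and (ii) controlling $H_{M_c^f}[\phi_c^f]$ uniformly in $f \in X_\delta$, so that the $C\|f\|_X^2 H_{M_c^f}[\phi_c^f]$ contribution remains subordinate to the monotonicity gain $c_0(M_c - M_c^f)$. The uniformity in (ii) follows from continuous dependence $\phi_c^f \to \phi_c$ in $H_{M_c}$ as $\|f\|_X \to 0$, which is a consequence of the construction of the threshold eigenfunctions in \cite{weihaozheng} combined with the perturbation estimate above.
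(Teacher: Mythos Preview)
Your approach coincides with the paper's: a variational comparison of $\mathcal{L}_{f,M}$ against $L_M$ combined with strict monotonicity of $L_M$ in $M$. One simplification the paper exploits is to test only the \emph{fixed} unperturbed eigenfunction $\phi_{M_c}$ against $\mathcal{L}_{f,M_c^f}$; since every integral involving $\phi_{M_c}$ is then an explicit constant depending only on $(r_+,\Lambda,\alpha,q_0)$, your obstacle (ii) disappears entirely for the direction $M_c^f > M_c$. For the reverse direction your use of $\phi_c^f$ is the natural completion, and the uniform bound you need in (ii) can be obtained more directly than via continuous dependence: use the identity $\mathcal{L}_{f,M_c^f}[\phi_c^f]=0$ together with the sign of the potential to relate $\int r^2(\phi_c^f)^2$ and $\int r^3(r-r_+)\bigl((\phi_c^f)'\bigr)^2$, exactly as is done in the proof of Lemma~\ref{lemma: estimate for the difference of M}.
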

To prove Proposition \ref{pro: estimate for different M}, we need the following lemma:
\begin{lemma}
If $\mathcal{L}_{f,M}$ is always positive for $g\in C_{c}^{\infty}(r_{+},\infty)$, then $\mathcal{L}_{f,M}$ is always positve for any smooth function $g\in C_{c}^{\infty}[r_{+},\infty)$.
\end{lemma}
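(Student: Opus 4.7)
The plan is a density argument: for any $g \in C_c^\infty[r_+,\infty)$, I will construct a sequence $g_n \in C_c^\infty(r_+,\infty)$ with $\mathcal{L}_{f,M}[g_n] \to \mathcal{L}_{f,M}[g]$, so that the assumed non-negativity of $\mathcal{L}_{f,M}$ on the smaller class passes to the limit. The only issue is that a function in $C_c^\infty[r_+,\infty)$ may be nonzero at $r_+$, and a careless cutoff would spoil the energy.

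\textbf{The cutoff and splitting.} I will set $g_n := \chi_n g$ where $\chi_n$ is a smooth cutoff equal to $0$ on $[r_+, r_+ + a_n]$ and equal to $1$ on $[r_+ + b_n, \infty)$, with $a_n \ll b_n \to 0^+$ to be chosen; clearly $g_n \in C_c^\infty(r_+,\infty)$. Writing $V := \bigl(-\tfrac{\Lambda}{3}\bigr)\alpha - q_0^2 A^2/\Omega^2$, expansion gives
\begin{align*}
\mathcal{L}_{f,M}[g_n] = {}& \int_{r_+}^\infty r^2 \Omega^2 \kappa^{-1} \chi_n^2 (g')^2 \, dr + \int_{r_+}^\infty V r^2 \kappa \chi_n^2 g^2 \, dr \\
& + 2\int_{r_+}^\infty r^2 \Omega^2 \kappa^{-1} \chi_n \chi_n' g g' \, dr + \int_{r_+}^\infty r^2 \Omega^2 \kappa^{-1} (\chi_n')^2 g^2 \, dr.
\end{align*}
The first two terms converge to $\mathcal{L}_{f,M}[g]$ by dominated convergence; here I use that $V$ is bounded near the horizon because $A/\Omega^2$ extends continuously to $r_+$, so $A^2/\Omega^2 \to 0$. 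The cross term is controlled by Cauchy--Schwarz once the last term is controlled.

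\textbf{Main obstacle.} The essential difficulty is forcing $I_n := \int r^2 \Omega^2 \kappa^{-1} (\chi_n')^2 g^2 \, dr \to 0$. Using $\Omega^2 \sim 2K_+(r - r_+)$ near the horizon, a naive linear cutoff with $|\chi_n'| \sim n$ on an interval of length $\sim 1/n$ produces $I_n \gtrsim g(r_+)^2$, which does not vanish; this reflects the fact that $\{r_+\}$ has positive capacity relative to a linear-scale cutoff against the weighted Dirichlet form $\int (r-r_+)(\,\cdot\,')^2\, dr$. The fix is the standard logarithmic cutoff: choose
\[
\chi_n(r) := \Phi\!\left(\frac{\log(r - r_+) - \log a_n}{\log(b_n/a_n)}\right), \qquad a_n = n^{-2}, \quad b_n = n^{-1},
\]
where $\Phi \in C^\infty(\mathbb{R})$ is non-decreasing with $\Phi|_{(-\infty,0]} = 0$ and $\Phi|_{[1,\infty)} = 1$. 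Setting $\rho = r - r_+$ and changing variables to $s = \log(\rho/a_n)/\log(b_n/a_n)$, a direct computation gives
\[
\int_{a_n}^{b_n} \rho \, (\chi_n'(\rho))^2 \, d\rho = \frac{1}{\log(b_n/a_n)} \int_0^1 (\Phi'(s))^2 \, ds = \frac{C_\Phi}{\log n} \longrightarrow 0.
\]
Since $g^2$ is bounded on the shrinking support of $\chi_n'$, this yields $I_n \to 0$, and the cross term is $O(I_n^{1/2})$ by Cauchy--Schwarz. Combining the three contributions, $\mathcal{L}_{f,M}[g_n] \to \mathcal{L}_{f,M}[g]$; since $\mathcal{L}_{f,M}[g_n] \geq 0$ for each $n$ by hypothesis, $\mathcal{L}_{f,M}[g] \geq 0$, which is the lemma.
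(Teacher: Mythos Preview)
Your argument is correct and follows the same underlying idea as the paper's proof: both are density/cutoff arguments exploiting the fact that $\Omega^{2}\sim 2K_{+}(r-r_{+})$ near the horizon, so that the weighted Dirichlet form $\int (r-r_{+})(\cdot')^{2}\,dr$ assigns zero capacity to the endpoint $\{r_{+}\}$. The only difference is in the choice of cutoff: the paper builds a ``cap'' $g_{\eta}(r)=1-(r-r_{+})^{\delta}/\epsilon^{\delta}$ and sends $\delta\to 0$ (so that $\int_{0}^{\epsilon}\rho\,(g_{\eta}')^{2}\,d\rho=\delta/2\to 0$), whereas you use the classical logarithmic cutoff. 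Your version has the advantage of keeping everything in $C^{\infty}$ throughout and making the passage to the limit explicit; the paper's sketch leaves those details implicit.
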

\begin{proof}
We only need to show that for any $\eta>0$ small, there exists $g_{\eta}\in C_{c}[r_{+},\infty)$ with $g(r_{+}) = 1$ such that $\vert L_{M}[g_{\eta}]\vert<\eta$. Let $g_{\eta} =1-\frac{1}{\epsilon^{\delta}}(r-r_{+})^{\delta}$ on $[r_{+},r_{+}+\epsilon]$ and $0$ for $r>r_{+}+\epsilon$. Then by direct computation, for $\epsilon$ and $\delta$ small enough, we have $vert L_{M}[g_{\eta^{\prime}}]\vert<\eta$.
\end{proof}
Now we are ready to prove Proposition \ref{pro: estimate for different M}
\begin{proof}
By the construction of $M_{c}^{f}$ and $\phi_{f,M_{c}^{f}}$, we have\begin{align*}
0&\leq\mathcal{L}_{f,M_{c}^{f}}[\phi_{M_{c}}]-L_{M_{c}}[\phi_{M_{c}}]\\&
=L_{M_{c}^{f}}[\phi_{M_{c}}]+E[\phi_{M_{c}}]-L_{M_{c}}[\phi_{M_{c}}]\\&=
\kappa(r_{+})\int_{r_{+}}^{\infty}r^{2}\left(\Omega_{RN,M_{c}^{f}}^{2}-\Omega_{RN,M_{c}}^{2}\right)\left(\frac{d\phi_{M_{c}}}{dr}\right)^{2}+q_{0}^{2}\left(\frac{A_{RN,M_{c}}^{2}}{\Omega_{RN,M_{c}}^{2}}-\frac{A_{RN,M_{c}^{f}}^{2}}{\Omega_{RN,M_{c}^{f}}^{2}}\right)r^{2}\phi_{M_{c}}^{2}\mathrm{d}r+E[\phi_{M_{c}}].
\end{align*}
If $M_{c}^{f}>M_{c}$, then we have \begin{align*}
2\kappa(r_{+})(M_{c}^{f}-M_{c})\int_{r_{+}}^{\infty}\left(r-r_{+}\right)\left(\frac{d\phi_{M_{c}}}{dr}\right)^{2}\leq C\Vert f\Vert_{X}^{2}\int_{r_{+}}^{\infty}r^{4}\left(\frac{d\phi_{M_{c}}}{dr}\right)^{2}+r^{2}\phi_{M_{c}}^{2}\mathrm{d}r.
\end{align*}
Similarly, if $M_{c}^{f}<M_{c}$, we have\begin{align*}
2\kappa(r_{+})(M_{c}-M_{c}^{f})\int_{r_{+}}^{\infty}\left(r-r_{+}\right)\left(\frac{d\phi_{M_{c}}}{dr}\right)^{2}\geq -C\Vert f\Vert_{X}^{2}\int_{r_{+}}^{\infty}r^{4}\left(\frac{d\phi_{M_{c}}}{dr}\right)^{2}+r^{2}\phi_{M_{c}}^{2}\mathrm{d}r.
\end{align*}
Since $\phi_{M_{c}}$ is a fixed function determined by $(r_{+},\Lambda,\alpha,q_{0})$, we have \begin{equation}
\vert M_{c}^{f}-M_{c}\vert\leq C\Vert f\Vert_{X}^{2}
\end{equation}
\end{proof}

\bibliographystyle{plain}

\end{document}